\setlist{itemsep=0pt}
\theoremstyle{plain}
\newtheorem{theorem}{Theorem}[section]
\newtheorem{lemma}[theorem]{Lemma}
\newtheorem{proposition}[theorem]{Proposition}
\theoremstyle{remark}
\newtheorem{definition}[theorem]{Definition}
\newtheorem*{remark}{Remark}
\DeclareMathOperator*{\argmin}{argmin}
\DeclareMathOperator*{\argmax}{argmax}
\newcommand{\lb}{\left(}
\newcommand{\rb}{\right)}
\newcommand{\trans}{\mathsf{T}}%
\newcommand{\mcN}{\mathcal{N}}
\newcommand{\mcJ}{\mathcal{J}}
\newcommand{\opt}{\mathrm{opt}}
\newcommand{\Ent}{\mathrm{Ent}}
\newcommand{\Hess}{\mathrm{Hess}}
\newcommand{\sub}{\mathrm{sub}}
\renewcommand{\d}{\mathrm{d}}
\newcommand\solidrule[1][1cm]{\rule[0.5ex]{#1}{.4pt}}
\newcommand\dashedrule{\mbox{%
  \solidrule[2mm]\hspace{2mm}\solidrule[2mm]\hspace{2mm}\solidrule[2mm]}}
\newcommand{\R}{\mathbb{R}}
\newcommand{\E}{\mathbb{E}}
\DeclareMathOperator{\trace}{trace}
\providecommand{\keywords}[1]
{
  \small
  \textbf{\textit{Keywords---}} #1
}
\begin{document}

\title{Principal feature detection via $\phi$-Sobolev inequalities}

\author{Matthew T.C. Li\footnote{mtcli@mit.edu}, Youssef Marzouk\footnote{ymarz@mit.edu}, Olivier Zahm\footnote{olivier.zahm@inria.fr}}

\date{
Massachusetts Institute of Technology, Cambridge, MA 02139, USA \\
Univ. Grenoble Alpes, Inria, CNRS, Grenoble INP, LJK, 38000 Grenoble, France}
%

\maketitle

\begin{abstract}
 We investigate the approximation of high-dimensional target measures as low-dimensional updates of a dominating reference measure. This approximation class replaces the associated density
with the composition of: (i) a feature map that identifies the leading principal components or features of the target measure, relative to the reference, and (ii) a low-dimensional profile function.
When the reference measure satisfies a subspace $\phi$-Sobolev inequality, we construct a computationally tractable approximation that yields certifiable error guarantees with respect to the Amari $\alpha$-divergences.
Our construction proceeds in two stages. First, for any feature map and any $\alpha$-divergence, we obtain an analytical expression for the optimal profile function. Second, for linear feature maps, the principal features are obtained from eigenvectors of a matrix involving gradients of the log-density. Neither step requires explicit access to normalizing constants. Notably, by leveraging the $\phi$-Sobolev inequalities, we demonstrate that these features universally certify approximation errors across the range of $\alpha$-divergences $\alpha \in (0,1]$.
We then propose an application to Bayesian inverse problems and provide an analogous construction with approximation guarantees that hold in expectation over the data. We conclude with an extension of the proposed dimension reduction strategy to nonlinear feature maps.
\end{abstract}

\keywords{{Amari $\alpha$-divergences},
{Bayesian inference},
{feature detection},
{gradient-based dimension reduction},
{principal components},
{$\phi$-Sobolev inequalities}.
}

\section{Introduction} \label{sec:intro}

Sampling from complex high-dimensional probability measures is a difficult but ubiquitous problem in statistical computation.
Markov chain Monte Carlo (MCMC) algorithms \cite{Roberts_Rosenthal_2004} are widely used for this purpose, but typically scale poorly with dimension \cite{roberts1998optimal,pillai2012,andrieu2022explicit,mangoubi2018dimensionally}.
This issue similarly affects sequential Monte Carlo algorithms \cite{del2006sequential,Snyder_Bengtsson_Bickel_Anderson_2008, Rebeschini_van_Handel_2015}, or algorithms that seek variational approximations of the target measure based on transport maps and normalizing flows \cite{Marzouk_Moselhy_Parno_Spantini_2016,papamakarios2021normalizing}. Presented with this bottleneck, recent research focuses on identifying latent low-dimensional structure in the target probability measure to improve the computational efficiency of these algorithms.

To this end, we focus on dimension reduction for probability measures $\pi$ on $\R^{d}$ given by
$$
 \d\pi(x) \propto \ell(x) \d\mu(x),
$$
where $\mu$ is a reference probability measure on $\R^d$ and $\ell$ is a smooth positive-valued function. Following \cite{ZCLSM22}, we aim to identify low-dimensional structure expressed as a (possibly nonlinear) \emph{feature map} $\varphi_r: \R^d \to \R^r$, with $r \ll d$, such that the probability measure
\begin{equation}\label{eq:pitilde}
 \d\widetilde\pi_r(x) \propto \widetilde\ell_r \left ( \varphi_r(x)  \right ) \d\mu(x) ,
\end{equation}
approximates $\pi$ for some low-dimensional function $\widetilde\ell_r$ defined on $\R^r$. This approximation class thus replaces the high-dimensional function $\ell$ with the composition $\widetilde \ell_r \circ \varphi_r$; $\widetilde \ell_r$ is called the \emph{profile function}. In many cases (as in \cite{ZCLSM22}), the feature map is linear, i.e., $\varphi_r(x) = U_r^\trans x$ for a matrix $U_r \in \R^{d \times r}$ with orthonormal columns, so that $x\mapsto \widetilde \ell_r \left ( U_r^\trans x  \right )$ is a \emph{ridge approximation} of $\ell$.
This approximation format, with linear features,
has been applied successfully, e.g., to accelerate Stein variational gradient descent algorithms \cite{Chen_Ghattas_2020, Liu_Zhu_Ton_Wynne_Duncan_2022}, to improve sampling for Bayesian inference \cite{Cui_Tong_2021,Constantine_Kent_Bui-Thanh_2016,ehre2023certified, cui2022prior},
and to make rare event simulation tractable \cite{uribe2021cross,tong2022large}, to name just a few applications. (See Section~\ref{sec:litreview} for a discussion of these contributions.)

In this paper, we aim to provide a construction of $\widetilde\pi_r$ such that its approximation error, measured with some statistical divergence, is guaranteed to fall below a certain threshold. Naturally, it is also desirable for such a certificate to be non-vacuous, and to approach zero as $r \to d$.
We also wish to elucidate the interplay between the \emph{choice of divergence measure} and the form of suitable---or optimal---feature maps and profile functions. In general, different divergences will penalize different aspects of the approximation~\cite{Liese_Vajda_2006}, and thus have different uses. For instance, the TV, squared Hellinger, and LeCam divergences have use in binary hypothesis testing \cite{PolyanskiyWuTextbook}, while the KL divergence plays a pivotal role in Shannon's information theory.
To our knowledge, a comprehensive analysis of dimension reduction strategies for a range of divergences has not yet been performed.
To address these questions, we consider the broad class of Amari $\alpha$-divergences~\cite{Amari_2009}, parameterized by~$\alpha\in\R$, which are~$\phi$-divergences of the form
$$
 D_\alpha(\pi||\widetilde\pi_r) =
 \int \phi_\alpha\left( \frac{\d\pi}{\d\widetilde\pi_r} \right) \d\widetilde\pi_r ,
 \qquad \phi_\alpha(t)= \frac{t^\alpha -1}{\alpha(\alpha-1)} - \frac{t-1}{\alpha-1} ,
$$
where $\nicefrac{\d\pi}{\d\widetilde\pi_r}$ denotes the Radon--Nikodym derivative of $\pi $ with respect to $\widetilde\pi_r$.
These $\alpha$-divergences enjoy an interesting interpolative property: $\alpha=2$ yields the $\chi^2$-divergence, to a multiplicative factor of half, $\alpha\rightarrow 1$ corresponds to the KL divergence, $\alpha=\nicefrac{1}{2}$ results in four times the squared Hellinger distance, and $\alpha \rightarrow  0$ recovers the \emph{reverse} KL divergence.

When measuring approximation error with the KL divergence ($\alpha=1$), \cite{ZCLSM22} originally derived an analytical expression for the optimal profile function $\widetilde\ell_r=\ell_r^\opt$, given as the conditional expectation of $\ell$. In a second step, $U_r$ is built by minimizing an error bound obtained by assuming that the reference measure $\mu$ satisfies a \emph{(subspace) logarithmic Sobolev inequality}. Notably, the minimizer of that bound is known analytically, and this yields the matrix $U_r$ containing the $r$ dominant eigenvectors of the so-called \emph{diagnostic matrix}
\begin{equation}\label{eq:defH}
 H = \int \nabla\log\ell(x) \nabla\log\ell(x)^\trans \d\pi(x) .
\end{equation}
Later,~\cite{cui4258736scalable,Cui_Tong_2021} proposed a similar methodology for the squared Hellinger distance, meaning $\alpha=\nicefrac{1}{2}$. In particular, the Hellinger-optimal profile function can also be expressed as a conditional expectation, and $U_r$ is also built by minimizing an analogous error bound, but now obtained by assuming the reference measure satisfies a \emph{(subspace) Poincaré inequality}. Intriguingly, the features  $U_r$ which minimize this  upper bound remain as the $r$ leading eigenvectors of the same diagnostic matrix $H$ previously introduced in \eqref{eq:defH}.
We note that this strategy of furnishing the statistical problem with a computationally favorable majorization underpins much of variational inference; for instance, compare to the use of the evidence-based lower bound for optimizing variational auto-encoders~\cite{Kingma2014a,Rezende_Mohamed_Wierstra_2014}.

Our main contribution is to propose a new analysis which encompasses the two previously considered divergences as special instances (i.e., KL and Hellinger). We consider constructing the approximation $\widetilde\pi_r$ to $\pi$ via the solution to
\begin{equation}
\label{eq:loss}
\min_{ \varphi_r: \R^d \to \R^r} \,
\min_{\widetilde\ell_r:\R^r\rightarrow \R_{\geq0}}
D_\alpha(\pi \,||\, \widetilde\pi_r),
\quad \text{with } \widetilde\pi_r \text{ as in \eqref{eq:pitilde}} .
\end{equation}
In Theorem~\ref{thm:pythagorean} we show that for any $\alpha\in\R$ and for any feature map, the optimal profile function $\widetilde\ell_r=\ell_{\alpha,r}^\text{opt}$ admits a closed form expression as a conditional expectation of a function of $\ell$.
Then, for the case of linear feature maps $\varphi_r(x) = U_r^\trans x$, Theorem~\ref{thm:alphaCDR} derives an upper bound for $D_\alpha(\pi||\pi_{\alpha,r}^\text{opt})$, with $\d\pi_{\alpha,r}^\opt \propto \ell_{\alpha,r}^\opt \d\mu$, for the range $\alpha\in [\nicefrac{1}{2},1]$, and Theorem~\ref{thm:alphaCDRext} to the range $\alpha\in (0,\nicefrac{1}{2}]$. By leveraging  improved mathematical tools, our Theorem~\ref{thm:alphaCDRimp} then further sharpens these bounds.
The main implication of all these theorems is the following: for all~$\alpha$ within this interval, the feature matrix $U_r$ minimizing these bounds remains as the leading eigenvectors of the diagnostic matrix in \eqref{eq:defH}. Thus, denoting the $k$-th largest eigenvalue of $H$ by $\lambda_k$, we can certify the optimality of our low-dimensional approximation following
\begin{equation}\label{eq:errorBound1}
 D_\alpha(\pi\,||\,\pi_{\alpha,r}^\text{opt}) \leq \mcJ_\alpha \left( C_\alpha(\mu) \sum_{k=r+1}^d \lambda_k \right) ,
\end{equation}
where $C_\alpha(\mu)\geq0$ is a constant depending only on $\alpha$ and the reference measure $\mu$, and $t\mapsto \mcJ_\alpha(t)$ is a monotone non-decreasing function.
Crucially, the relation in~\eqref{eq:errorBound1} enables selection of the reduced dimension~$r$ in a principled manner, depending on the balance of computational budget and desired accuracy. Moreover, if the spectrum of the diagnostic matrix~$H$ decays sufficiently quickly, our bound demonstrates that the reduced features $U_r^\trans x$ \emph{universally certify} dimension reduction for all divergences corresponding to $0 < \alpha \leq 1$, albeit with bespoke $\ell_r^\opt$ for each specific divergence.

The bounds in our theorems are derived using $\phi$-Sobolev inequalities~\cite{Chafai_2004,Bolley_Gentil_2010}.
This class of functional inequalities inclusively  interpolates the classical logarithmic Sobolev inequality and the Poincaré inequality, encapsulating the tools used in earlier works.
This new perspective allows us to generalize the intriguing conjugacy relation between divergences and their functional inequalities already discovered for the KL divergence and squared Hellinger divergence: dimension reduction for a divergence of order~$\alpha$ is controlled by the corresponding Sobolev inequality for the divergence of order~$\nicefrac{1}{\alpha}$.
We refer to the constant $C_\alpha(\mu)$ appearing in \eqref{eq:errorBound1} as the \emph{subspace $\alpha$-Sobolev constant}.
Note that except for certain reference measures for which $C_\alpha(\mu)$ is known analytically (for instance $C_\alpha(\mu)\leq 1$ for the standard normal measure $\mu$), one only has access to a bound for this constant provided $\mu$ satisfies certain log-concavity assumptions.

We also specialize our application to Bayesian inverse problems for which the target measure is the \emph{posterior distribution} given as $\d\pi^y(x)\propto \ell^y(x)\d\mu(x)$, where $\ell^y(x)$ is (proportional to) the density of the observations $Y$ conditioned on $x$.
In this setting, we seek a feature map $\varphi_r$ which is optimal for the \emph{average} realization of data \emph{viz.},
\[
\min_{ \varphi_r: \R^d \to \R^r} \,
\,\mathbb{E}_Y\,\left[\, \min_{\widetilde\ell_r^Y :\,\R^r \to \R_{\geq 0}} D_\alpha(\pi^Y \,||\, \widetilde\pi_r^Y) \right] ,
\]
where $\d\widetilde\pi_r^y(x) \propto \widetilde\ell_{r}^y(\varphi_r(x))\d\mu(x)$. While Theorem \ref{thm:pythagorean} provides an analytical solution to the inner optimization problem, we introduce Theorem \ref{corr:datafreebound} to derive an upper bound for the resulting error $\mathbb{E}_Y[D_\alpha(\pi^Y || \widetilde\pi_{\alpha,r}^{Y,\opt})]$ for the case of linear feature maps $\varphi_r(x) = U_r^\trans x$. The key result of this theorem is that we can construct $U_r$ with the dominant eigenvectors of the \emph{data-free} diagnostic matrix
\begin{equation}
\label{eq:datafreemtx}
H_\textrm{DF} = \int \E_{Y|x} \left[  \nabla\log\ell^Y(x) \nabla\log\ell^{Y}(x)^\trans \right]  \d\mu(x) ,
\end{equation}
where $\E_{Y|x}[\cdot]$ corresponds to the expectation of the data $Y$ conditioned on $x$,
and still obtain a certificate of optimality.
We highlight that $\E_{Y|x} [  \nabla\log\ell^Y(x) \nabla\log\ell^{Y}(x)^\trans ] $ is the \emph{Fisher information matrix} evaluated at $x$, which often admits a closed form expression (depending on the statistical model for $Y$).
This extends earlier results of~\cite{Cui_Zahm_2021}, who demonstrated such a bound for the averaged KL divergence, and who also recognized the significant computational advantage that $H_\textrm{DF}$ affords as it can be computed before observing the data; hence the terminology data-free.

Lastly, we conclude with a preliminary foray into \emph{nonlinear} dimension reduction. Specifically, we show that our methodology for constructing linear features naturally extends to constructing general nonlinear features of the form $\varphi_r(x) = U_r^\trans \Phi(x)$, for functions $\Phi$ belonging to the class of $\mathcal{C}^1$-diffeomorphisms. In particular, we propose a construction of $\Phi$ such that the corresponding diagnostic matrix provides low-dimensional features, for which $U_r$ can then be constructed with similar certifiable guarantees as before. This extends a similar approach previously considered in \cite{bigoni2022nonlinear} for the purpose of regression.

The remainder of this paper is organized as follows. We define the $\alpha$-divergences and some technical materials in Section~\ref{sec:optllhd}. We then give the analytical expressions for the optimal profile function when conditioning on either linear or nonlinear features. In Section~\ref{sec:alphaCDR}, we specialize the discussion to linear features. After introducing the $\phi$-Sobolev inequalities in Section~\ref{sec:FunctionalInequality}, we derive our certifiable upper bounds for dimension reduction in Section~\ref{sec:FirstBound} and Section~\ref{sec:monotone_ext}. Section~\ref{sec:generalimprove} proposes an improvement on our bounds in the range $\alpha \in (\nicefrac{1}{2}, 1)$ which utilizes a tighter version of the $\phi$-Sobolev inequalities. Finally in Section~\ref{sec:bayesian}, we apply our techniques to Bayesian inverse problems, and in Section~\ref{sec:nonlinear}, we discuss an extension of the proposed methodology for the detection of nonlinear features. We leave a discussion of the connection of the present work to the broader literature to Section~\ref{sec:litreview}.

\section{Optimal profile function for Amari $\alpha$-divergences}
\label{sec:optllhd}

\subsection{Notation} \label{sec:notation}

\textbf{Amari $\alpha$-divergence~\cite{Amari_2009}.}
For any $\alpha\in\R$, let $\phi_\alpha:[0,\infty)\rightarrow\R$ be the convex function defined by
\begin{equation}
\label{eq:amari}
\phi_\alpha(t)
 =
\begin{cases}
 \dfrac{t^\alpha -1}{\alpha(\alpha-1)} - \dfrac{t-1}{\alpha-1} & \alpha\notin\{0, 1\}, \\
 {-\ln t + t-1} & \alpha = 0,  \\
 {t\ln t -t + 1} & \alpha = 1.
\end{cases}
\end{equation}
The expressions for $\alpha=0$ and $\alpha=1$ can also be recovered by taking the respective pointwise limits.
Given two measures\footnote{{Observe that for normalized measures $\int\d\mu=\int\d\nu=1$, the affine term $(t-1)$ in \eqref{eq:amari} can be omitted from the definition of $\phi_\alpha$.}} $\mu$ and $\nu$ such that the Radon--Nikodym derivative~$\nicefrac{\d\nu}{\d\mu}$ exists, the Amari $\alpha$-divergence is defined by $D_\alpha(\nu||\mu)= \int \phi_\alpha( \frac{\d\nu}{\d\mu} ) \d\mu  $. For $\alpha\notin\{0, 1\}$ we have
\begin{equation}\label{eq:DalphaDef}
 D_\alpha(\nu\,||\,\mu) =  \frac{1}{\alpha(\alpha-1)} \left(\int \left( \frac{\d\nu}{\d\mu} \right)^\alpha \d\mu -1\right),
\end{equation}
which is proportional to the $\chi^2$-divergence when $\alpha=2$ and to the squared Hellinger metric when $\alpha = \nicefrac{1}{2}$. The limiting case $\alpha=1$ recovers the Kullback--Leiber (KL) divergence
\[
D_1(\nu\,||\,\mu) =  \int \ln\lb \frac{\d\nu}{\d\mu} \rb \d\nu,
\]
while $\alpha=0$ recovers the \emph{reverse} KL divergence.
In fact, this latter connection is a special instance of the duality relation $D_\alpha(\nu \,||\,\mu) = D_{1-\alpha}(\mu \,||\, \nu)$, which can be verified by direct computation. Amari~\cite{Amari_2009} also shows a distinguished property of this class of divergences: when suitably extended to the manifold of \emph{unnormalized} positive measures, the $\alpha$-divergences are the only $\phi$-divergences which are Bregman divergences. (For \emph{probability} measures, only the KL divergence
is a Bregman divergence.)
Bregman divergences have the fundamental property that optimal predictors are conditional expectations (see \cite{Banerjee_Guo_Wang_2005}) which we shall exploit later.\\

\noindent\textbf{Conditional measures and conditional expectations.}
Given a probability measure $\mu$ on $\R^d$ we denote by $X\sim\mu$ the random vector distributed according to $\mu$, and by
$$
 \E_{X\sim\mu}[f(X)] = \int f(x) \d\mu(x)
$$
the expectation of an integrable function $f:\R^d\rightarrow\R$.
Later, in Section \ref{sec:generalimprove}, we also use the more compact notation $\mu(f)$ to denote the expectation of $f$ under $\mu$. Given a measurable function $\varphi_r:\R^d\rightarrow\R^r$, we denote by $\mu_r$ the pushforward measure of $\mu$ by $\varphi_r$, which is the probability measure of the random vector $\Theta_r=\varphi_r(X)$.
In particular, we have
\begin{equation}\label{eq:MarginalDensity}
 \E_{\Theta_r\sim\mu_r}[f_r(\Theta_r)]
 = \int f_r(\varphi_r(x)) \d\mu(x)
\end{equation}
for any integrable function $f_r:\R^r\rightarrow\R$.
The disintegration theorem (see Chapter 5 in \cite{kallenberg1997foundations}) states that, for $\mu_r$-almost every $\theta_r\in\R^r$, there exists a unique measure $\mu_{\perp|r}(\cdot|\theta_r)$ on the pre-image $\varphi_r^{-1}(\theta_r)=\{x\in\R^d:\varphi_r(x)=\theta_r\}$ such that
\begin{equation}\label{eq:condExp}
 \int f(x)\d\mu(x) = \int \left(\int f(x) \d\mu_{\perp|r}(x|\theta_r) \right) \d\mu_r(\theta_r),
\end{equation}
for any integrable function $f:\R^d\rightarrow\R$.
In other words, $\mu_{\perp|r}(\cdot|\theta_r)$ is the measure of $X$ conditioned on the event $\varphi_r(X)=\theta_r$. The conditional expectation of a function $f:\R^d\rightarrow\R$ given $\varphi_r(X)=\theta_r$ is defined as
$$
 \E_{X\sim\mu}[f(X) \mid \varphi_r(X)=\theta_r] = \int f(x) \d\mu_{\perp|r}(x | \theta_r) ,
$$
which is now a {function of $\theta_r$ that is measurable with respect to the $\sigma$-algebra generated by $\Theta_r$}.
When there is no ambiguity, we write $\mu_{\perp|r}=\mu_{\perp|r}(\cdot|\theta_r)$ and \eqref{eq:condExp} simplifies as
\begin{equation}\label{eq:condExp2}
 \int f\d\mu = \int f \d\mu_{\perp|r}\d\mu_r.
\end{equation}

\begin{remark}[Linear feature map]
A particular case of interest to us is the linear feature map
\begin{equation}\label{eq:linearGr}
 \varphi_r(x) = U_r^\trans x,
\end{equation}
where $U_r\in\R^{d\times r}$ is a matrix with orthogonal columns. Denoting by $U_\perp\in\R^{d\times (d-r)}$ any orthogonal completion to $U_r$ such that $U=[U_r,\,U_\perp]\in\R^{d\times d}$ forms a unitary matrix, the pre-image $\varphi_r^{-1}(\theta_r)$ is the affine subspace $\varphi_r^{-1}(\theta_r) = \{U_r \theta_r + U_\perp \theta_\perp : \theta_\perp\in\R^{d-r}\}$.
In addition, the pushforward measure $\mu_r$ corresponds to the \emph{marginal} distribution of the first $r$ components of $\Theta = U^\trans X$ and, assuming a Lebesgue density for our reference measure $\d\mu(x) \propto \rho(x) \d x$, then $\mu_r$ also admits a Lebesgue density given by
$$
\d \mu_r(\theta_r) = \left(\int \rho( U_r \theta_r + U_\perp \theta_\perp )\d \theta_\perp \right)\d\theta_r = \rho_r(\theta_r)\d\theta_r.
$$
Denoting by $\d x_\perp$ the Lebesgue measure of the subspace $\varphi_r^{-1}(\theta_r)$, the conditional measure
$\mu_{\perp|r}(\cdot|\theta_r)$ in \eqref{eq:condExp}
then has the Lebesgue density
\begin{equation}\label{eq:conditionalDensity}
\d \mu_{\perp|r}(x|\theta_r) = \frac{\rho( U_r \theta_r  + U_\perp U_\perp^\trans x )}{\rho_r(\theta_r)} \d x_\perp ,
\end{equation}
and the conditional expectation of a function $f$ can be written as
$$
 \E_{X\sim\mu}[f(X) \mid \varphi_r(X)=\theta_r] = \int f( U_r \theta_r  + U_\perp \theta_\perp ) \frac{\rho( U_r \theta_r  + U_\perp \theta_\perp )}{\rho_r(\theta_r)} \d \theta_\perp.
$$
\end{remark}

\subsection{Optimal profile function}\label{sec:Optimalprofilefunction}

In this section, we consider approximating the target measure~$\pi$ with measures of the form
\begin{equation}
\label{eq:nonlinpitilde}
\d\widetilde\pi_r \propto \widetilde\ell_r \left ( \varphi_r(x) \right ) \d\mu(x),
\end{equation}
where $\varphi_r : \R^d \to \R^r$ denotes a generic, possibly nonlinear, $r$-dimensional feature map.
Restricting to linear feature maps $\varphi_r(x)=U_r^\trans x$ recovers \eqref{eq:pitilde}.
The next theorem demonstrates that properties of the $\alpha$-divergences let us write an analytical expression for the profile function $\ell_{\alpha,r}^\opt$ that minimizes $\widetilde\ell_r \mapsto D_\alpha(\pi || \widetilde\pi_r)$, for any $\alpha \in \mathbb{R}$ and profile function $\varphi_r$.

\begin{theorem}[Pythagorean-like identity]
\label{thm:pythagorean}
Let $\pi$ and $\mu$ be probability measures such that $\d\pi(x) \propto \ell(x)\d\mu(x)$ for some integrable function $\ell:\mathbb{R}^d\rightarrow\mathbb{R}_{\geq0}$.
Given a measurable function~$\varphi_r : \mathbb{R}^d \rightarrow \R^r$ {and $\alpha\in\R$},  consider the probability measure
\begin{equation}
\label{eq:piopt}
\d\pi_{\alpha,r}^\opt(x) \propto \ell_{\alpha,r}^\opt ( \varphi_r(x) )\d\mu(x),
\end{equation}
where
\begin{equation}
 \label{eq:opt_llhd}
 \ell_{\alpha,r}^\opt(\theta_r) =
 \left\{\begin{array}{ll}
      \mathbb{E}_{X\sim\mu}[ \ell(X)^\alpha \mid \varphi_r(X)=\theta_r ]^{\frac{1}{\alpha}}
      & \alpha\neq0,  \\[5pt]
      \exp( \mathbb{E}_{X\sim\mu}[ \ln\ell(X) \mid \varphi_r(X)=\theta_r ] )
      & \alpha=0.
 \end{array}
 \right.
\end{equation}
Then, for any integrable function $\widetilde\ell_r:\mathbb{R}^r\rightarrow\mathbb{R}_{\geq0}$, the probability measure~$\d\widetilde\pi(x)\propto \widetilde\ell_r( \varphi_r(x))\d\mu(x)$ satisfies the Pythagorean-like identity
\begin{equation}
\label{eq:pythagorean}
D_\alpha( \pi \,||\, \widetilde\pi_r ) =
D_\alpha( \pi \,||\, \pi_{\alpha,r}^\opt\,) + \lb \dfrac{Z_{\alpha,r}}{Z_\pi}\rb^\alpha \, D_\alpha( \pi_{\alpha,r}^\opt \,||\, \widetilde\pi_r) ,
\end{equation}
where $Z_\pi=\int \ell(x)\d\mu(x)$ and $Z_{\alpha,r}= \int \ell_{\alpha,r}^\opt( \varphi_r(x))\d\mu(x)$ are the normalizing constants of $\pi$ and $\pi_{\alpha,r}^\opt$, respectively.
\end{theorem}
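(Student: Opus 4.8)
The plan is to establish \eqref{eq:pythagorean} by a direct computation, first for the generic range $\alpha\notin\{0,1\}$ using \eqref{eq:DalphaDef}, and then recovering $\alpha\in\{0,1\}$ by passing to the pointwise limit. Write $\d\pi = Z_\pi^{-1}\ell\,\d\mu$, $\d\widetilde\pi_r = \widetilde Z_r^{-1}\widetilde\ell_r(\varphi_r)\,\d\mu$ with $\widetilde Z_r=\int\widetilde\ell_r(\varphi_r)\,\d\mu$, and $\d\pi_{\alpha,r}^{\opt}=Z_{\alpha,r}^{-1}\ell_{\alpha,r}^{\opt}(\varphi_r)\,\d\mu$. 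The one structural fact doing all the work is a \emph{projection identity}: for any admissible $\widetilde\ell_r$, disintegrating $\mu$ along $\varphi_r$ as in \eqref{eq:condExp} and using that $x\mapsto\widetilde\ell_r(\varphi_r(x))$ is constant on the fibers $\varphi_r^{-1}(\theta_r)$,
\begin{align*}
\int \ell^\alpha\,\widetilde\ell_r(\varphi_r)^{1-\alpha}\,\d\mu
&= \int \E_{X\sim\mu}\!\left[\ell(X)^\alpha \mid \varphi_r(X)=\theta_r\right]\,\widetilde\ell_r(\theta_r)^{1-\alpha}\,\d\mu_r(\theta_r) \\
&= \int \ell_{\alpha,r}^{\opt}(\theta_r)^{\alpha}\,\widetilde\ell_r(\theta_r)^{1-\alpha}\,\d\mu_r(\theta_r),
\end{align*}
where the last step is exactly the definition \eqref{eq:opt_llhd}. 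Call this common value $I(\widetilde\ell_r)$. Two instances matter: taking $\widetilde\ell_r=\ell_{\alpha,r}^{\opt}$ gives $I(\ell_{\alpha,r}^{\opt})=\int\ell_{\alpha,r}^{\opt}\,\d\mu_r = Z_{\alpha,r}$ by \eqref{eq:MarginalDensity}; and replacing $\ell$ by $\ell_{\alpha,r}^{\opt}(\varphi_r)$ in the left-hand side leaves $I(\widetilde\ell_r)$ unchanged, since the conditional expectation of $(\ell_{\alpha,r}^{\opt}\circ\varphi_r)^\alpha$ given $\varphi_r$ equals $(\ell_{\alpha,r}^{\opt})^\alpha$.

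With this in hand I would compute the three $\alpha$-divergences from \eqref{eq:DalphaDef}. Since $\nicefrac{\d\pi}{\d\widetilde\pi_r} = (\widetilde Z_r/Z_\pi)\,\ell/\widetilde\ell_r(\varphi_r)$, a one-line substitution gives
\[
D_\alpha(\pi\,||\,\widetilde\pi_r) = \frac{1}{\alpha(\alpha-1)}\left(\frac{\widetilde Z_r^{\,\alpha-1}}{Z_\pi^{\alpha}}\,I(\widetilde\ell_r) - 1\right);
\]
specializing $\widetilde\ell_r=\ell_{\alpha,r}^{\opt}$ (so $\widetilde Z_r=Z_{\alpha,r}$ and $I=Z_{\alpha,r}$) yields $D_\alpha(\pi\,||\,\pi_{\alpha,r}^{\opt}) = \frac{1}{\alpha(\alpha-1)}\big((Z_{\alpha,r}/Z_\pi)^{\alpha}-1\big)$; and running the same computation with $\pi$ replaced by $\pi_{\alpha,r}^{\opt}$ (using the second remark above) gives $D_\alpha(\pi_{\alpha,r}^{\opt}\,||\,\widetilde\pi_r) = \frac{1}{\alpha(\alpha-1)}\big(\widetilde Z_r^{\,\alpha-1}Z_{\alpha,r}^{-\alpha}I(\widetilde\ell_r)-1\big)$. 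Substituting these into the right-hand side of \eqref{eq:pythagorean}, the prefactor $(Z_{\alpha,r}/Z_\pi)^{\alpha}$ converts $Z_{\alpha,r}^{-\alpha}$ into $Z_\pi^{-\alpha}$ in the last term, and the $-1$ it carries (scaled to $-(Z_{\alpha,r}/Z_\pi)^{\alpha}$) cancels the $+(Z_{\alpha,r}/Z_\pi)^{\alpha}$ coming from $D_\alpha(\pi\,||\,\pi_{\alpha,r}^{\opt})$, leaving exactly $\frac{1}{\alpha(\alpha-1)}\big(\widetilde Z_r^{\,\alpha-1}Z_\pi^{-\alpha}I(\widetilde\ell_r)-1\big) = D_\alpha(\pi\,||\,\widetilde\pi_r)$.

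For $\alpha=1$ and $\alpha=0$ I would invoke continuity: $\phi_\alpha$ converges pointwise to its stated limits, $\ell_{\alpha,r}^{\opt}(\theta_r)$ converges to $\E[\ell\mid\varphi_r=\theta_r]$ as $\alpha\to1$ and to $\exp\E[\ln\ell\mid\varphi_r=\theta_r]$ as $\alpha\to0$, and hence all three divergences and the prefactor in \eqref{eq:pythagorean} converge, so the identity persists in the limit; when $\alpha=1$ one additionally notes $Z_{1,r}=Z_\pi$ by the tower property, so \eqref{eq:pythagorean} collapses to the exact additive Pythagorean identity of \cite{ZCLSM22}. Alternatively, these two cases can be redone directly: expand $\ln\nicefrac{\d\pi}{\d\widetilde\pi_r}$, integrate against $\pi$ (resp.\ $\pi_{\alpha,r}^{\opt}$), and use linearity of the conditional expectation in place of the bookkeeping quantity $I$.

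I expect the only genuine subtlety to be measure-theoretic hygiene rather than algebra: one must assume enough integrability (e.g.\ $\ell^\alpha\in L^1(\mu)$ and $\ell_{\alpha,r}^{\opt}\in L^1(\mu_r)$, or $\ln\ell\in L^1(\mu)$ when $\alpha=0$) for the divergences, conditional expectations, and normalizing constants to be finite, to justify disintegrating the product $\ell^\alpha\widetilde\ell_r(\varphi_r)^{1-\alpha}$ via \eqref{eq:condExp}, and — in the limiting cases — to exchange limit and integral. Once the projection identity is established, the remaining cancellation is short and essentially forced. Finally, since $D_\alpha\ge 0$ and $(Z_{\alpha,r}/Z_\pi)^{\alpha}>0$, the identity \eqref{eq:pythagorean} immediately gives $D_\alpha(\pi\,||\,\widetilde\pi_r)\ge D_\alpha(\pi\,||\,\pi_{\alpha,r}^{\opt})$ with equality iff $\widetilde\pi_r=\pi_{\alpha,r}^{\opt}$, so $\ell_{\alpha,r}^{\opt}$ is the optimal profile function — the conceptual content being that the $\alpha$-divergence is a Bregman divergence on the cone of unnormalized positive measures, for which conditional expectations are the Bregman projections onto $\{\,\widetilde\ell_r\circ\varphi_r\,\}$.
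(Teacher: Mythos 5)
Your argument for $\alpha\notin\{0,1\}$ matches the paper's proof in Appendix~A.1 essentially line for line: the ``projection identity'' you isolate as $I(\widetilde\ell_r)$ is exactly the chain of disintegration steps the authors write out inline, and the substitution-and-cancellation bookkeeping that follows is identical, just packaged more tidily.

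The one genuine departure is your treatment of the boundary cases $\alpha\in\{0,1\}$. The paper handles these by direct computation (which you also give as your stated fallback); you propose instead to pass to the pointwise limit in $\alpha$. This is defensible, but it is less routine than your prose suggests. The comparison measure $\pi_{\alpha,r}^{\opt}$ itself depends on $\alpha$, so a statement like $D_\alpha(\pi\,||\,\pi_{\alpha,r}^{\opt})\to D_1(\pi\,||\,\pi_{1,r}^{\opt})$ is a joint limit: one needs both continuity of $\alpha\mapsto\ell_{\alpha,r}^{\opt}$ (a dominated-convergence argument inside the conditional expectation) and the observation that
\[
\frac{1}{\alpha(\alpha-1)}\left(\left(\frac{Z_{\alpha,r}}{Z_\pi}\right)^{\alpha}-1\right)
\]
is a $0/0$ indeterminate form as $\alpha\to1$ (since $Z_{1,r}=Z_\pi$), so the limit cannot simply be read off factor by factor and instead requires differentiating $\alpha\mapsto Z_{\alpha,r}$. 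Justifying all of that would cost more effort than the one-paragraph direct expansion in the paper, which is presumably why the authors do not take the limiting route. Since you explicitly offer the direct computation as an alternative, the proposal is sound; I would just note that the limiting argument, if you insist on it, is the longer road and needs those integrability and differentiability hypotheses stated and checked.
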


Theorem~\ref{thm:pythagorean}, {whose proof is in Appendix \ref{sec:pythagoreanproof}}, establishes that the $\alpha$-divergence between~$\pi$ and any approximation $\widetilde\pi_r$ in the form of~\eqref{eq:nonlinpitilde} must be a weighted sum of two terms: the divergence between the true distribution and~$\pi_{\alpha,r}^\opt$, and the divergence between~$\pi_{\alpha,r}^\opt$ and the proposed approximation. Since all quantities are positive, we deduce that the function $\ell_{\alpha,r}^\opt$ is the minimizer of $\widetilde\ell_r\mapsto D_\alpha( \pi || \widetilde\pi_r )$ for any feature map $\varphi_r$. The uniqueness of this optimizer also follows Theorem~\ref{thm:pythagorean}: for any $\widetilde\pi_r$ such that $D_\alpha( \pi || \widetilde\pi_r ) = D_\alpha( \pi || \pi_{\alpha,r}^\opt)$, \eqref{eq:pythagorean} implies $D_\alpha( \pi_{\alpha,r}^\opt \,||\, \widetilde\pi_r)=0$ so that $\widetilde\pi_r = \pi_{\alpha,r}^\opt$ is unique.

As mentioned in the introduction, we are particularly interested in the range $0<\alpha <1$, for which we have
\begin{equation}
 \label{eq:optdivergence}
D_\alpha \lb \pi \,||\, \pi_{\alpha,r}^\opt\, \rb
 = \dfrac{1}{\alpha(\alpha-1)}\lb \lb \dfrac{Z_{\alpha,r}}{Z_\pi}\rb^\alpha - 1 \rb,
 \quad Z_{\alpha,r} = \int \lb \int \ell^\alpha\d\mu_{\perp|r} \rb^{\frac{1}{\alpha}} \d\mu_r.
\end{equation}
This demonstrates that the normalizing constant~$Z_{\alpha,r}=Z_{\alpha,r}(\varphi_r)$ essentially characterizes the divergence loss attained by~$\pi_{\alpha,r}^\opt$. {As $0 \leq Z_{\alpha,r} \leq Z_\pi$ by Jensen's inequality, this result illustrates that the optimal feature map~$\varphi_r$ induces a normalizing constant that is as close to $Z_\pi$ as possible.}

We conclude this section with an interesting property of the optimal profile for the (forward) KL divergence ($\alpha = 1$). In this case, we can write $\frac{1}{Z_\pi}\ell_{1,r}^\opt\d\mu_r = \d\pi_r$, where $\pi_r$ is the pushforward measure of $\pi$ by $\varphi_r$.
This decomposes the optimal approximating measure $\pi_{1,r}^\opt$ as
\begin{equation}\label{eq:KLoptimalDecomp}
 \d\pi_{1,r}^\opt(x) = \d\pi_r(\theta_r) \d\mu_{\perp|r}(x|\theta_r).
\end{equation}
Compared to the factorization
$
 \d\pi(x) = \d\pi_r(\theta_r) \d\pi_{\perp|r}(x|\theta_r),
$
we see that the optimal KL approximation $\pi_{1,r}^\opt$ essentially replaces the conditional target measure $\d\pi_{\perp|r}(\cdot|\theta_r)$ with the conditional reference measure $\d\mu_{\perp|r}(\cdot|\theta_r)$.
This nice interpretation is no longer possible for~$\pi_{\alpha,r}^\opt$ with $\alpha\neq1$.
The next proposition shows that $\pi_{1,r}^\opt$ is actually \emph{quasi-optimal} with respect to the $\alpha$-divergence for ${0<\alpha<1}$, in that using the easily interpretable $\pi_{1,r}^\opt$ instead of $\pi_{\alpha,r}^\opt$ increases the approximation error by at most a factor of $1/\alpha$.
{The proof is left to Appendix \ref{proof:KLisQuasiOptimal}.}

\begin{proposition}\label{prop:KLisQuasiOptimal}
 For any $0<\alpha\leq1$ and for any feature map $\varphi_r$, the probability measures $\pi_{\alpha,r}^\opt$ and $\pi_{1,r}^\opt$ defined in \eqref{eq:piopt} and \eqref{eq:opt_llhd} satisfy
 \begin{equation}\label{eq:KLisQuasiOptimal}
  D_\alpha \lb \pi || \pi_{\alpha,r}^\opt \rb
  \leq
  D_\alpha \lb \pi || \pi_{1,r}^\opt \rb
  \leq
  \frac{1}{\alpha} D_\alpha \lb \pi || \pi_{\alpha,r}^\opt \rb .
 \end{equation}
\end{proposition}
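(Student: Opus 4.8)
The plan is to reduce the two-sided estimate to a pointwise Jensen-type comparison of the two optimal profiles together with one elementary scalar inequality. The left inequality in \eqref{eq:KLisQuasiOptimal} is immediate: it merely expresses that $\ell_{\alpha,r}^\opt$ minimizes $\widetilde\ell_r\mapsto D_\alpha(\pi\,||\,\widetilde\pi_r)$, which is exactly what Theorem~\ref{thm:pythagorean} gives when applied with the nonnegative integrable profile $\widetilde\ell_r=\ell_{1,r}^\opt$ (the cross term in \eqref{eq:pythagorean} being a divergence times a positive constant, hence $\geq 0$). Since the right inequality holds trivially with equality at $\alpha=1$, where $\pi_{\alpha,r}^\opt=\pi_{1,r}^\opt$, it remains only to treat $0<\alpha<1$.

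First I would put $D_\alpha(\pi\,||\,\pi_{1,r}^\opt)$ in closed form. From \eqref{eq:opt_llhd} and the disintegration identity \eqref{eq:condExp2}, $Z_{1,r}=\int\ell_{1,r}^\opt\,\d\mu_r=\int\ell\,\d\mu=Z_\pi$, so the normalizing constants cancel in $\d\pi/\d\pi_{1,r}^\opt=\ell/\ell_{1,r}^\opt(\varphi_r)$. Plugging this into \eqref{eq:DalphaDef}, disintegrating, and using $\int\ell^\alpha\,\d\mu_{\perp|r}=\ell_{\alpha,r}^\opt(\theta_r)^\alpha$, one obtains
\[
\alpha(\alpha-1)\,D_\alpha(\pi\,||\,\pi_{1,r}^\opt)=\frac{1}{Z_\pi}\int \ell_{1,r}^\opt(\theta_r)^{1-\alpha}\,\ell_{\alpha,r}^\opt(\theta_r)^{\alpha}\,\d\mu_r(\theta_r)-1 ,
\]
while \eqref{eq:optdivergence} already gives $\alpha(\alpha-1)\,D_\alpha(\pi\,||\,\pi_{\alpha,r}^\opt)=(Z_{\alpha,r}/Z_\pi)^\alpha-1$.

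Next I would bound the integral from below. Concavity of $t\mapsto t^\alpha$ and Jensen's inequality on each fiber give $\ell_{\alpha,r}^\opt=\bigl(\int\ell^\alpha\,\d\mu_{\perp|r}\bigr)^{1/\alpha}\le\int\ell\,\d\mu_{\perp|r}=\ell_{1,r}^\opt$ pointwise, hence $\ell_{1,r}^\opt(\theta_r)^{1-\alpha}\ell_{\alpha,r}^\opt(\theta_r)^{\alpha}\ge\ell_{\alpha,r}^\opt(\theta_r)$; integrating against $\mu_r$ and dividing by $Z_\pi$ yields $\frac{1}{Z_\pi}\int\ell_{1,r}^\opt(\theta_r)^{1-\alpha}\ell_{\alpha,r}^\opt(\theta_r)^{\alpha}\,\d\mu_r\ge Z_{\alpha,r}/Z_\pi=:m\in(0,1]$ (the upper bound $m\le1$ again from $\ell_{\alpha,r}^\opt\le\ell_{1,r}^\opt$). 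Because $\alpha(\alpha-1)<0$, the target bound $D_\alpha(\pi\,||\,\pi_{1,r}^\opt)\le\frac1\alpha D_\alpha(\pi\,||\,\pi_{\alpha,r}^\opt)$ becomes, after multiplying by $\alpha(\alpha-1)$ and reversing the inequality, $\frac{1}{Z_\pi}\int\ell_{1,r}^\opt(\theta_r)^{1-\alpha}\ell_{\alpha,r}^\opt(\theta_r)^{\alpha}\,\d\mu_r-1\ge\frac1\alpha(m^\alpha-1)$, for which by the previous line it suffices to check the scalar inequality $m-1\ge\frac1\alpha(m^\alpha-1)$, i.e. $h(m):=\alpha m-m^\alpha+1-\alpha\ge0$ on $(0,1]$. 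This closes the argument: $h(1)=0$ and $h'(m)=\alpha(1-m^{\alpha-1})\le0$ for $m\in(0,1]$, so $h$ is nonincreasing and therefore nonnegative there.

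I do not expect a genuine obstacle; the delicate points are purely bookkeeping---keeping track of the normalizing constants $Z_\pi,Z_{1,r},Z_{\alpha,r}$ and being careful that $\alpha(\alpha-1)<0$ reverses divergence inequalities when turning them into the stated integral inequalities. One should also note, since $\ell$ is merely assumed nonnegative, that on the set where $\ell_{1,r}^\opt(\theta_r)=0$ one also has $\ell_{\alpha,r}^\opt(\theta_r)=0$, so the integrand $\ell_{1,r}^\opt(\theta_r)^{1-\alpha}\ell_{\alpha,r}^\opt(\theta_r)^{\alpha}$ vanishes there and no division by zero occurs, and that $Z_{\alpha,r}>0$ because $Z_\pi>0$, so $m>0$ and every divergence appearing above is finite.
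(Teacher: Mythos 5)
Your argument is correct and follows essentially the same route as the paper's proof: both reduce $D_\alpha(\pi||\pi_{1,r}^\opt)$ to $\frac{1}{Z_\pi}\int(\ell_{\alpha,r}^\opt)^\alpha(\ell_{1,r}^\opt)^{1-\alpha}\d\mu_r$, lower-bound the integrand by $\ell_{\alpha,r}^\opt$ using the Jensen comparison $\ell_{\alpha,r}^\opt\le\ell_{1,r}^\opt$, and finish with an elementary scalar inequality relating $m=Z_{\alpha,r}/Z_\pi$ and $m^\alpha$. The only cosmetic difference is that you verify $h(m)=\alpha m-m^\alpha+1-\alpha\ge0$ via a monotonicity argument, whereas the paper invokes Bernoulli's inequality $(1+t)_+^{1/\alpha}\ge1+t/\alpha$; these are the same inequality in different guises.
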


\section{Certifiable bound for linear feature maps}
\label{sec:alphaCDR}

In this section we consider only linear feature maps $\varphi_r(x)=U_r^\trans x$, where $U_r\in\R^{d\times r}$ is a matrix with orthonormal columns.
Using the appropriate functional inequalities, which we introduce next, we derive a bound for $D_\alpha(\pi || \pi_{\alpha,r}^\opt)$ involving a simple function of the subspace $U_r$.
{Notably, minimizing our bound is \emph{computationally tractable}. While this construction of $U_r$ may not be the globally optimal solution minimizing the $\alpha$-divergence, its main benefit is that the corresponding approximation error $D_\alpha(\pi || \pi_{\alpha,r}^\opt)$ furnishes a computable certificate of optimality that can be driven to zero as $r \to d$.
}

\subsection{Functional inequalities}\label{sec:FunctionalInequality}
Our main technical tool for bounding $D_\alpha(\pi || \pi_{\alpha,r}^\opt)$ relies on the $\phi$-Sobolev inequality.

\begin{definition}[$\phi$-Sobolev inequality~\cite{Chafai_2004, Bolley_Gentil_2010}]\label{def:phiSI}
Let $\phi : \mathbb{R}_{\geq 0} \to \mathbb{R}_{\geq 0}$ a smooth convex function such that $-\nicefrac{1}{\phi''}$ is also convex. For a given probability measure $\mu$ on $\R^d$, we denote by $C_\phi(\mu)>0$
the smallest constant such that the \emph{$\phi$-Sobolev inequality}
\begin{equation}
\label{eq:phiSI}
\Ent^\phi_{\mu}(f) \leq \frac{C_\phi(\mu)}{2} \,\mathbb{E}_{X\sim\mu} [ \phi''(f(X))\,\| \nabla f(X) \|_2^2 ]
\end{equation}
holds for all sufficiently smooth positive functions~$f: \mathbb{R}^d \to \mathbb{R}_{\geq 0}$, where
\begin{equation}
\label{eq:phientropy}
\Ent_\mu^\phi(f) = \int \phi(f) \d\mu - \phi\left( \int f \d\mu \right)
\end{equation}
denotes the $\phi$-entropy of $f$ under $\mu$ and $\|\cdot\|_2$ the Euclidean norm of $\R^d$.
If~$C_\phi(\mu)<\infty$ we say that $\mu$ satisfies the $\phi$-Sobolev inequality with constant $C_\phi(\mu)$.
\end{definition}

We momentarily postpone a discussion of sufficient conditions for a measure to satisfy a $\phi$-Sobolev inequality. Instead, we note that the significance of the $\phi$-Sobolev inequality to the present work stems from its specialization to the Amari divergence function $\phi=\phi_\beta$, as in~\eqref{eq:amari}, with which \eqref{eq:phiSI} becomes
\begin{equation}\label{eq:betaSI}
\frac{1}{\beta(\beta-1)} \lb \int f^{\beta}\d\mu - \lb \int f \d\mu\rb^{\beta} \rb \leq \frac{C_{\beta}(\mu)}{2}\, \mathbb{E}_{X \sim \mu}[f^{\beta} \|\nabla \ln f(X)\|_2^2 ].
\end{equation}
We refer to this as the \emph{$\beta$-Sobolev inequality} and, to simplify our notation, we write $C_{\beta}(\mu)=C_{\phi_\beta}(\mu)$. The convexity conditions stipulated by $\phi$-Sobolev inequality restrict the value of $\beta$ to the interval $\beta \in [1,2]$.
As pointed out in \cite{Chafai_2004, Bolley_Gentil_2010}, we have that:
\begin{itemize}
\item when $\beta\rightarrow1$, the left-hand side of \eqref{eq:betaSI} becomes the canonical Shannon entropy, and the resulting inequality corresponds to the \emph{logarithmic Sobolev inequality} (LSI) originally proposed by Gross \cite{Gross_1975};
\item when $\beta=2$, the left-hand side of \eqref{eq:betaSI} becomes half the variance of $f$, and the resulting inequality is more commonly referred to as the \emph{Poincar\'{e} inequality}.
\end{itemize}
From this perspective, the family of inequalities \eqref{eq:betaSI} for $\beta \in [1,2]$ can be viewed as an interpolation between the LSI and the Poincar\'e inequality.
We discuss possible modifications beyond $\beta \geq 2$ in Section~\ref{sec:monotone_ext}, and potential improvements for~$\beta\in[1,2]$ in Section~\ref{sec:improvedSI}.

As shown by the following proposition, a direct application of inequality \eqref{eq:betaSI} with $\beta=1/\alpha$ permits us to bound the divergence $D_\alpha(\pi||\mu)$ using the (trace of the) diagnostic matrix~$H$, as in \eqref{eq:defH}.

\begin{proposition}\label{prop:BoundDalphaNoDR}
 Let $\pi$ and $\mu$ be two probability measures on $\R^d$ with $\d\pi(x)\propto \ell(x)\d\mu(x)$ for some smooth $\ell:\mathbb{R}^d\rightarrow\mathbb{R}_{\geq 0}$. Let $\alpha\in[\nicefrac{1}{2},1]$ and assume~$\mu$ satisfies the $\nicefrac{1}{\alpha}$--Sobolev inequality~\eqref{eq:betaSI}. Then
 \begin{equation}\label{eq:BoundDalphaNoDR}
  D_\alpha(\pi\,||\,\mu) \leq
  \mathcal{J}_\alpha\lb C_{1/\alpha}(\mu)\, \mathbb{E}_{X \sim \pi}\left[ \| \nabla \ln \ell(X) \|_2^2 \right]\rb ,
 \end{equation}
 where $\mathcal{J}_\alpha:\R_{\geq0}\rightarrow\R_{\geq0}$ is defined by
 \begin{equation}\label{eq:defJalpha}
  \mathcal{J}_\alpha(t) =
  \frac{1}{\alpha(\alpha-1)} \lb   \lb 1-\frac{(1-\alpha)}{2} t \rb_+^\alpha -1 \rb
 \end{equation}
 for $\alpha\neq 1$ and by $\mathcal{J}_\alpha(t)=\frac{1}{2} t$ for $\alpha=1$, where $(t)_+ = \max\{ t, 0\}$.

\end{proposition}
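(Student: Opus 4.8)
The plan is to invoke the $\beta$-Sobolev inequality \eqref{eq:betaSI} with the single well-chosen test function $f = \ell^\alpha$ and the exponent $\beta = \nicefrac{1}{\alpha}$, which lies in $[1,2]$ precisely because $\alpha\in[\nicefrac12,1]$, so that the convexity hypothesis of \Cref{def:phiSI} is met; the function $f=\ell^\alpha$ is admissible since $\ell$ is smooth and positive. I would treat $\alpha\in[\nicefrac12,1)$ first and dispatch $\alpha=1$ separately. The reason this test function works is the identity $f^\beta = \ell^{\alpha\beta} = \ell$, together with $\nabla\ln f = \alpha\,\nabla\ln\ell$.

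With $f=\ell^\alpha$, the right-hand side of \eqref{eq:betaSI} becomes
$$
\frac{C_{1/\alpha}(\mu)}{2}\,\mathbb{E}_{X\sim\mu}\bigl[\ell(X)\,\alpha^2\|\nabla\ln\ell(X)\|_2^2\bigr]
= \frac{C_{1/\alpha}(\mu)}{2}\,\alpha^2 Z_\pi\,\mathbb{E}_{X\sim\pi}\bigl[\|\nabla\ln\ell(X)\|_2^2\bigr],
$$
using $\d\pi = Z_\pi^{-1}\ell\,\d\mu$; and, since $\beta(\beta-1) = (1-\alpha)/\alpha^2$, its left-hand side is $\frac{\alpha^2}{1-\alpha}\bigl(Z_\pi - (\int\ell^\alpha\d\mu)^{1/\alpha}\bigr)$. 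Writing $t := C_{1/\alpha}(\mu)\,\mathbb{E}_{X\sim\pi}[\|\nabla\ln\ell(X)\|_2^2]\ge 0$, cancelling $\alpha^2>0$ and multiplying by $1-\alpha>0$ gives $(\int\ell^\alpha\d\mu)^{1/\alpha} \ge Z_\pi\bigl(1-\tfrac{1-\alpha}{2}t\bigr)$. Because the left side is nonnegative and $Z_\pi>0$, this also yields $(\int\ell^\alpha\d\mu)^{1/\alpha} \ge Z_\pi\bigl(1-\tfrac{1-\alpha}{2}t\bigr)_+$, and raising to the power $\alpha>0$ (monotone on $[0,\infty)$) gives $\int\ell^\alpha\d\mu \ge Z_\pi^\alpha\bigl(1-\tfrac{1-\alpha}{2}t\bigr)_+^\alpha$.

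To finish, I would divide by $Z_\pi^\alpha$, subtract $1$, and multiply by $\frac{1}{\alpha(\alpha-1)}$, which is \emph{negative} for $\alpha\in[\nicefrac12,1)$ and hence reverses the inequality; recalling $D_\alpha(\pi||\mu) = \frac{1}{\alpha(\alpha-1)}\bigl(Z_\pi^{-\alpha}\int\ell^\alpha\d\mu - 1\bigr)$ from \eqref{eq:DalphaDef}, this is exactly $D_\alpha(\pi||\mu)\le\mathcal{J}_\alpha(t)$ with $\mathcal{J}_\alpha$ as in \eqref{eq:defJalpha}. The case $\alpha=1$ follows directly from the log-Sobolev instance of \eqref{eq:betaSI} applied to $f=\ell$: it reads $\mathrm{Ent}_\mu(\ell)\le\frac{C_1(\mu)}{2}\int\ell\|\nabla\ln\ell\|_2^2\d\mu$ with $\mathrm{Ent}_\mu(\ell)=\int\ell\ln\ell\,\d\mu-Z_\pi\ln Z_\pi$, and since $D_1(\pi||\mu)=Z_\pi^{-1}\mathrm{Ent}_\mu(\ell)$, dividing by $Z_\pi$ gives $D_1(\pi||\mu)\le\frac{C_1(\mu)}{2}t=\mathcal{J}_1(t)$; alternatively one notes $\mathcal{J}_\alpha(t)\to t/2$ as $\alpha\to1$ and passes to the limit.

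I expect the substance to be routine; the only points needing care are bookkeeping ones. One must keep straight that $\beta(\beta-1)>0$ while $\alpha(\alpha-1)<0$, so that the final scalar multiplication flips the inequality in the correct direction; and one must justify inserting the positive part $(\cdot)_+$ from the nonnegativity of $(\int\ell^\alpha\d\mu)^{1/\alpha}$ — this is what makes the stated bound valid (and the right-hand side literally equal to $\mathcal{J}_\alpha(t)$) even when $t$ is so large that $1-\tfrac{1-\alpha}{2}t$ is negative. It is also worth a line to confirm that $\beta=\nicefrac{1}{\alpha}\in[1,2]$ so $\phi_\beta$ meets the requirement that $-1/\phi_\beta''$ be convex, and that $f=\ell^\alpha$ is a legitimate test function in \eqref{eq:betaSI}.
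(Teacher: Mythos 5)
Your proof is correct and takes essentially the same route as the paper: apply the $\nicefrac{1}{\alpha}$-Sobolev inequality to the test function $f = (\text{const})\cdot\ell^\alpha$, so that $f^{1/\alpha}$ is proportional to $\ell$ and $\nabla\ln f = \alpha\nabla\ln\ell$, then rearrange, insert the positive part, and raise to the power $\alpha$. The only cosmetic difference is that the paper normalizes $f = (Z_\pi^{-1}\ell)^\alpha$ so that $\int f^{1/\alpha}\d\mu = 1$, whereas you use $f = \ell^\alpha$ and carry the $Z_\pi$ factor explicitly; since the $\beta$-Sobolev inequality is scale-invariant in $f$, the two are equivalent.
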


We refer to $\mathcal{J}_\alpha$ as the \emph{majorized loss function} for dimension reduction with $\alpha$-divergences.
{Observe that this is evaluated with the trace of the diagnostic matrix~$\E_{X \sim \pi}[\| \nabla \ln \ell(X)\|_2^2]$, which is also classically referred to as the \emph{relative Fisher information} of~$\pi$ with respect to~$\mu$.
The relative Fisher information is known to bound the Kullback--Leibler divergence \cite{Gross_1975}, the total variation distance \cite{guillin2009transportation}, and the Hellinger distance \cite{Cui_Tong_2021}. Proposition \ref{prop:BoundDalphaNoDR} extends these results to $\alpha$-divergences with $1/2\leq\alpha\leq1$.
}

Figure~\ref{fig:Jalpha} depicts these loss functions for a range of $\alpha$; the curves associated with $\alpha < \nicefrac{1}{2}$ will be discussed in Section~\ref{sec:monotone_ext}.
We note two significant properties of these functions that are crucial to subsequent developments. First, the image of $\mathcal{J_\alpha}$ never exceeds the vacuous upper bound of $|\alpha(\alpha-1)|^{-1}$ for $\alpha$-divergences; instead, it at most saturates at this plateau. Second, we see by inspection  that $t \mapsto \mathcal{J}_\alpha(t)$ is monotonically non-decreasing  for all $\alpha \in (0,1]$; we show this statement formally in Appendix \ref{sec:comparison}.

\begin{figure}[t!]
\begin{centering}
\includegraphics[width=0.9\textwidth]{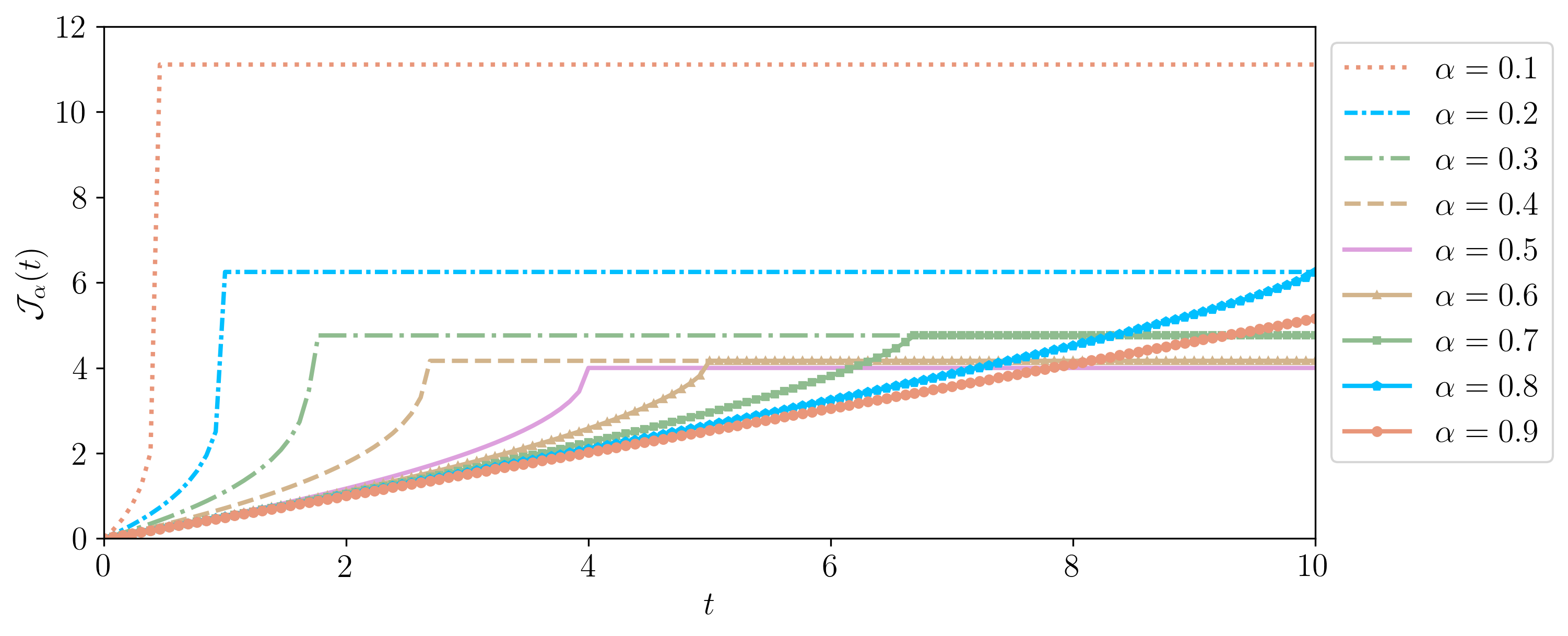}
\end{centering}
\caption{\label{fig:Jalpha} Visualization of the majorized loss function $t \mapsto \mcJ_\alpha(t)$ for $\alpha \geq \nicefrac{1}{2}$, defined in \eqref{eq:defJalpha} (solid lines \solidrule~), and its extension for $0 < \alpha < \nicefrac{1}{2}$, defined in \eqref{eq:Jext} (dashed lines \dashedrule~).}
\end{figure}

\subsection{Derivation of the upper bound for $\nicefrac{1}{2} \leq \alpha \leq 1$}\label{sec:FirstBound}

We now extend the result in Proposition \ref{prop:BoundDalphaNoDR} to bound $D_\alpha(\pi \,||\, \pi_{\alpha,r}^\opt)$. To do so, we need to control the $\phi$-Sobolev constant of conditional measures $\mu_{\perp|r}$ obtained from conditioning $\mu$ on any \emph{linear} feature map. This requirement motivates the introduction of the subspace $\phi$-Sobolev inequality.

\begin{definition}
[Subspace $\phi$-Sobolev Inequality]
A measure $\mu$ on $\R^d$ satisfies a \emph{subspace $\phi$-Sobolev inequality} if there exists a constant $C^\mathrm{sub}_\phi( \mu )<\infty$ such that, for any $0 \leq r\leq d$,
and for any matrix $U_r\in\R^{d\times r}$ with $r\leq d$ orthogonal columns,
the measure $\mu_{\perp|r}$ obtained by conditioning $X\sim\mu$ on the event $X \mid U_r^\trans X = \theta_r$ for any $\theta_r \in \R^r$ satisfies
$$
 C_\phi( \mu_{\perp|r} ) \leq C^\mathrm{sub}_\phi( \mu ) .
$$
We call the smallest possible constant for $C_\phi^\mathrm{sub}( \mu )$ the subspace $\phi$-Sobolev constant.
\end{definition}

We emphasize that the \emph{subspace} $\phi$-Sobolev inequality is a strictly stronger assumption than the $\phi$-Sobolev inequality. Indeed, while a $\phi$-Sobolev inequality for $\mu$ implies that each \emph{marginal} distribution~$\mu_{r}$ satisfies $C_\phi(\mu_r)\leq C_\phi(\mu)$ (simply by injecting $f=f_r\circ \varphi_r$ in \eqref{eq:phiSI}), it does not imply the $\phi$-Sobolev inequality for the \emph{conditional} distributions~$\mu_{\perp | r}$.
The following proposition gives sufficient condition for a measure $\mu$ to satisfy $C^\mathrm{sub}_\phi( \mu )<\infty$; its proof can be found in the supplementary materials.

\begin{proposition}\label{prop:CphiSubFinite}
Any probability measure~$\mu$ with density $\d\mu(x)\propto \exp\left (-V(x)-B(x) \right )\d x$, where $x\mapsto V(x)$ is a strictly log-concave function with $\Hess(V) \succeq RI_d$ for $R > 0$, and $x\mapsto B(x)$ is a bounded function, satisfies the subspace $\phi$-Sobolev inequality with constant
\begin{equation}\label{eq:CphiSubFinite}
 C^\mathrm{sub}_\phi( \mu ) \leq \frac{\exp(\sup B - \inf B)}{R}   .
\end{equation}
\end{proposition}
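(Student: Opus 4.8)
The plan is to reduce the subspace inequality to a one-dimensional-in-spirit estimate and then apply a Holley–Stroock-type perturbation argument to pass from $V$ to $V+B$. First I would fix an arbitrary $r$, an arbitrary matrix $U_r \in \R^{d\times r}$ with orthonormal columns, and an arbitrary $\theta_r$, and write $x = U_r\theta_r + U_\perp x_\perp$ so that the conditional measure $\mu_{\perp|r}(\cdot\mid\theta_r)$ on the $(d-r)$-dimensional affine slice has density proportional to $\exp(-V(U_r\theta_r+U_\perp x_\perp) - B(U_r\theta_r+U_\perp x_\perp))$ with respect to Lebesgue measure $\d x_\perp$. The key structural point is that the restriction $x_\perp \mapsto V(U_r\theta_r + U_\perp x_\perp)$ inherits strong convexity: its Hessian is $U_\perp^\trans \Hess(V)(\cdot) U_\perp \succeq R\, U_\perp^\trans U_\perp = R I_{d-r}$, since $U_\perp$ has orthonormal columns. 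So the conditional measure is, up to the bounded perturbation coming from $B$, an $R$-strongly log-concave measure on $\R^{d-r}$.

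The second step is to invoke a known $\phi$-Sobolev inequality for strongly log-concave measures. For a measure $\d\nu \propto e^{-W}\d x$ with $\Hess(W)\succeq R I$, the Bakry–Émery criterion (in the $\phi$-entropy / $\Gamma$-calculus formulation of \cite{Chafai_2004,Bolley_Gentil_2010}) yields the $\phi$-Sobolev inequality with constant $C_\phi(\nu) \le 1/R$, uniformly over all admissible $\phi$ in Definition~\ref{def:phiSI}; this is exactly the regime where the convexity hypotheses on $\phi$ and $-1/\phi''$ are used. Applying this to $W(x_\perp) = V(U_r\theta_r+U_\perp x_\perp)$ gives a conditional-slice measure $\mu^V_{\perp|r}$ (the one without $B$) satisfying $C_\phi(\mu^V_{\perp|r}) \le 1/R$.

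The third and final step is the perturbation argument. Writing $b := \sup B - \inf B$, the density of $\mu_{\perp|r}$ relative to $\mu^V_{\perp|r}$ is bounded above and below by $e^{\pm b/2}$ after accounting for the normalizing constants, so by the classical Holley–Stroock lemma the $\phi$-Sobolev constant degrades by at most a factor $e^{b}$: $C_\phi(\mu_{\perp|r}) \le e^{b} C_\phi(\mu^V_{\perp|r}) \le e^{\sup B - \inf B}/R$. Since the bound is uniform in $r$, $U_r$, and $\theta_r$, taking the supremum over all these gives $C^\mathrm{sub}_\phi(\mu) \le e^{\sup B - \inf B}/R$, which is \eqref{eq:CphiSubFinite}. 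The main obstacle — and the place where one must be careful rather than merely routine — is verifying that the Holley–Stroock perturbation principle holds in the generality of $\phi$-Sobolev inequalities (not just LSI and Poincaré) with the stated multiplicative constant, i.e. controlling how the $\phi$-entropy functional $\Ent^\phi_\mu(f) = \int \phi(f)\d\mu - \phi(\int f\d\mu)$ behaves under a bounded change of measure; this requires bounding the change in both the integral term and the subtracted term $\phi(\int f\,\d\mu)$, and using monotonicity/convexity of $\phi$ together with the fact that $\Ent^\phi$ is a variance-like (hence nonnegative, bound-able) quantity. I would either cite the $\phi$-entropy Holley–Stroock statement from \cite{Chafai_2004} directly or reproduce its short proof in the appendix.
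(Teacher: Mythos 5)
Your proposal is correct and follows essentially the same route as the paper's proof: observe that conditioning on $U_r^\trans X = \theta_r$ preserves the $V+B$ structure with the same $R$-strong convexity (via $U_\perp^\trans \Hess(V)\, U_\perp \succeq R I_{d-r}$) and a non-increasing oscillation for $B$, then invoke the Bakry--\'Emery criterion and the Holley--Stroock perturbation in the $\phi$-entropy setting, citing Chafa\"i for the latter. The only cosmetic difference is that the paper first states the bound for the unconditioned $\mu$ and then observes that the same argument transfers to each slice, while you work directly on the conditional measures; both are valid.
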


This demonstrates that a sufficiently rich class of reference measures satisfies the subspace $\phi$-Sobolev inequality. For example, the subspace $\phi$-Sobolev constant for the isotropic Gaussian measure is $C^\mathrm{sub}_\phi( \mu ) \leq 1$, {and for uniform measures on compact and convex sets $\Omega$, the subspace $\phi$-Sobolev constant is bounded by $C^\mathrm{sub}_\phi( \mu ) \leq \text{diam}(\Omega)^2\exp(1)/4$; see Examples 2.6 and 2.8 in \cite{ZCLSM22}.
We refer to \cite{ZCLSM22} for additional examples of measures which do (or do not) satisfy the assumptions in Proposition~\ref{prop:CphiSubFinite}.}

The subspace $\phi$-Sobolev inequality provides the key ingredient to deriving our error bound. Similar to our earlier proposition, we note that to control dimension reduction with $\alpha$-divergences we must impose the $\beta$-Sobolev inequality of order $\beta = \nicefrac{1}{\alpha}$. As we have so far only established these inequalities for $\beta \in [1,2]$, accordingly this limits the $\alpha$-divergences we consider to the interval $[\nicefrac{1}{2}, 1]$. Notably, this precludes us from considering any of the dual divergences to this interval, which are related by $1-\alpha$.

\begin{theorem}
\label{thm:alphaCDR}
Let $\pi$ and $\mu$ be two probability measures on $\R^d$ with $\d\pi(x) \propto \ell(x)\d\mu(x)$ for some smooth $\ell:\mathbb{R}^d\rightarrow\mathbb{R}_{\geq 0}$.
Then, for all $\nicefrac{1}{2}\leq\alpha\leq1$ and for any matrix $U_r\in\R^{d\times r}$ with $r\leq d$ orthogonal columns, the measure $\pi_{\alpha,r}^\opt(x)$ as in \eqref{eq:piopt} with $\varphi_r(x)=U_r^\trans x$ satisfies
\begin{equation}\label{eq:alphaCDR}
D_\alpha(\pi \,||\, \pi_{\alpha,r}^\opt) \leq
\mcJ_\alpha\lb C^\mathrm{sub}_{1/\alpha}( \mu )\,
\mathbb{E}_{X\sim\pi} \left[ \|U_\perp^\trans \nabla \ln \ell(X) \|_2^2 \right]
\rb,
\end{equation}
where $U_\perp\in\R^{d\times(d-r)}$ is any orthogonal completion of $U_r$
and where $\mcJ_\alpha$ is the function defined in \eqref{eq:defJalpha}.

\end{theorem}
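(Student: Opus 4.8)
The plan is to reduce the $\alpha$-divergence $D_\alpha(\pi \,||\, \pi_{\alpha,r}^\opt)$ to the single scalar ratio $Z_{\alpha,r}/Z_\pi$ via the closed form in \eqref{eq:optdivergence}, and then to lower-bound that ratio by applying the subspace $\nicefrac{1}{\alpha}$-Sobolev inequality \emph{fiberwise} to the conditional measures $\mu_{\perp|r}$. I would handle $\nicefrac12\leq\alpha<1$ uniformly (with $\alpha=\nicefrac12$ being the Poincar\'e endpoint $\beta=2$) and recover $\alpha=1$ at the end. For Step~1, note that from \eqref{eq:optdivergence}, $D_\alpha(\pi \,||\, \pi_{\alpha,r}^\opt)=\frac{1}{\alpha(\alpha-1)}\big((Z_{\alpha,r}/Z_\pi)^\alpha-1\big)$; since $\alpha(\alpha-1)<0$ and $t\mapsto t^\alpha$ is nondecreasing on $\R_{\geq0}$, any bound $Z_{\alpha,r}/Z_\pi\geq s$ with $s\in[0,1]$ yields $D_\alpha(\pi \,||\, \pi_{\alpha,r}^\opt)\leq\frac{1}{\alpha(\alpha-1)}(s^\alpha-1)$, and choosing $s=\big(1-\tfrac{1-\alpha}{2}T\big)_+$ with $T=C^\sub_{1/\alpha}(\mu)\,\E_{X\sim\pi}[\|U_\perp^\trans\nabla\ln\ell(X)\|_2^2]$ reproduces exactly $\mcJ_\alpha(T)$ from \eqref{eq:defJalpha}. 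So it suffices to prove $Z_\pi-Z_{\alpha,r}\leq\tfrac{1-\alpha}{2}\,Z_\pi\,T$ (the bound $Z_{\alpha,r}/Z_\pi\geq0$ holding automatically).

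The crux is the fiberwise estimate. Using the disintegration \eqref{eq:condExp2}, write $Z_\pi-Z_{\alpha,r}=\int(g-h)\,\d\mu_r$ with $g(\theta_r)=\int\ell\,\d\mu_{\perp|r}$ and $h(\theta_r)=\big(\int\ell^\alpha\,\d\mu_{\perp|r}\big)^{1/\alpha}$ (Jensen gives $h\leq g$, consistent with $Z_{\alpha,r}\leq Z_\pi$). Fix $\theta_r$ and apply the $\beta$-Sobolev inequality \eqref{eq:betaSI} with $\beta=\nicefrac1\alpha\in[1,2]$ to $\mu_{\perp|r}(\cdot\mid\theta_r)$, which lives on the affine fiber $\varphi_r^{-1}(\theta_r)$ (isometric to $\R^{d-r}$) and satisfies $C_{1/\alpha}(\mu_{\perp|r})\leq C^\sub_{1/\alpha}(\mu)$ by the subspace $\phi$-Sobolev assumption. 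Take the test function $f=\ell^{\alpha}$ restricted to the fiber: then $f^\beta=\ell$, so $\int f^\beta\,\d\mu_{\perp|r}=g(\theta_r)$ and $\big(\int f\,\d\mu_{\perp|r}\big)^\beta=h(\theta_r)$; moreover $\nabla\ln f=\alpha\,\nabla\ln\ell$, and the gradient appearing in the inequality is the intrinsic (fiber) gradient $U_\perp^\trans\nabla$, so $\|\nabla\ln f\|_2^2=\alpha^2\|U_\perp^\trans\nabla\ln\ell\|_2^2$ there. Since $\beta(\beta-1)=(1-\alpha)/\alpha^2$, the factor $\alpha^2$ cancels and \eqref{eq:betaSI} simplifies to
\[
g(\theta_r)-h(\theta_r)\;\leq\;\frac{(1-\alpha)\,C^\sub_{1/\alpha}(\mu)}{2}\int\ell\,\|U_\perp^\trans\nabla\ln\ell\|_2^2\,\d\mu_{\perp|r}.
\]

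To conclude, integrate this over $\theta_r\sim\mu_r$ and use \eqref{eq:condExp2} together with $\d\pi=Z_\pi^{-1}\ell\,\d\mu$ to obtain $Z_\pi-Z_{\alpha,r}\leq\tfrac{1-\alpha}{2}C^\sub_{1/\alpha}(\mu)\int\ell\,\|U_\perp^\trans\nabla\ln\ell\|_2^2\,\d\mu=\tfrac{1-\alpha}{2}Z_\pi\,T$, which is precisely the scalar bound from Step~1; feeding it back gives $D_\alpha(\pi \,||\, \pi_{\alpha,r}^\opt)\leq\mcJ_\alpha(T)$. The endpoint $\alpha=\nicefrac12$ is identical ($\beta=2$, Poincar\'e). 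For $\alpha=1$ one may pass to the limit $\alpha\to1^-$ (both sides converge, $\mcJ_\alpha(t)\to t/2$), or argue directly: by \eqref{eq:KLoptimalDecomp}, $D_1(\pi\,||\,\pi_{1,r}^\opt)=\int D_1(\pi_{\perp|r}\,||\,\mu_{\perp|r})\,\d\pi_r$, and applying the logarithmic Sobolev inequality on each fiber to $\d\pi_{\perp|r}/\d\mu_{\perp|r}$ (whose fiber log-gradient equals $U_\perp^\trans\nabla\ln\ell$) integrates to $\tfrac12 C^\sub_1(\mu)\,\E_{X\sim\pi}[\|U_\perp^\trans\nabla\ln\ell\|_2^2]=\mcJ_1(T)$.

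I expect the main obstacle to be Step~2: correctly invoking the lower-dimensional functional inequality on the conditional measure — identifying the test function $f=\ell^\alpha$, recognizing that the gradient in \eqref{eq:betaSI} becomes the intrinsic fiber gradient $U_\perp^\trans\nabla$, and that the \emph{subspace} $\phi$-Sobolev assumption is exactly what makes $C_{1/\alpha}(\mu_{\perp|r})$ uniformly bounded over $\theta_r$ and over the choice of $U_r$. The attendant measurability and integrability bookkeeping in the disintegration (that $g,h$ are finite $\mu_r$-a.e., that $f$ is an admissible test function on each fiber since $\ell$ is smooth and positive) is routine but must be recorded; everything else is algebra with the constants and the monotonicity of $t\mapsto t^\alpha$.
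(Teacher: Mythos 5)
Your proposal is correct and follows essentially the same route as the paper's proof: apply the $\beta$-Sobolev inequality with $\beta=\nicefrac{1}{\alpha}$ fiberwise to the conditional measures $\mu_{\perp|r}(\cdot\mid\theta_r)$ with test function $f$ a power of $\ell$, integrate over $\theta_r\sim\mu_r$, deduce the lower bound $Z_{\alpha,r}/Z_\pi\geq(1-\tfrac{1-\alpha}{2}T)_+$, and feed it into \eqref{eq:optdivergence}. The only cosmetic differences are that the paper takes $f=(Z_\pi^{-1}\ell)^\alpha$ instead of $f=\ell^\alpha$ (equivalent, since \eqref{eq:betaSI} is homogeneous of degree $\beta$ in $f$), and for $\alpha=1$ the paper simply cites \cite{ZCLSM22} whereas you correctly offer both the limiting argument and the direct fiberwise-LSI argument via \eqref{eq:KLoptimalDecomp}.
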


We now consider minimizing this error bound.
Since $t \mapsto \mcJ_\alpha(t)$ is monotone non-decreasing, we have
\begin{align}
& \argmin_{U_r \in \,\mathbb{R}^{d \times r}} \mcJ_\alpha\lb C^\mathrm{sub}_{1/\alpha}( \mu ) \mathbb{E}_{X\sim\pi} \left[ \|U_\perp^\trans \nabla_x \ln \ell(X) \|_2^2 \right]\rb \nonumber\\
&=\argmin_{U_r \in \,\mathbb{R}^{d \times r}} \mathbb{E}_{X\sim\pi} \left[ \|U_\perp^\trans \nabla_x \ln \ell(X) \|_2^2 \right] \nonumber\\
&=\argmax_{U_r \in \,\mathbb{R}^{d \times r}} \trace(U_r^\trans H U_r), \label{eq:min_monotone}
\end{align}
where~$H$ is defined in \eqref{eq:defH}.
The optimal solution to~\eqref{eq:min_monotone} is a classical result from~\cite{Fan_1949}, and it follows that columns of the optimal $U_r$ are the $r$~leading eigenvectors of the diagnostic matrix. This result extends the conclusions of~\cite{ZCLSM22} and~\cite{Cui_Tong_2021} by demonstrating that this subspace is \emph{universally robust} for all $\alpha$-divergences interpolating between LSI and the Poincar\'e inequality.
Explicitly, denoting by $\lambda_k$ the $k$-th largest eigenvalue of~$H$, the resulting feature map $\varphi_r(x)=U_r^\trans x$ yields
\begin{equation}
\label{eq:boundloss}
D_\alpha\lb \pi || \pi^\opt_{\alpha,r} \rb
\leq
\mcJ_\alpha\lb C^\mathrm{sub}_{1/\alpha}( \mu ) \sum_{k = r+1}^d \lambda_k \rb .
\end{equation}
This bounds allow the determination of the the reduced dimension (depending on error tolerance and computational budget) \emph{without} re-solving the optimization for each $r$. This is in contrast to other variational dimension reduction algorithms.

This bound also highlights the interplay between the choice of $\alpha$, the subspace Sobolev constant $C^\sub_{1/\alpha}(\mu)$, and the eigenvalues $\lambda_k$ of the diagnostic matrix. We note that if $\sum_{k > r}\lambda_k > \frac{2}{C_{1/\alpha}^\sub(1-\alpha)}$, then the corresponding upper bound achieves the vacuous limit $|\alpha(\alpha-1)|^{-1}$. In other words, if the spectrum of the diagnostic matrix does not decay sufficiently quickly, then the majorized loss function will not provide an informative bound unless sufficiently large $r$ is considered. While this is not an obstruction for the KL divergence ($\alpha=1$), this may be a bottleneck when considering dimension reduction for, e.g., the squared Hellinger distance ($\alpha = \nicefrac{1}{2}$). In general, the lower the value of $\alpha$, or the smaller the constant $C^\sub_{1/\alpha}(\mu)$ of the reference measure, the more stringent this requirement becomes; see e.g., Figure~\ref{fig:alphacomparision} for a visualization.

\begin{figure}[t!]
\begin{centering}
\includegraphics[width=0.9\textwidth]{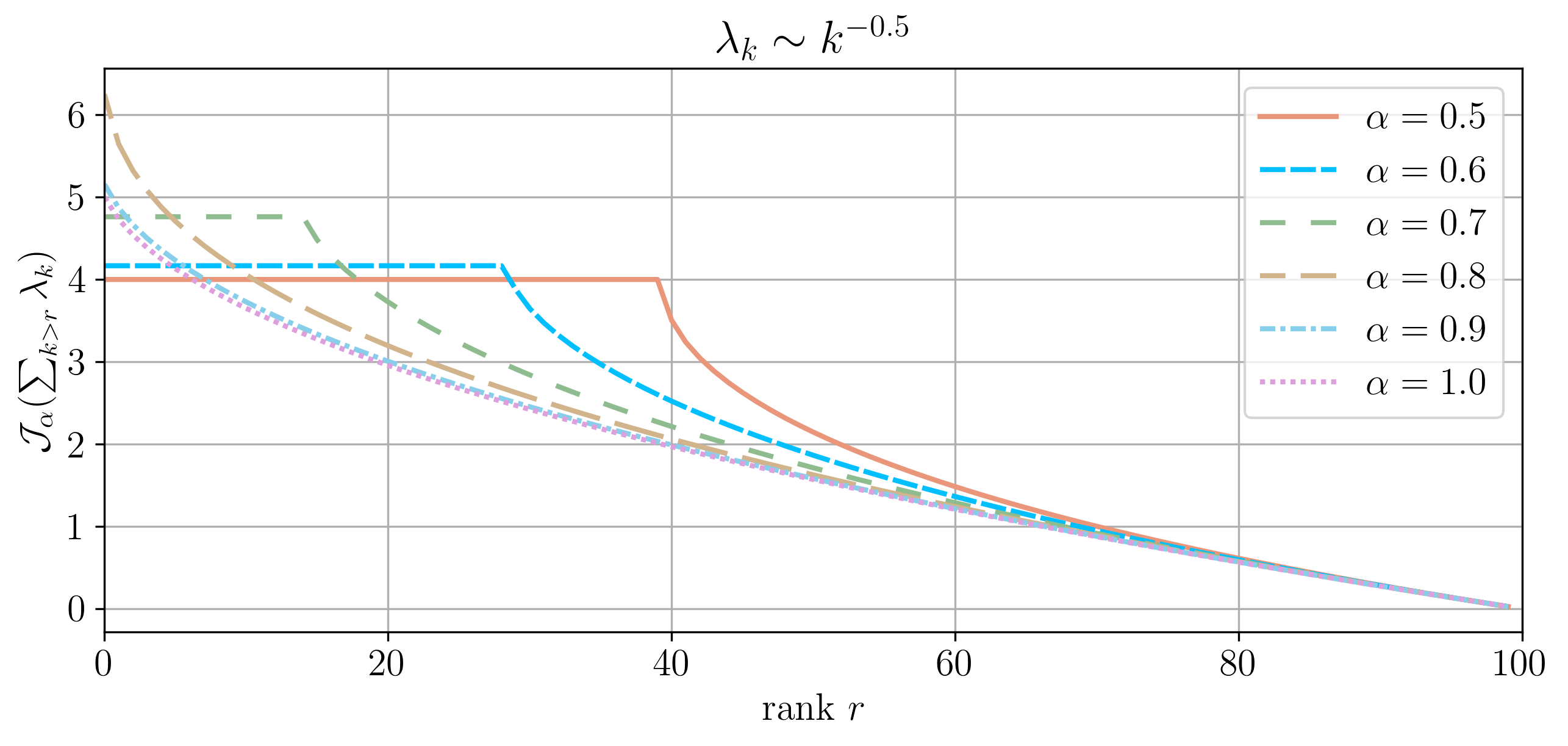}
\end{centering}
\caption{\label{fig:alphacomparision} Comparison of the majorization~\eqref{eq:boundloss} across different $\alpha \in (\nicefrac{1}{2},1]$. The decay of the eigenvalue spectrum is assumed to be algebraic, and the trace normalization of the diagnostic matrix is assumed to be $\sum \lambda_k = 10$ for this example.
}
\end{figure}

\begin{remark}
A similar bound for the squared Hellinger metric was obtained in \cite{Cui_Tong_2021} which reads $d_\textsc{Hell}^2(\pi, \pi^\opt_{\nicefrac{1}{2},r}) \leq \frac{1}{4}C^\sub_2 \sum_{k > r}\lambda_k$, where $\lambda_k$ denotes the $k$-th eigenvalue of the diagnostic matrix. By Theorem~\ref{thm:alphaCDR}, we improve on this bound since
\[
d_\textsc{Hell}^2(\pi, \pi^\opt_{\nicefrac{1}{2},r}) \leq \frac{1}{4} \mathcal{J}_{1/2}\lb C^\sub_2 \sum_{k > r}\lambda_k\rb \leq \frac{1}{4}C^\sub_2 \sum_{k > r}\lambda_k,
\]
which can be verified by applying Bernoulli's inequality. Moreover, in the limit $\sum_{k>r}^d \lambda_k \to 0$ as $r \to d$, our approximation becomes sharper than theirs by a factor of two. A visual comparison of these bounds is provided by Figure~\ref{fig:comparisonCuiTong} for a synthetic linear-Gaussian example (see Appendix \ref{sec:lingaussian} for details).
\end{remark}

\begin{figure}[h!]
\begin{centering}
\includegraphics[width=0.9\textwidth]{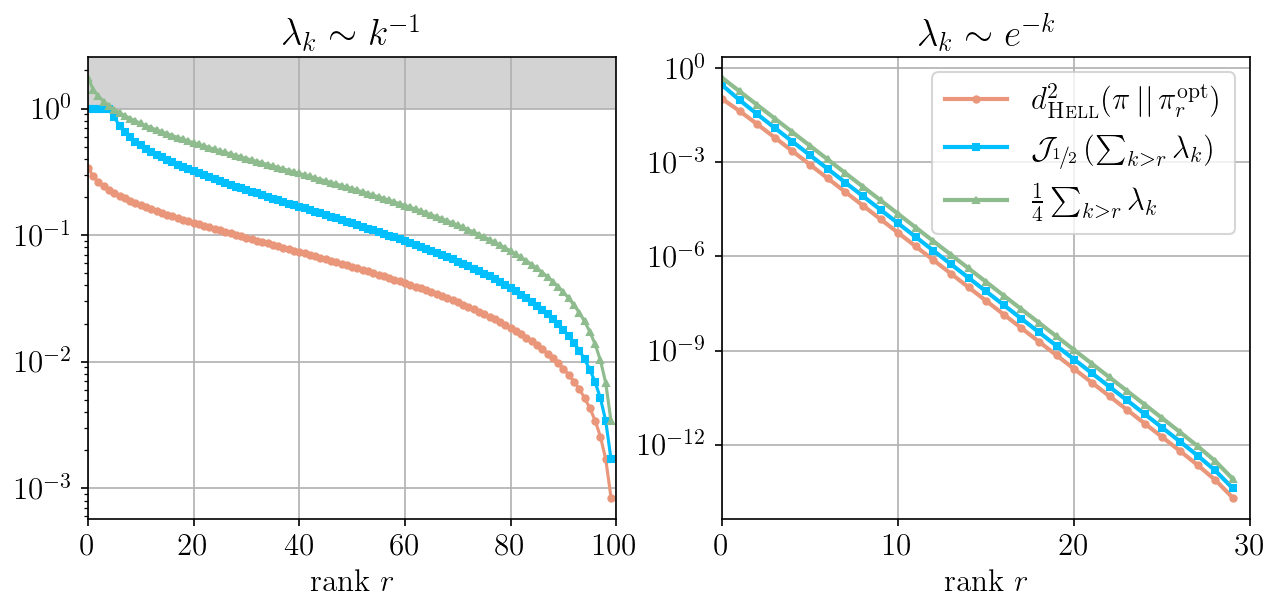}
\end{centering}
\caption{ Comparison of the exact squared Hellinger loss for the linear Gaussian inverse problem (Appendix \ref{sec:lingaussian}), the majorized bound in $\mcJ_{\nicefrac{1}{2}}(\sum_{k>r} \lambda_k)$ in~\eqref{eq:boundloss}, and the bound $\frac{1}{4} \sum_{k>r}\lambda_k$ derived by Cui and Tong \cite{Cui_Tong_2021}. (Left) Example with algebraically decaying eigenvalues of the diagnostic matrix with $d=100$ and normalization $\sum_{k=1}^d \lambda_k = 7$. The shaded region indicates a vacuous upper-bound. (Right) Example with exponentially decaying eigenvalues for $d = 50$ and normalization $\sum_{k=1}^d \lambda_k = 2$.
}
\label{fig:comparisonCuiTong}
\end{figure}

\subsection{Extension of upper bound to $0 < \alpha \leq 1$}
\label{sec:monotone_ext}

The upper bounds obtained in Theorem~\ref{thm:alphaCDR} are valid only for the interval $\alpha \in [\nicefrac{1}{2},1]$, essentially because the convexity requirements of the $\phi$-Sobolev inequality restrict us to considering $\beta$-Sobolev inequalities for $\beta=\nicefrac{1}{\alpha} \in [1,2]$. It is natural to wonder whether this is a fundamental obstruction, or perhaps merely an artificial restriction imposed by the proof technique of the inequalities.
Bolley and Gentil~\cite{Bolley_Gentil_2010} address this question and note the regime $\beta > 2$ admits a trivial extension. Their key insight is to manipulate the $\beta$-Sobolev inequalities into the form of \emph{Beckner} inequalities~\cite{Beckner_1989} by choosing the test function $f = g^{\nicefrac{2}{\beta}}$ so that \eqref{eq:betaSI} yields
\begin{equation}\label{eq:Beckner}
\frac{1}{(\beta-1)} \lb \int g^2\d\mu - \lb \int g^{\frac{2}{\beta}} \d\mu\rb^{\beta} \rb \leq \frac{2 C_{\beta}(\mu)}{\beta}\, \mathbb{E}_{X \sim \mu}[ g(X)^2 \,\|\nabla \ln g(X)\|_2^2 ].
\end{equation}
This form is then amenable to the use of the following monotonicity relation that originally appeared in \cite[Lemma 1]{Latala_Oleszkiewicz_2000}.
\begin{lemma}[Beckner monotonicity]
\label{lem:beckner_monotone}
For any probability measure~$\mu$ on $\mathbb{R}^d$ and positive-valued function $g : \mathbb{R}^d \to \mathbb{R}$, the mapping
\[
\beta \mapsto \frac{1}{(\beta-1)} \lb \int g^2\d\mu - \lb \int g^{\frac{2}{\beta}} \d\mu\rb^{\beta} \rb
\]
is non-increasing for all $1 < \beta < \infty$.
\end{lemma}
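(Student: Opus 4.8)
The plan is to differentiate in $\beta$ and reduce the monotonicity to the elementary inequality $e^x \ge 1 + x$. Fix a positive function $g$ and put $a = g^2$. For $\beta > 1$ set
\[
N(\beta) \;=\; \int a\,\d\mu \;-\; \Bigl(\int a^{1/\beta}\,\d\mu\Bigr)^{\!\beta},
\]
so that the map in the statement is $F(\beta) = N(\beta)/(\beta-1)$. Since $N(1) = 0$, showing that $F$ is non-increasing on $(1,\infty)$ is equivalent to showing
\[
(\beta-1)\,N'(\beta) \;\le\; N(\beta) \qquad\text{for all } \beta > 1 .
\]
A preliminary truncation (replace $g$ by $\min\{\max\{g,\varepsilon\},K\}$, then let $\varepsilon\downarrow 0$ and $K\uparrow\infty$ with monotone/dominated convergence) reduces everything to the case where $g$ is bounded and bounded away from $0$, for which differentiation under the integral sign is immediate.

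Next I would compute $N'$. Writing $u = u_\beta := a^{1/\beta}$ and $M = M_\beta := \int u\,\d\mu$, and using $\ln a = \beta\ln u$, differentiation under the integral gives, after simplification,
\[
N'(\beta) \;=\; M^{\beta-1}\Bigl(\textstyle\int u\ln u\,\d\mu - M\ln M\Bigr) \;=\; M^{\beta-1}\,\Ent_\mu(u),
\]
where $\Ent_\mu(u) := \int u\ln u\,\d\mu - \bigl(\int u\,\d\mu\bigr)\ln\bigl(\int u\,\d\mu\bigr) \ge 0$ is the classical entropy. Since also $N(\beta) = \int u^\beta\,\d\mu - M^\beta$, the desired inequality $(\beta-1)N'(\beta)\le N(\beta)$ reads
\[
(\beta-1)\,M^{\beta-1}\,\Ent_\mu(u) \;\le\; \int u^\beta\,\d\mu - M^\beta .
\]
I would then remove the factor $M$ by homogeneity: both sides are homogeneous of degree $\beta$ under $u \mapsto cu$ (the entropy is $1$-homogeneous and carries the extra $M^{\beta-1}$, while the right-hand side is plainly $\beta$-homogeneous), so replacing $u$ by $u/M$ reduces the claim to: for every positive $v$ with $\int v\,\d\mu = 1$,
\[
(\beta-1)\int v\ln v\,\d\mu \;\le\; \int v^\beta\,\d\mu - 1 .
\]

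This last inequality is obtained by integrating against $\mu$ the pointwise bound $t^{\beta-1} \ge 1 + (\beta-1)\ln t$ (which is $e^x \ge 1 + x$ with $x = (\beta-1)\ln t$), multiplied by $t > 0$: one gets $\int v^\beta\,\d\mu \ge \int v\,\d\mu + (\beta-1)\int v\ln v\,\d\mu = 1 + (\beta-1)\int v\ln v\,\d\mu$, which is exactly the claim. Running the reductions backward then yields $(\beta-1)N'(\beta) \le N(\beta)$ and hence the monotonicity.

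The differentiation of $N$, the homogeneity reduction, and the scalar inequality are all elementary, so the only genuinely delicate point is the analytic bookkeeping: justifying differentiation under the integral and the truncation limit when $g$ is merely positive (and square-integrable). This is the step I would expect to consume most of the write-up, and it is handled by the local domination $\sup_{\beta\in[\beta_0,\beta_1]} a^{1/\beta}\lvert\ln a\rvert \le a^{1/\beta_0}\lvert\ln a\rvert + a^{1/\beta_1}\lvert\ln a\rvert$ on compact $\beta$-subintervals, together with $0 \le t\ln t + e^{-1}$ to control the negative part near $a = 0$. Everything else is routine.
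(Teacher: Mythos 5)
Your proof is correct. The paper itself does not give a proof of this lemma — it cites it from Lata\l{}a and Oleszkiewicz (2000) — so there is no in-paper argument to compare against; what matters is whether your self-contained derivation holds up, and it does. The key steps all check out: the computation $N'(\beta) = M_\beta^{\beta-1}\,\mathrm{Ent}_\mu(u_\beta)$ (using $\tfrac{\partial}{\partial\beta}a^{1/\beta} = -\tfrac{1}{\beta}u_\beta\ln u_\beta$ and the product rule on $e^{\beta\ln M_\beta}$), the degree-$\beta$ homogeneity of both sides under $u\mapsto cu$, and the pointwise bound $t^{\beta-1}\ge 1+(\beta-1)\ln t$ from $e^x\ge 1+x$, which integrates to exactly $\int v^\beta\,\mathrm d\mu \ge 1 + (\beta-1)\int v\ln v\,\mathrm d\mu$ when $\int v\,\mathrm d\mu = 1$. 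The dominations you invoke (monotonicity of $\beta\mapsto a^{1/\beta}$ over compact $\beta$-intervals plus $t\ln t\ge -e^{-1}$) are the right tools to justify differentiating under the integral after truncating $g$.

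Two small expository remarks. First, the sentence ``Since $N(1)=0$, showing that $F$ is non-increasing \ldots is equivalent to \ldots'' reads as though $N(1)=0$ is needed for the equivalence; it isn't --- the equivalence $F'\le 0 \iff (\beta-1)N'(\beta)\le N(\beta)$ is just the quotient rule, and $N(1)=0$ is only a sanity check that $F$ has a removable singularity at $\beta=1$. Second, when you carry out the truncation you should note that the final inequality $(\beta-1)N'(\beta)\le N(\beta)$ is stable under the limits $\varepsilon\downarrow 0$, $K\uparrow\infty$ by monotone convergence on each term (or, more simply, one can prove the scalar inequality for bounded $v$ and then pass to the limit in the \emph{integrated} statement $F(\beta_2)\le F(\beta_1)$ rather than in the derivative). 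These are cosmetic; the argument is sound.
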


By applying this lemma and reverting to $\beta$-Sobolev form using the inverse transformation $g = f^{\nicefrac{\beta}{2}}$, we recover the following statement: if a measure~$\mu$ satisfies the Poincar\'e inequality ($\beta=2$) with constant $C_2(\mu)$, then it also satisfies a $\beta$-Sobolev inequality for $\beta \geq 2$ with constant $\frac{\beta}{2}C(\mu)$. (In other words, we have the bound $C_\beta(\mu)\leq \frac{\beta}{2} C_2(\mu)$ for all $\beta\geq2$, although there is no guarantee that such an estimate is sharp.)
It immediately follows that
$$
 C_\beta^\textrm{sub}(\mu)\leq \frac{\beta}{2} C_2^\textrm{sub}(\mu), \quad \forall \beta\geq2.
$$
This is sufficient to enable us to extend our majorization for all $\alpha \in (0,1]$.

\begin{theorem}
\label{thm:alphaCDRext}
Let $\pi$ and $\mu$ be two probability measures on $\R^d$ with $\d\pi(x) \propto \ell(x)\d\mu(x)$ for some smooth $\ell:\mathbb{R}^d\rightarrow\mathbb{R}_{\geq 0}$.
Then, for any $0<\alpha\leq1$ and for any matrix $U_r\in\R^{d\times r}$ with $r\leq d$ orthogonal columns, the measure $\pi_{\alpha,r}^\opt(x)$ as in \eqref{eq:piopt} with $\varphi_r(x)=U_r^\trans x$ satisfies
\begin{equation}\label{eq:alphaCDRext}
D_\alpha(\pi || \pi_{\alpha,r}^\opt) \leq \mcJ_\alpha \lb C_{\min\{\frac{1}{\alpha};\,2\}}^\sub(\mu)\, \E_{X \sim \pi}\left[ \|U_\perp^\trans \nabla \ln \ell (X)\|_2^2 \right] \rb,
\end{equation}
where $U_\perp\in\R^{d\times(d-r)}$ is any orthogonal completion of $U_r$
and where
\begin{equation}
\label{eq:Jext}
\mcJ_\alpha(u) =
\begin{cases}
\dfrac{1}{\alpha(\alpha-1)} \left(\lb 1 - \dfrac{(1-\alpha)}{4 \alpha} u \rb_+^\alpha - 1 \right)  & \alpha \in (0,\,\nicefrac{1}{2}], \\[10pt]
\dfrac{1}{\alpha(\alpha-1)} \left( \lb 1 - \dfrac{(1-\alpha)}{2} u \rb_+^\alpha - 1 \right) & \alpha \in (\nicefrac{1}{2},\,1].
\end{cases}
\end{equation}
\end{theorem}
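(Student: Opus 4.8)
The plan is to replay the proof of \cref{thm:alphaCDR} essentially verbatim, but to feed it the extended family of $\beta$-Sobolev inequalities that the Beckner-monotonicity discussion makes available for $\beta\geq2$, and then to absorb the resulting multiplicative factor into a rescaled majorized loss function. For $\alpha\in(\nicefrac{1}{2},1]$ the assertion is literally \cref{thm:alphaCDR}, since there $\min\{\nicefrac{1}{\alpha};2\}=\nicefrac{1}{\alpha}$ and the two formulas for $\mcJ_\alpha$ coincide; only the range $\alpha\in(0,\nicefrac{1}{2}]$ needs new work, and at the join $\alpha=\nicefrac{1}{2}$ one has $\tfrac{1}{2\alpha}=1$ with both branches of \eqref{eq:Jext} agreeing, so the statement glues consistently.

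The structural input I would isolate is the following conditional inequality. For $\alpha\in(0,\nicefrac{1}{2}]$ we have $\beta:=\nicefrac{1}{\alpha}\geq2$, and the argument preceding the theorem --- rewriting \eqref{eq:betaSI} as the Beckner inequality \eqref{eq:Beckner} via $f=g^{2/\beta}$, comparing at $\beta=2$ through \cref{lem:beckner_monotone}, and invoking the Poincar\'e inequality --- shows that \emph{every} conditional measure $\mu_{\perp|r}$ obtained by conditioning $\mu$ on a linear feature $U_r^\trans X=\theta_r$ satisfies \eqref{eq:betaSI} with exponent $\beta=\nicefrac{1}{\alpha}$ and constant $C_{1/\alpha}(\mu_{\perp|r})\leq\tfrac{\beta}{2}C_2(\mu_{\perp|r})\leq\tfrac{1}{2\alpha}C_2^\sub(\mu)$. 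With this in hand the derivation mirrors \cref{thm:alphaCDR}: starting from the closed form \eqref{eq:optdivergence}, which writes $D_\alpha(\pi\,||\,\pi_{\alpha,r}^\opt)$ as a strictly decreasing function of $Z_{\alpha,r}/Z_\pi$ with $Z_{\alpha,r}=\int\lb\int\ell^\alpha\d\mu_{\perp|r}\rb^{1/\alpha}\d\mu_r$, I apply \eqref{eq:betaSI} with $\beta=\nicefrac{1}{\alpha}$ on each slice $\mu_{\perp|r}$ to the test function $f=\ell^\alpha$. Then $f^\beta=\ell$ and $\|\nabla\ln f\|_2^2=\alpha^2\|U_\perp^\trans\nabla\ln\ell\|_2^2$ (the intrinsic gradient on the affine slice being $U_\perp^\trans\nabla$), so the $\alpha$-dependent prefactors cancel and, after rearranging, the $\phi$-entropy bound reads $\lb\int\ell^\alpha\d\mu_{\perp|r}\rb^{1/\alpha}\geq\int\ell\d\mu_{\perp|r}-\tfrac{1-\alpha}{2}C_{1/\alpha}(\mu_{\perp|r})\,\E_{\mu_{\perp|r}}\!\left[\ell\,\|U_\perp^\trans\nabla\ln\ell\|_2^2\right]$. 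Integrating against $\mu_r$, bounding $C_{1/\alpha}(\mu_{\perp|r})\leq\tfrac{1}{2\alpha}C_2^\sub(\mu)$, collapsing $\mu_{\perp|r}\mu_r$ back to $\mu$ by the tower property, and using $\ell\,\d\mu=Z_\pi\,\d\pi$, yields $Z_{\alpha,r}/Z_\pi\geq 1-\tfrac{1-\alpha}{4\alpha}C_2^\sub(\mu)\,\E_{X\sim\pi}\!\left[\|U_\perp^\trans\nabla\ln\ell(X)\|_2^2\right]$. Combining with the Jensen bound $Z_{\alpha,r}/Z_\pi\geq0$ and substituting back into \eqref{eq:optdivergence} --- using that $s\mapsto(s^\alpha-1)/(\alpha(\alpha-1))$ is non-increasing for $\alpha\in(0,1)$ --- gives exactly $D_\alpha(\pi\,||\,\pi_{\alpha,r}^\opt)\leq\mcJ_\alpha\lb C_2^\sub(\mu)\,\E_{X\sim\pi}[\|U_\perp^\trans\nabla\ln\ell(X)\|_2^2]\rb$ with $\mcJ_\alpha$ the $(0,\nicefrac{1}{2}]$ branch of \eqref{eq:Jext}; since $C_2^\sub(\mu)=C_{\min\{1/\alpha;\,2\}}^\sub(\mu)$ on this range, this is the claim.

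The step that deserves the most care --- the real, if mild, obstacle --- is checking that \eqref{eq:betaSI} with $\beta>2$ may be used exactly as it is used in the proof of \cref{thm:alphaCDR}: for $\beta>2$ it is no longer a $\phi$-Sobolev inequality in the sense of \cref{def:phiSI} (the convexity of $-1/\phi_\beta''$ fails), so one must re-read that proof and confirm it invokes only the inequality \eqref{eq:betaSI} on each conditional slice, never any convexity property of $\phi_\beta$ beyond it. Everything else is bookkeeping: passing from $\mcJ_\alpha$ evaluated at $\tfrac{1}{2\alpha}C_2^\sub(\mu)$ times the relative Fisher information to the rescaled $\mcJ_\alpha$ of \eqref{eq:Jext} is the one-line identity $\tfrac{1-\alpha}{2}\cdot\tfrac{1}{2\alpha}=\tfrac{1-\alpha}{4\alpha}$, legitimate because $t\mapsto\mcJ_\alpha(t)$ is monotone non-decreasing so overestimating $C_{1/\alpha}^\sub(\mu)$ only loosens the bound in the admissible direction; and finiteness of the integrals defining $Z_{\alpha,r}$ and the relative Fisher information holds under the same smoothness hypotheses as in the $\alpha\in[\nicefrac{1}{2},1]$ case.
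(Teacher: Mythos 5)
Your proof reconstructs the argument the paper leaves implicit in the discussion preceding the theorem: for $\alpha\in(\nicefrac{1}{2},1]$ the claim is \cref{thm:alphaCDR} verbatim, and for $\alpha\in(0,\nicefrac{1}{2}]$ you replay that proof with the Beckner-monotonicity estimate $C_{1/\alpha}^\sub(\mu)\leq\tfrac{1}{2\alpha}C_2^\sub(\mu)$ and absorb the factor $\tfrac{1}{2\alpha}$ into the rescaled $\mcJ_\alpha$ of \eqref{eq:Jext} via $\tfrac{1-\alpha}{2}\cdot\tfrac{1}{2\alpha}=\tfrac{1-\alpha}{4\alpha}$; your bookkeeping is correct, including the check that the two branches of \eqref{eq:Jext} glue at $\alpha=\nicefrac{1}{2}$. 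Your remark that for $\beta>2$ inequality \eqref{eq:betaSI} is no longer a $\phi$-Sobolev inequality in the sense of \cref{def:phiSI} (since $-1/\phi_\beta''$ fails to be convex), but that the proof of \cref{thm:alphaCDR} only ever uses the inequality on each conditional slice and never any structural property of $\phi_\beta$ beyond it, is exactly the right subtlety to verify and does go through.
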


It is straightforward to check that~\eqref{eq:Jext} remains monotone non-decreasing even in the extended regime $\alpha < \nicefrac{1}{2}$; see also Figure~\ref{fig:Jalpha} for a visualization. As such, the equivalence~\eqref{eq:min_monotone} persists and therefore optimization for $U_r$ proceeds as previously discussed.
Moreover, provided the spectrum of the diagnostic matrix satisfies $\sum_{k>r} \lambda_k < \frac{4\alpha}{C^\sub_{\min\{\nicefrac{1}{\alpha};2\} }(1-\alpha)}$,
then the duality property $D_\alpha(\nu||\mu) = D_{1-\alpha}(\mu||\nu)$ of the $\alpha$-divergences allows us to certify dimension reduction with respect to the \emph{forward} and \emph{reverse} divergences simultaneously. (Even still, this precludes us from considering the \emph{reverse} KL divergence since this eigenvalue sum-bound goes to zero as $\alpha \to 0$.)
Remarkably, even in this setting we observe that the $U_r$ constructed from the eigenvectors of the diagnostic matrix continues to provide a certificate of optimality.

\begin{remark}
The monotonicity result of Lemma~\ref{lem:beckner_monotone} ensures that a measure satisfying a $\beta$-Sobolev inequality with $C_\beta(\mu)<\infty$ also satisfies a $\beta'$-Sobolev inequality for any $\beta'\geq \beta$.  {As \cite[\S 7.6.2]{Bakry2014} notes, however, an application of Jensen's inequality shows that the reverse implication is also true for all inequalities \textit{except} for $\beta = 1$. Accordingly, the $\beta$-Sobolev inequalities for $\beta \in (1,\infty)$ are in fact equivalent to the Poincar\'e inequality. Meanwhile, the log-Sobolev inequality ($\beta=1$) is exceptional since
the so-called Herbst argument demonstrates that only sub-Gaussian measures $\mu$ can satisfy this inequality~\cite{Bakry2014}.  This makes it sufficiently distinct from the Poincar\'e inequality since, while the measure $\d\mu \propto \exp(-\|x\|_1)\d x$ satisfies the latter, it evidently cannot satisfy the LSI on account of its slow tail decay. One of the anonymous reviewers brought to our attention that an alternative family of functional inequalities which genuinely interpolate between the LSI and the Poincar\'e inequality, in this sense of \emph{tail conditions}, are given by the Lata\l{}a--Oleskiewicz (LO) inequalities~\cite{Latala_Oleszkiewicz_2000} (see also \cite[\S 7.6.3]{Bakry2014}).
}
\end{remark}

\subsection{Extension to other $\phi$-divergences and distances}
\label{sec:extensionother}

The preceding discussion provides certifiable bounds for $\alpha$-divergences with $\alpha \in (0,1]$, but does not address whether further extensions beyond this interval are possible by using other modifications to the functional inequalities. Indeed, Bolley and Gentil also considered whether the $\beta$-Sobolev inequalities can be extended for $\beta < 1$. Unfortunately, they provide a simple counter-example for the interval $\beta \in (0,1)$ which indicates that inequalities of the form~\eqref{eq:betaSI} cannot hold for the Gaussian measure \cite[p.465]{Bolley_Gentil_2010}. Adapted to our context, this suggests that $\alpha$-divergences corresponding to $\alpha \in (1,\infty)$ cannot be majorized using this approach; notably, this includes the $\chi^2$-divergence.

We note that beyond the $\alpha$-divergences considered here, there are many alternative $\phi$-divergences which would be of interest in the context of dimension reduction. One strategy we propose is to leverage known inequalities which relate generic $\phi$-divergences to $\alpha$-divergences. For example, the total variation (TV) distance
commonly appears in statistical applications, and we can express
\[
\textsc{TV}(\pi\,||\,\pi^\opt_{TV,r}) \leq \textsc{TV}(\pi\,||\,\pi^\opt_{1,r}) \leq \sqrt{ \textsc{KL}(\pi, || \, \pi^\opt_{1,r})},
\]
where the first inequality follows by definition of the optimal profile function for the total variation divergence, which we have not characterized, and the second relation is commonly referred to as Pinsker's inequality.\footnote{One could alternatively invoke the globally non-vacuous (but asymptotically weaker) bound $\textsc{TV}(\pi, \nu) \leq \sqrt{1 - \exp(-\textsc{KL}(\pi||\nu))}$; see \cite{Canonne_2022} for details.} It immediately follows from Theorem~\ref{thm:alphaCDR} that our certificate of optimality for the KL divergence concurrently certifies
$
\textsc{TV}(\pi \, || \, \pi^\opt_{TV, r}) \leq \sqrt{\frac{1}{2} C^\sub_1(\mu) \sum_{k > r}^d \lambda_k }\,,
$
although with no understanding of when such inequalities are sharp. More generally, the joint-range result of \cite{Harremoes_Vajda_2011} characterizes when one $\phi$-divergence can bound another, and so in particular one can relate any $\phi$-divergence to the $\alpha$-divergences for $\alpha \in (0,1]$ (provided such an upper bound exists). It is unclear how to choose the $\alpha$ which provides the tightest certificate of optimality; for the TV divergence we conjecture that one should choose the $\alpha=\nicefrac{1}{2}$ divergence, although we have not done this computation.

\begin{remark}
The joint-range inequalities also provide comparisons between different $\alpha$-divergences. For example, a common joint-range relation between the KL divergence and the squared-Hellinger distance is given by $d^2_\textsc{Hell}(\nu, \mu) \leq \frac{1}{2}\textsc{KL}[\nu || \mu]$, which implies  that the KL certificate of optimality also concurrently certifies dimension reduction for the squared Hellinger distance. In fact, this recovers the bound of \cite{Cui_Tong_2021}, although we note this is suboptimal compared to the bounds derived directly from the functional inequalities as we did here.
\end{remark}

\section{An improved bound for linear feature maps}
\label{sec:generalimprove}

The previous section establishes the connection between the (subspace) $\beta$-Sobolev inequalities and dimension reduction for $\alpha$-divergences. Using this roadmap, we can benefit from recent improvements to the $\beta$-Sobolev inequalities by Bolley and Gentil \cite{Bolley_Gentil_2010} to obtain tighter majorized loss functions for each approximation~$\pi^\opt_{\alpha,r}$ for all $\alpha \in (\nicefrac{1}{2},1)$.
For ease of presentation we adopt the condensed notation
$
 \mu(f) = \int f \d\mu
$
for the expectation.

\subsection{An improved $\beta$-Sobolev inequality for~$\beta \in (1,2)$}
\label{sec:improvedSI}

Bolley and Gentil~ \cite[Corollary~10]{Bolley_Gentil_2010} provides the following refined bounds for the $\beta$-Sobolev inequality for~$\beta \in (1,2)$.
\begin{lemma}[Improved $\beta$-Sobolev Inequality]
\label{lem:betaSIimproved}
Suppose a measure~$\mu$ satisfies the $\beta$-Sobolev inequality with constant~$C_\beta(\mu)$ for some $\beta \in (1,2)$. Then, for that same $\beta$, it also satisfies the \emph{improved $\beta$-Sobolev inequality}
\begin{equation}
\label{eq:betaSIimproved}
\frac{1}{2(\beta-1)^2} \lb \mu(f^\beta) - \mu(f)^{2\beta -2} \mu(f^\beta)^{\frac{2}{\beta}-1} \rb \leq \frac{C_\beta(\mu)}{2}\,\mu(f^\beta \, \|\nabla \ln f \|_2^2)
\end{equation}
for any sufficiently smooth positive functions~$f:\mathbb{R}^d \rightarrow \mathbb{R}_{\geq 0}$.
\end{lemma}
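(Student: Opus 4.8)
The plan is to read off \eqref{eq:betaSIimproved} as the specialization of \cite[Corollary~10]{Bolley_Gentil_2010} to the Amari nonlinearity $\phi=\phi_\beta$ from \eqref{eq:amari}. The bookkeeping ingredient is that each derivative of $\phi_\beta$ is, up to a constant, a monomial: $\phi_\beta''(t)=t^{\beta-2}$, $\phi_\beta'''(t)=(\beta-2)\,t^{\beta-3}$ and $\phi_\beta''''(t)=(\beta-2)(\beta-3)\,t^{\beta-4}\ge0$ on $(1,2)$. Consequently $\mu(\phi_\beta''(f)\,\|\nabla f\|_2^2)=\mu(f^\beta\,\|\nabla\ln f\|_2^2)$ — matching the right-hand side of \eqref{eq:betaSIimproved} — and $\Ent^{\phi_\beta}_\mu(f)=\tfrac{1}{\beta(\beta-1)}\big(\mu(f^\beta)-\mu(f)^\beta\big)$, the left-hand side of the plain inequality \eqref{eq:betaSI}. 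I would first note that the sharpening is genuine, not a mere constant tweak: linearizing at a constant, $f=1+\varepsilon h$, both \eqref{eq:betaSI} and \eqref{eq:betaSIimproved} reduce at order $\varepsilon^2$ to the same Poincar\'e inequality $\tfrac12\Var_\mu(h)\le\tfrac{C_\beta(\mu)}{2}\,\mu(\|\nabla h\|_2^2)$, yet \eqref{eq:betaSIimproved} cannot follow from \eqref{eq:betaSI} by any scalar comparison, since for spread-out $f$ the ratio of the two bracketed differences tends to $1$ while the prefactor has been enlarged by $\tfrac{\beta}{2(\beta-1)}>1$, so the improved left-hand side eventually overtakes the plain one.

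For the argument itself I would recall Bolley and Gentil's semigroup interpolation. Take the $\mu$-reversible diffusion semigroup $(P_t)_{t\ge0}$ (carr\'e du champ $\Gamma(h)=\|\nabla h\|_2^2$), set $I(t)=\mu\big(\phi_\beta''(P_tf)\,\Gamma(P_tf)\big)$ and $m_\beta(t)=\mu\big((P_tf)^\beta\big)$. Then $-\tfrac{d}{dt}\mu(\phi_\beta(P_tf))=I(t)$ gives $\Ent^{\phi_\beta}_\mu(f)=\int_0^\infty I(t)\,dt$, while $m_\beta'(t)=-\beta(\beta-1)\,I(t)$, $\mu(P_tf)=\mu(f)$ is conserved, and $m_\beta(\infty)=\mu(f)^\beta$. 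Differentiating $I$ through the $\Gamma_2$/Bochner identity and discarding all nonnegative remainders yields $I'(t)\le-\tfrac{2}{C_\beta(\mu)}I(t)$ and, after integrating, the plain inequality $\int_0^\infty I\le\tfrac{C_\beta(\mu)}{2}I(0)$. The improvement comes from \emph{retaining} the nonnegative fourth-order remainder, a term of the form $\mathrm{const}\cdot\mu\big(\phi_\beta''''(P_tf)\,\Gamma(P_tf)^2\big)$; because $\phi_\beta''$ and $\phi_\beta''''$ are proportional monomials, Cauchy--Schwarz bounds this below by a fixed multiple of $I(t)^2/m_\beta(t)$, which closes the estimate into a Riccati-type differential inequality coupling $I$ and $m_\beta$. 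Integrating that system from $0$ to $\infty$, and using the conserved $\mu(P_tf)=\mu(f)$ together with the endpoint $m_\beta(\infty)=\mu(f)^\beta$, is precisely what turns $\mu(f)^\beta$ into the geometric-mean correction $\mu(f)^{2\beta-2}\mu(f^\beta)^{2/\beta-1}$ and $\tfrac{1}{\beta(\beta-1)}$ into $\tfrac{1}{2(\beta-1)^2}$.

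The main obstacle is this $\Gamma_2$/Riccati step: isolating the correct nonnegative term in the Bochner identity, lower-bounding it by a multiple of $I(t)^2/m_\beta(t)$ with the monomial exponents lined up, and carrying out the resulting closed-form integration so that exactly the exponents $2\beta-2$ and $2/\beta-1$ emerge. A secondary task is to confirm that $\phi_\beta$ on $(1,2)$ meets all the structural hypotheses invoked in \cite{Bolley_Gentil_2010} — convexity of $\phi_\beta$ and of $-1/\phi_\beta''$, plus the extra ratio condition their improved inequality needs (automatic here by homogeneity of the power) — and to pin down whether the statement is used with the plain $\beta$-Sobolev inequality as input or with a Bakry--\'Emery curvature lower bound $\rho$ playing the role of $1/C_\beta(\mu)$; the latter, and the regularity of $(P_t)$ needed above, are available for the reference measures of interest via the log-concave-up-to-bounded-perturbation class of Proposition~\ref{prop:CphiSubFinite}. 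Should a self-contained route be preferred, I would instead attempt to derive \eqref{eq:betaSIimproved} from a Rothaus-type combination of \eqref{eq:betaSI} with the Poincar\'e inequality it induces by linearization, though I expect the exponent bookkeeping to be cleanest along the semigroup route.
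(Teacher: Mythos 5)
The paper does not prove this lemma at all; the sentence immediately preceding it explicitly attributes \eqref{eq:betaSIimproved} to \cite[Corollary~10]{Bolley_Gentil_2010} and the paper relies on it as a black box. Your opening paragraph — specialize Bolley and Gentil's general $\phi$-entropy inequality to $\phi=\phi_\beta$ and verify the bookkeeping $\phi_\beta''(t)=t^{\beta-2}$, hence $\mu(\phi_\beta''(f)\,\|\nabla f\|_2^2)=\mu(f^\beta\|\nabla\ln f\|_2^2)$ — is exactly the content of the paper's citation, and that part of your proposal is correct. Everything after that is a reconstruction of Bolley and Gentil's own proof, which the paper never attempts.

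If the goal were a self-contained proof, your sketch has two genuine gaps. First, you explicitly leave the core step unfinished: deriving the Riccati-type coupled inequality between $I(t)=\mu(\phi_\beta''(P_tf)\Gamma(P_tf))$ and $m_\beta(t)=\mu((P_tf)^\beta)$ from the $\Gamma_2$ identity, and then integrating it so that the exponents $2\beta-2$ and $\tfrac{2}{\beta}-1$ actually fall out. Without that computation the lemma is not proved; the plausibility argument (the proportionality of $\phi_\beta''$ and $\phi_\beta''''$ as monomials, Cauchy--Schwarz giving $I^2/m_\beta$) is a reasonable guess at the mechanism but is not a derivation. Second — and this is the concern you rightly flag but do not resolve — the semigroup/$\Gamma_2$ route you outline is a \emph{local} argument that requires a Bakry--\'Emery $\mathrm{CD}(\rho,\infty)$ curvature lower bound, with $1/\rho$ then playing the role of $C_\beta(\mu)$. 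The lemma as stated in the paper takes the mere validity of the plain $\beta$-Sobolev inequality \eqref{eq:betaSI} with constant $C_\beta(\mu)$ as its hypothesis, which is a priori weaker than $\mathrm{CD}(1/C_\beta(\mu),\infty)$. If Bolley and Gentil's Corollary~10 is phrased under the curvature hypothesis, then the lemma as stated here is a slight overstatement of what is cited; this is harmless for the paper because every reference measure it actually applies the lemma to (e.g.\ via Proposition~\ref{prop:CphiSubFinite}) satisfies the required log-concavity-up-to-bounded-perturbation and hence the curvature condition, but a self-contained proof along your lines would inherit the stronger hypothesis rather than the one written in the lemma. Your fallback suggestion of a Rothaus-type combination of \eqref{eq:betaSI} with its linearized Poincar\'e inequality does not obviously repair this either, since it would need to produce the specific geometric-mean correction $\mu(f)^{2\beta-2}\mu(f^\beta)^{2/\beta-1}$ and you give no indication of how.
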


This improves upon the traditional $\beta$-Sobolev inequality since the left-hand side of~\eqref{eq:betaSIimproved} dominates the left-hand side of~\eqref{eq:betaSI}, which can be verified by applying the weighted arithmetic-mean geometric-mean inequality. Moreover, note that these inequalities continue to interpolate the Poincar\'{e} inequality and the log-Sobolev inequality at the endpoints of the interval.

Bolley and Gentil~\cite[Proposition 11]{Bolley_Gentil_2010} also prove a monotonicity result analogous to Lemma~\ref{lem:beckner_monotone}, which states:
\begin{lemma} For any probability measure~$\mu$ on $\R^d$ and a positive valued function $g: \R^d \to \R$, the mapping
\[
\beta \mapsto \frac{\beta}{2(\beta-1)^2} \lb \mu(g^2) - \mu(g^\frac{2}{p})^p \lb \frac{\mu(g^2)}{\mu(g^\frac{2}{p})^p} \rb^{\frac{2}{p}-1} \rb
\]
is non-increasing for all $1 < \beta < \infty$.
\end{lemma}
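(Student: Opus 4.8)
The plan is to reduce the asserted monotonicity, by a change of variables, to the elementary inequality $e^{x}\ge 1+x$. Setting $f=g^{\beta/2}$ in \eqref{eq:betaSIimproved} shows that the map in the lemma equals $\beta$ times the left-hand side of the improved $\beta$-Sobolev inequality, namely
\[
\Psi(\beta)\ :=\ \frac{\beta}{2(\beta-1)^2}\lb \mu(g^2)-\mu(g^{2/\beta})^{2\beta-2}\,\mu(g^2)^{2/\beta-1}\rb .
\]
Both terms in the bracket are homogeneous of degree $2$ in $g$, so after rescaling I may assume $\mu(g^2)=1$ (the case of constant $g$ being trivial). Introducing $N(\beta):=2(\beta-1)\log\mu(g^{2/\beta})$, this reads $\Psi(\beta)=\tfrac{\beta}{2(\beta-1)^2}\lb 1-e^{N(\beta)}\rb$, and the power-mean inequality gives $\mu(g^{2/\beta})\le\mu(g^2)^{1/\beta}=1$ for $\beta\ge1$, hence $N\le 0$ and $\Psi\ge 0$.

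Next I would differentiate. Since $\tfrac{\d}{\d\beta}\tfrac{\beta}{2(\beta-1)^2}=-\tfrac{\beta+1}{2(\beta-1)^3}$, a short computation gives $\Psi'(\beta)=-\tfrac{\beta+1}{2(\beta-1)^3}(1-e^{N})-\tfrac{\beta\,e^{N}N'(\beta)}{2(\beta-1)^2}$, so (multiplying by $2(\beta-1)^3>0$) the inequality $\Psi'\le 0$ is equivalent to
\[
(\beta+1)\lb 1-e^{N}\rb + \beta(\beta-1)\,e^{N}N'(\beta)\ \ge\ 0 .
\]
The only remaining input is a lower bound on $N'$. For this I would use that $q\mapsto\Lambda(q):=\log\mu(g^q)$ is convex (a log-moment-generating function in $\log g$) with $\Lambda(2)=0$: its supporting line at $q=2/\beta$, evaluated at $q=2$, yields $\Lambda'(2/\beta)\cdot\tfrac{2(\beta-1)}{\beta}\le -\Lambda(2/\beta)$. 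Combining this with $N=2(\beta-1)\Lambda(2/\beta)$ and $\tfrac{\d}{\d\beta}\Lambda(2/\beta)=-\tfrac{2}{\beta^2}\Lambda'(2/\beta)$ — and noting the coefficient $-\tfrac{4(\beta-1)}{\beta^2}$ is negative, so the inequality reverses — gives $N'(\beta)\ \ge\ \tfrac{2(\beta+1)}{\beta}\,\Lambda(2/\beta)$.

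Substituting this bound into the previous display (legitimate because $\beta(\beta-1)e^{N}>0$) and using $N=2(\beta-1)\Lambda(2/\beta)$, the explicit rational prefactors cancel and the left-hand side is bounded below by $(\beta+1)\bigl[\,1-e^{N}(1-N)\,\bigr]$. Since $\beta+1>0$, it then remains to verify $e^{N}(1-N)\le 1$, i.e. $e^{-N}\ge 1-N$, which is exactly $e^{x}\ge 1+x$ with $x=-N$. This completes the argument, and indeed shows $\Psi'\le 0$ on all of $(1,\infty)$.

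The main obstacle is not conceptual depth but sign bookkeeping: $\Lambda(2/\beta)$, $N$, and $\tfrac{\d}{\d\beta}\tfrac{\beta}{2(\beta-1)^2}$ are all nonpositive, the convexity bound on $\Lambda'$ must be scaled by a negative factor, and the assorted prefactors must cancel exactly for the collapse to $e^{x}\ge 1+x$ to go through, so every step needs careful tracking of signs. The only technical points beyond that are routine: justifying differentiation under the integral defining $\Lambda(q)$ and passing from smooth, bounded, bounded-below $g$ to the general case by truncation (standard for functional inequalities of this type), and, if desired, noting continuity of $\Psi$ at $\beta=1$ (where it recovers the log-Sobolev entropy) and as $\beta\to\infty$ — though only the open interval is needed for the statement, paralleling the role of Lemma~\ref{lem:beckner_monotone} in the non-improved case.
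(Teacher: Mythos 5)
The paper does not prove this lemma; it is cited directly from Bolley and Gentil~\cite[Proposition~11]{Bolley_Gentil_2010}, so there is no paper-internal proof to compare against. Your argument is correct and self-contained, and it is the kind of proof one would expect: reduce, via the log-moment-generating function $\Lambda(q)=\log\mu(g^q)$ and its convexity, to the elementary inequality $e^{x}\ge 1+x$. In detail: the homogeneity reduction to $\mu(g^2)=1$ is legitimate (both terms in the bracket have degree $2$ in $g$); the rewriting $\Psi(\beta)=\frac{\beta}{2(\beta-1)^2}\bigl(1-e^{N(\beta)}\bigr)$ with $N=2(\beta-1)\Lambda(2/\beta)$ is an exact identity; the sign $N\le 0$ follows from Jensen/power means; the differentiation and the equivalence $\Psi'\le 0 \iff (\beta+1)(1-e^{N})+\beta(\beta-1)e^{N}N'\ge 0$ are correct; the convexity bound $\Lambda'(2/\beta)\cdot\frac{2(\beta-1)}{\beta}\le -\Lambda(2/\beta)$ is the supporting-line inequality for $\Lambda$ at $q_0=2/\beta$ evaluated at $q=2$; and the algebra collapsing the lower bound to $(\beta+1)\bigl(1-e^{N}(1-N)\bigr)\ge 0$ checks out. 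The final inequality is exactly $e^{-N}\ge 1-N$.

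One cosmetic slip: you write ``setting $f=g^{\beta/2}$'' to connect $\Psi$ to the left-hand side of \eqref{eq:betaSIimproved}, but the correct substitution is $f=g^{2/\beta}$ (so that $f^\beta=g^2$ and $\mu(f)=\mu(g^{2/\beta})$). This mirrors a typo in the surrounding text of the paper (which says ``$g=f^{2/\beta}$'') and has no effect on your argument, since the expression you then write down for $\Psi(\beta)$ is already identical to the lemma's map after simplifying the ratio. The only other remaining points are the routine integrability/differentiability hypotheses on $q\mapsto\mu(g^q)$, which you flag and which are standard for this class of functional inequalities.
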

This lemma relates to~\eqref{eq:betaSIimproved} through the Beckner-type test functions $g = f^{\frac{2}{\beta}}$. As before, the main implication is that if a measure $\mu$ satisfies an improved $\beta$-Sobolev inequality with constant $C_\beta(\mu)$, then it also satisfies the inequality for any $\beta' > \beta$ provided we adapt the constant to $C_{\beta'}(\mu) \leq \frac{\beta'}{\beta} C_{\beta}$. While this will not provide sharp constants in general, its main utility is to extend the results of Lemma~\ref{lem:betaSIimproved} beyond $\beta > 2$, provided we assume our reference measure satisfies the Poincar\'e inequality with $\beta=2$.

\subsection{Derivation of improved upper bound for $\nicefrac{1}{2} < \alpha < 1$}
We leverage this new functional inequality to derive an alternative majorization for dimension reduction with $\alpha$-divergences.
\begin{theorem}
\label{thm:alphaCDRimp}
Let $\pi$ and $\mu$ be two probability measures on $\R^d$ with $\d\pi(x) \propto \ell(x)\d\mu(x)$ for some smooth $\ell:\mathbb{R}^d\rightarrow\mathbb{R}_{\geq 0}$.
Then, for any $0<\alpha\leq1$ and for any matrix $U_r\in\R^{d\times r}$ with $r\leq d$ orthogonal columns, the measure $\pi_{\alpha,r}^\opt(x)$ as in \eqref{eq:piopt} with $\varphi_r(x)=U_r^\trans x$ satisfies
\begin{equation}
\label{eq:alphaCDRimp}
D_\alpha(\pi \,||\, \pi^\opt_{\alpha,r}) \leq \mcJ_\alpha^\flat \lb C_{\min\{\frac{1}{\alpha};\,2\}}^\sub(\mu) \,\mathbb{E}_\pi\left[ \|U_\perp^\trans \nabla_x \ln \ell \|_2^2 \right]\rb ,
\end{equation}
where $U_\perp\in\R^{d\times(d-r)}$ is any orthogonal completion of $U_r$ and where
\begin{equation}
\label{eq:Jflat}
\mcJ_\alpha^\flat(t) =
\begin{cases}
\dfrac{1}{\alpha(\alpha-1)} \left[ \lb 1 - \dfrac{(1-\alpha)^2}{2\alpha} t \rb_+^\frac{\alpha}{2(1-\alpha)} - 1 \right]
& \alpha \in (0, \nicefrac{1}{2}], \\[10pt]
\dfrac{1}{\alpha(\alpha-1)} \left[ \lb 1 - (1-\alpha)^2 t \rb_+^\frac{\alpha}{2(1-\alpha)} - 1 \right]
& \alpha \in (\nicefrac{1}{2}, 1].
\end{cases}
\end{equation}
\end{theorem}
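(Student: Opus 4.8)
The plan is to re-run the proof of Theorem~\ref{thm:alphaCDR}, replacing the ordinary $\beta$-Sobolev inequality by the improved one of Lemma~\ref{lem:betaSIimproved}, so that its sharper left-hand side is carried through the disintegration. By the Pythagorean identity of Theorem~\ref{thm:pythagorean} and its consequence \eqref{eq:optdivergence}, it suffices to lower bound the normalizing-constant ratio $Z_{\alpha,r}/Z_\pi$, since $D_\alpha(\pi\,\|\,\pi_{\alpha,r}^\opt)=\tfrac{1}{\alpha(\alpha-1)}\big((Z_{\alpha,r}/Z_\pi)^\alpha-1\big)$ with $\tfrac{1}{\alpha(\alpha-1)}<0$. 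I would first settle the range $\alpha\in(\nicefrac{1}{2},1)$, in which $\beta:=\nicefrac{1}{\alpha}\in(1,2)$ so that Lemma~\ref{lem:betaSIimproved} applies directly; the endpoint $\alpha=1$ is the logarithmic-Sobolev / KL bound of Theorem~\ref{thm:alphaCDR} (where $\mcJ_1^\flat(t)=\tfrac{1}{2}t$), and $\alpha\in(0,\nicefrac{1}{2}]$ is treated afterward.

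Fix $\alpha\in(\nicefrac{1}{2},1)$ and assume $C^\sub_{1/\alpha}(\mu)<\infty$ (otherwise the claimed bound is vacuous). For each $\theta_r$, the conditional measure $\mu_{\perp|r}(\cdot\,|\,\theta_r)$, supported on the affine fibre $\varphi_r^{-1}(\theta_r)$ with intrinsic gradient $U_\perp^\trans\nabla(\cdot)$, satisfies the $(\nicefrac{1}{\alpha})$-Sobolev inequality with constant $\le C^\sub_{1/\alpha}(\mu)$, hence by Lemma~\ref{lem:betaSIimproved} the improved $(\nicefrac{1}{\alpha})$-Sobolev inequality with the same constant. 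Applying the latter with $\beta=\nicefrac{1}{\alpha}$ and test function $f=\ell^\alpha$ (so $f^\beta=\ell$, $\mu_{\perp|r}(f)=\mu_{\perp|r}(\ell^\alpha)=(\ell_{\alpha,r}^\opt)^\alpha$, and $\|\nabla\ln f\|_2^2=\alpha^2\|\nabla\ln\ell\|_2^2$) and cancelling $\alpha^2$ gives, with $w:=\ell_{\alpha,r}^\opt(\theta_r)/\mu_{\perp|r}(\ell)\in(0,1]$ (the bound $w\le1$ being Jensen),
\[
\tfrac{1}{(1-\alpha)^2}\,\mu_{\perp|r}(\ell)\,\big(1-w^{\,2(1-\alpha)}\big)\ \le\ C^\sub_{1/\alpha}(\mu)\,\mu_{\perp|r}\!\big(\ell\,\|U_\perp^\trans\nabla\ln\ell\|_2^2\big),
\]
where I used $\mu_{\perp|r}(\ell^\alpha)^{2/\alpha-2}\mu_{\perp|r}(\ell)^{2\alpha-1}=w^{2(1-\alpha)}\mu_{\perp|r}(\ell)$. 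Integrating in $\theta_r$ against $\mu_r$, normalizing by the probability measure $\d\nu:=\mu_{\perp|r}(\ell)\,\d\mu_r/Z_\pi$, and using $\int\mu_{\perp|r}(\ell\|U_\perp^\trans\nabla\ln\ell\|_2^2)\,\d\mu_r=Z_\pi\,\E_\pi[\|U_\perp^\trans\nabla\ln\ell\|_2^2]$ and $\int w\,\d\nu=Z_{\alpha,r}/Z_\pi$, one obtains $\int w^{2(1-\alpha)}\,\d\nu\ge 1-(1-\alpha)^2 C^\sub_{1/\alpha}(\mu)\,\E_\pi[\|U_\perp^\trans\nabla\ln\ell\|_2^2]$. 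Since $2(1-\alpha)<1$, the map $t\mapsto t^{2(1-\alpha)}$ is concave, so Jensen yields $(Z_{\alpha,r}/Z_\pi)^{2(1-\alpha)}=\big(\int w\,\d\nu\big)^{2(1-\alpha)}\ge\int w^{2(1-\alpha)}\,\d\nu$; raising to the power $\tfrac{\alpha}{2(1-\alpha)}$ (using $(\cdot)_+$ when the right-hand side is negative, which only renders the bound vacuous) and inserting into $D_\alpha=\tfrac{1}{\alpha(\alpha-1)}\big((Z_{\alpha,r}/Z_\pi)^\alpha-1\big)$ produces exactly $\mcJ_\alpha^\flat\big(C^\sub_{1/\alpha}(\mu)\,\E_\pi[\|U_\perp^\trans\nabla\ln\ell\|_2^2]\big)$.

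For $\alpha\in(0,\nicefrac{1}{2}]$ one has $\beta=\nicefrac{1}{\alpha}\ge2$, and I would invoke Lemma~\ref{lem:betaSIimproved} through the Beckner-type monotonicity extension stated after it, namely $C^\sub_{1/\alpha}(\mu)\le\tfrac{1}{2\alpha}C^\sub_2(\mu)$; the identical computation (now with $C^\sub_{\min\{1/\alpha;2\}}(\mu)=C^\sub_2(\mu)$) delivers $\int w^{2(1-\alpha)}\,\d\nu\ge 1-\tfrac{(1-\alpha)^2}{2\alpha}C^\sub_2(\mu)\,\E_\pi[\|U_\perp^\trans\nabla\ln\ell\|_2^2]$, but the concave-Jensen step is unavailable because $2(1-\alpha)\ge1$. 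I expect this to be the main obstacle: the precise exponent $\tfrac{\alpha}{2(1-\alpha)}$ and coefficient $(1-\alpha)^2$ of $\mcJ_\alpha^\flat$ are produced by that mixed-moment Jensen step, which needs $2(1-\alpha)\le1$, i.e.\ $\alpha\ge\nicefrac{1}{2}$ — precisely the range where the improved inequality genuinely sharpens Theorem~\ref{thm:alphaCDR}. To cover $\alpha\le\nicefrac{1}{2}$, I would instead show $\mcJ_\alpha^\flat\ge\mcJ_\alpha$ there (where $\mcJ_\alpha$ is the majorant of Theorem~\ref{thm:alphaCDRext}); after raising to a suitable power this reduces to the elementary estimate $(1-v)_+^{\lambda}\ge(1-\lambda v)_+$ for $\lambda=2(1-\alpha)\ge1$, which holds by Bernoulli's inequality, so the bound for $\alpha\le\nicefrac{1}{2}$ follows from Theorem~\ref{thm:alphaCDRext}.
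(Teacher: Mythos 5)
Your proof for $\alpha\in(\nicefrac12,1)$ is essentially the paper's own proof, phrased differently: the paper applies H\"older's inequality to the mixed-moment term $\int \mu_{\perp|r}(f)^{2\beta-2}\mu_{\perp|r}(f^\beta)^{2/\beta-1}\,\d\mu_r$ with exponents $p=\beta/(2\beta-2)$ and $1/q=2/\beta-1$, which is exactly your step of reweighting to $\d\nu=\mu_{\perp|r}(\ell)\,\d\mu_r/Z_\pi$ and applying Jensen for the concave map $t\mapsto t^{2(1-\alpha)}$. (Hölder with a two-factor integrand is equivalent to Jensen with the tilted probability measure, as you implicitly use.) The explicit introduction of $w=\ell^\opt_{\alpha,r}/\mu_{\perp|r}(\ell)$ is a clean normalization that makes the disintegration transparent, but the inequality is the same one.

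Where your write-up is actually more careful than the paper's is the range $\alpha\in(0,\nicefrac12]$. You correctly identify that the Jensen/H\"older step needs $2(1-\alpha)\le 1$, equivalently $p\ge1$; for $\beta=1/\alpha>2$ one has $p<1$ (and $1/q<0$), and the inequality no longer holds in the direction needed. The paper's appendix proof asserts that the $\alpha\le\nicefrac12$ case ``is derived identically, except using the bound $C_{1/\alpha}^\sub(\mu)\le\tfrac{1}{2\alpha}C_2^\sub(\mu)$,'' but swapping the constant does not repair the H\"older step, so as written that sentence leaves a genuine gap. Your fix — observe that $\mcJ_\alpha^\flat\ge\mcJ_\alpha$ on $(0,\nicefrac12]$ via $(1-s)_+^\lambda\ge(1-\lambda s)_+$ for $\lambda=2(1-\alpha)\ge1$ (Bernoulli), and then invoke Theorem~\ref{thm:alphaCDRext} — is correct and is precisely the comparison the paper establishes in its Appendix~\ref{sec:comparison}, so the theorem statement is safe. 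Incidentally this also clarifies that on $(0,\nicefrac12]$ the ``improved'' majorant $\mcJ_\alpha^\flat$ is in fact \emph{weaker} than $\mcJ_\alpha$, so no refinement is actually gained there; the refinement is only in $(\nicefrac12,1)$, which is exactly the range where your Jensen step applies.

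One small point worth flagging: your choice $f=\ell^\alpha$ (rather than $f=(Z_\pi^{-1}\ell)^\alpha$) is fine because both sides of the improved $\beta$-Sobolev inequality scale homogeneously of degree $\beta$ in $f$, so the normalizing constant drops out; you should state that invariance, or carry the $Z_\pi^{-1}$ through as the paper does.
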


We refer to $\mcJ_\alpha^\flat$ as the \emph{improved majorized loss function} for dimension reduction with $\alpha$-divergences for $\alpha \in (\nicefrac{1}{2}, 1)$. Although Lemma~\ref{lem:betaSIimproved} provides a uniform improvement over the $\beta$-Sobolev inequality, in deriving $\mcJ_\alpha^\flat$ we coarsened the inequality by applying H\"older's inequality. Nevertheless, we show in the Appendix \ref{sec:comparison} that $\mcJ_\alpha^\flat(t) \leq \mcJ_\alpha(t)$ uniformly for all $\alpha \in [\nicefrac{1}{2},1]$; see also Figure~\ref{fig:Jalpha_vs_J}. Note, however, in contrast we have $\mcJ_\alpha(t) < \mcJ_\alpha^\flat(t)$ for $0 < \alpha < \nicefrac{1}{2}$. Both bounds coincide for $\alpha = \nicefrac{1}{2}$, and plotting these curves suggests that
$\lim_{\alpha \to 0}\mcJ^\flat_\alpha(t) = \mcJ_1(t)$, but we have not proven this equivalence formally (this result is expected since the improved $\beta$-Sobolev inequalities interpolate the log-Sobolev inequality, which was used to derive $\mcJ_1$).

\begin{figure}[h!]
\begin{centering}
\includegraphics[width=0.9\textwidth]{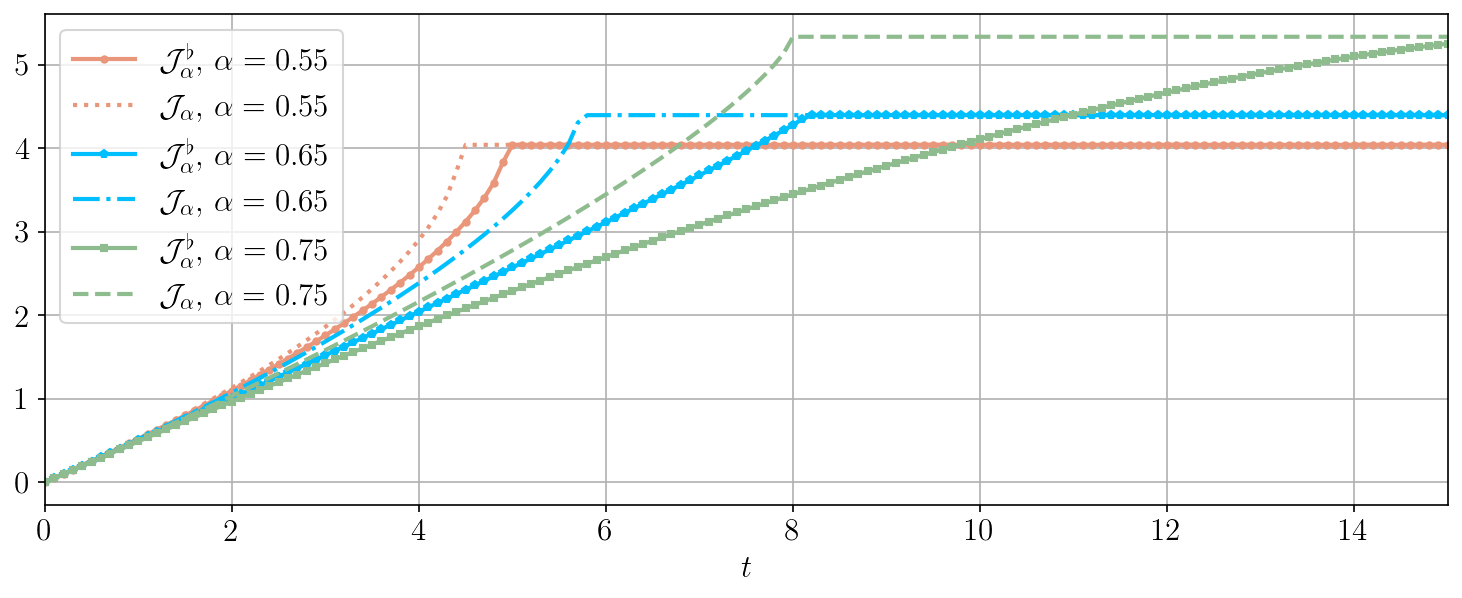}
\end{centering}
\caption{\label{fig:Jalpha_vs_J} Comparison between the improved majorized loss function  $\mcJ^\flat_\alpha(t)$ in~\eqref{eq:Jflat} and the majorized loss function $\mcJ_\alpha(t)$ in~\eqref{eq:defJalpha} for several choices of $\alpha > \nicefrac{1}{2}$.}
\end{figure}

\section{Application to Bayesian inverse problems}
\label{sec:bayesian}

In this section, we specialize our results to the setting of Bayesian inverse problems \cite{stuartActa,KaipioSomersalo}, with the parameter $X$ endowed with a prior $\mu$ on $\R^d$ and the data $Y$ taking values in $\R^m$.
The statistical model for the data is described by a conditional measure $\d\rho^x(y)\propto \ell^y(x) \d y$, where $x\mapsto \ell^y(x)$ is the likelihood function.
Given a realization $y$ of the data, the posterior measure is given by $\d\pi^y(x) \propto \ell^y(x)\d\mu(x)$.
As before, we can consider an approximation $\d\widetilde\pi_r^y(x) \propto \widetilde\ell_r^y \left ((U^y_r)^\trans x \right ) \d\mu(x)$,
which presumes that data are informative only for the reduced parameters~$(U^y_r)^\trans X$. The results from the previous sections can be directly applied in this setting, and yield a posterior approximation $\widetilde\pi_r^y$ where both the feature map $U_r^{y}$ and the associated optimal profile $\ell_{r,\alpha}^{y,\opt}$ depend on $y$.

Alternatively, the idea proposed in \cite{Cui_Zahm_2021} is to seek a feature map which permits one to control the error \emph{averaged} over realizations of the data.
In other words, we consider the following variation of the dimension reduction problem
\begin{equation}
\label{eq:bayesloss}
\min_{\substack{V_r \in \R^{d \times r} \\ V_r^\trans V_r = I_r}} \, \E_{Y\sim\rho} \left[ \, \min_{\widetilde\ell_r^Y:\,\R^r \to \R_{\geq 0}} D_\alpha(\pi^Y \,||\, \widetilde\pi^Y_{r}) \right] \quad \text{for }\d\widetilde\pi_r^y(x) \propto \widetilde\ell_r^y(V_r^\trans x) \d\mu(x).
\end{equation}
The primary difference compared to~\eqref{eq:loss} is that the optimality of~$V_r$ is considered after taking an expectation over the data $Y\sim\rho$, where $\d\rho(y) = \left (\int \rho^x(y)\d\mu(x) \right ) \d y $ is the marginal distribution of the data. Proceeding as before, we can apply Theorem~\ref{thm:pythagorean} since the interior objective $\widetilde\ell^Y_r \mapsto D_\alpha(\pi^Y||\widetilde\pi_r^Y)$ remains identical. Thus, Theorem \ref{thm:alphaCDR} permits to derive the upper bound
\begin{equation}
\label{eq:directavgbound}
\E_{Y\sim\rho} \left[ D_\alpha(\pi^Y \,||\,\widetilde\pi^{Y,\opt}_{\alpha,r}) \right]\leq \E_{Y\sim\rho}\left[ \mathcal{J}_\alpha\lb C_{1/\alpha}^\sub(\mu) \, \E_{X \sim \pi^Y}\| V_r^\trans \nabla_x \ln \ell^Y(x) \|_2^2 \rb \right],
\end{equation}
for any $V_r \in \R^{d \times r}, V_r^\trans V_r = I_r$.
The main complication with~\eqref{eq:directavgbound} is that this majorization is non-trivial to optimize. Although Riemannian gradient methods can be applied, without additional assumptions there are no provable guarantees that such algorithms produce the globally optimal subspace. An exception to this is when $\alpha=1$, as $\E_Y[\mathcal{J}_1(\cdot)] = \mathcal{J}_1(\E_Y [\cdot])$, and minimizing this shows the optimal $V_r$ to be the leading eigenvectors of the \emph{data-free} diagnostic matrix $H_\text{DF} = \E_Y[H]$; this was first noted by~\cite{Cui_Zahm_2021}. While this commutation does not hold in general, one might expect to apply Jensen's inequality to obtain the bound ``$\E_Y \mathcal{J}_\alpha(\cdot) \leq \mathcal{J}_\alpha(\E_Y[\cdot])$''. Unfortunately, we show in the Appendix \ref{sec:comparison} that $\mathcal{J}_\alpha$ in~\eqref{eq:defJalpha} is convex for $\alpha \in (0,1)$, whereas this relation requires \emph{concavity}.

Serendipitously, our improved bound $\mcJ^\flat_\alpha$ is concave for $\alpha \in [\nicefrac{2}{3}, 1]$; see, e.g., Figure~\ref{fig:Jalpha_vs_J}, as well as the Appendix \ref{sec:comparison} for a formal proof. Conversely, when $\alpha < \nicefrac{2}{3}$ we have the crude upper bound $\mcJ_\alpha(t) \leq \frac{t}{2\alpha}$, for $\alpha \in (0,1]$, which is trivially concave. Together, this allows us to pursue the strategy outlined above. We formalize this result with the following theorem.

\begin{theorem}
\label{corr:datafreebound}
Let $\pi^Y$ and $\mu$ be two probability measures on $\R^d$ with $\d\pi^Y(x)\propto \ell^Y(x)\d\mu(x)$ for some smooth integrable function $\ell^Y:\mathbb{R}^d\rightarrow\mathbb{R}_{\geq 0}$ which depends on a random variable $Y$.
Consider the probability measure
\begin{equation}
\d\pi^{Y,\opt}_{\alpha, r} \propto \ell^{Y,\opt}_{\alpha,r}(V_r^\trans x) \,\d\mu(x)
\end{equation}
for any matrix $V_r \in \R^{d \times r}$ with $r \leq d$ orthogonal columns.
If~$\mu$ satisfies a subspace $\beta$-Sobolev inequality, then we have for all $\alpha \in (0,1]$ the inequality
\begin{equation}
\label{eq:boundbayesloss}
\E_{Y\sim\rho} \left[ D_\alpha(\pi^Y || \pi^{Y,\opt}_{\alpha,r}) \right] \leq
\mcJ_\alpha^\text{DF} \lb  C^\sub_{\min\{\frac{1}{\alpha};2\}}(\mu)\,\E_{X \sim \mu} \left[ \E_{Y\sim\rho^X} [\,\|V_\perp^\trans \nabla \ln \ell^Y(X)\|_2^2] \right] \rb,
\end{equation}
where $V_\perp$ is any orthogonal completion to $V_r$, and where
\[
\mcJ^\text{DF}_\alpha (t) =
\begin{cases}
\mcJ^\flat_\alpha(t)~\text{as defined in~\eqref{eq:Jflat}} & \alpha \in [\frac{2}{3},1], \\[5pt]
\frac{1}{2\alpha}t & \alpha \in (0, \frac{2}{3}).
\end{cases}
\]
\end{theorem}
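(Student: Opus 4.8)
The plan is to assemble \Cref{corr:datafreebound} by combining three ingredients that are already in place: the pointwise (in $Y$) bound of \Cref{thm:alphaCDRimp}, a concavity/Jensen step that moves the expectation over $Y$ inside the majorizer, and the identity $\E_{X\sim\pi^Y}[\|V_\perp^\trans\nabla\ln\ell^Y(X)\|_2^2]\,Z_{\pi^Y}$-type manipulation that converts the $\pi^Y$-expectation into a $\mu$-expectation followed by $\E_{Y\sim\rho^X}$. I will treat the two regimes $\alpha\in[\nicefrac23,1]$ and $\alpha\in(0,\nicefrac23)$ separately, since they use different majorizers ($\mcJ^\flat_\alpha$ versus the crude linear bound $\tfrac{t}{2\alpha}$) and different concavity justifications.

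First, for fixed $Y$, apply \Cref{thm:alphaCDRimp} to $\pi^Y$ with the matrix $V_r$ to get $D_\alpha(\pi^Y\,||\,\pi^{Y,\opt}_{\alpha,r})\leq \mcJ^\flat_\alpha\!\big(C^\sub_{\min\{1/\alpha;2\}}(\mu)\,\E_{X\sim\pi^Y}[\|V_\perp^\trans\nabla\ln\ell^Y(X)\|_2^2]\big)$. Then take $\E_{Y\sim\rho}$ of both sides. For $\alpha\in[\nicefrac23,1]$, invoke concavity of $\mcJ^\flat_\alpha$ (proved in Appendix~\ref{sec:comparison}) and Jensen's inequality to pull the expectation inside: $\E_Y\,\mcJ^\flat_\alpha(\cdots)\leq \mcJ^\flat_\alpha\big(\E_Y[\cdots]\big)$. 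For $\alpha\in(0,\nicefrac23)$, first bound $\mcJ_\alpha(t)\leq \tfrac{t}{2\alpha}$ (and note $\mcJ^\flat_\alpha(t)\leq \mcJ_\alpha(t)$ when $\alpha\geq\nicefrac12$, while for $\alpha<\nicefrac12$ one uses $\mcJ_\alpha$ directly), so that the majorizer is linear hence trivially concave, and again exchange $\E_Y$ with $\mcJ^\text{DF}_\alpha$. In both cases we are left with $\mcJ^\text{DF}_\alpha\big(C^\sub_{\min\{1/\alpha;2\}}(\mu)\,\E_{Y\sim\rho}\,\E_{X\sim\pi^Y}[\|V_\perp^\trans\nabla\ln\ell^Y(X)\|_2^2]\big)$, using that $\mcJ^\text{DF}_\alpha$ is monotone non-decreasing so the constant $C^\sub$ can be factored out of the expectation.

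The remaining step is to rewrite the double expectation. Since $\d\pi^Y(x)=\ell^Y(x)\d\mu(x)/Z_{\pi^Y}$, we have $\E_{Y\sim\rho}\E_{X\sim\pi^Y}[g(X,Y)] = \int\!\!\int g(x,y)\,\tfrac{\ell^y(x)}{Z_{\pi^y}}\,\d\mu(x)\,\d\rho(y)$; this is \emph{not} literally $\E_{X\sim\mu}\E_{Y\sim\rho^X}[g]$ because of the normalizing factor $Z_{\pi^y}$ and the difference between $\rho$ and $\rho^x$. The clean route — and the one I expect the paper intends — is to follow \cite{Cui_Zahm_2021}: work with the joint measure of $(X,Y)$ given by $\d\mu(x)\,\d\rho^x(y)$, observe its $X$-marginal is $\mu$ and that conditioning on $X=x$ gives $\rho^x$, while conditioning on $Y=y$ gives exactly $\pi^y$ (since $\d\rho^x(y)\propto\ell^y(x)\d y$ makes the joint density proportional to $\ell^y(x)$, symmetric in the roles of the two disintegrations up to normalization). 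Hence $\E_{Y\sim\rho}\E_{X\sim\pi^Y}[g] = \E_{X\sim\mu}\E_{Y\sim\rho^X}[g]$ holds as an identity of the joint expectation, which is precisely the argument of $\mcJ^\text{DF}_\alpha$ on the right-hand side of \eqref{eq:boundbayesloss}. Combining this identity with the previous display finishes the proof.

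The main obstacle is the concavity/Jensen step in the regime $\alpha\in[\nicefrac23,1]$: everything hinges on $\mcJ^\flat_\alpha$ being concave there, which is a genuine analytic fact about the exponent $\tfrac{\alpha}{2(1-\alpha)}$ in \eqref{eq:Jflat} and is relegated to Appendix~\ref{sec:comparison}; the threshold $\nicefrac23$ is exactly where this exponent reaches $1$ and concavity of $t\mapsto(1-ct)^{\alpha/(2(1-\alpha))}$ is borderline. A secondary subtlety is making sure the passage from $\pi^Y$-expectations to the joint $(X,Y)$ expectation is airtight — one must check integrability and that the disintegration identity genuinely identifies the $Y$-conditional of the joint law $\d\mu(x)\d\rho^x(y)$ with $\pi^y$, which requires the likelihood to be such that $\int\ell^y(x)\d\mu(x)<\infty$ for $\rho$-a.e.\ $y$. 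Both points are routine given the cited results, so the proof is essentially a careful bookkeeping exercise layered on \Cref{thm:alphaCDRimp} and the appendix concavity lemma.
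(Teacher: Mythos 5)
Your proof follows the paper's intended approach exactly; the paper does not include a formal proof of this theorem, only the sketch preceding it, and your three-stage argument (pointwise bound from Theorem~\ref{thm:alphaCDRimp}/Theorem~\ref{thm:alphaCDRext}, concavity + Jensen to move $\E_Y$ inside the majorizer, then the joint-measure identity $\E_{Y\sim\rho}\,\E_{X\sim\pi^Y} = \E_{X\sim\mu}\,\E_{Y\sim\rho^X}$) matches it precisely. You also correctly flag the two non-obvious points: that the concavity of $\mcJ^\flat_\alpha$ on $[\nicefrac{2}{3},1]$ is the load-bearing analytic fact (proved in Appendix~\ref{sec:comparison}), and that the expectation swap requires the disintegration structure of the joint law $\d\mu(x)\,\d\rho^x(y)$ rather than a naive Fubini.

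There is, however, a gap you inherit from the paper's sketch in the regime $\alpha\in(0,\nicefrac12)$, and it is worth making explicit. For these $\alpha$ the relevant pointwise bound is Theorem~\ref{thm:alphaCDRext}, whose majorizer from \eqref{eq:Jext} carries the factor $\frac{1-\alpha}{4\alpha}$ (arising because $C^\sub_{1/\alpha}(\mu)\leq \frac{1}{2\alpha}C^\sub_2(\mu)$ has been absorbed so the argument is written in terms of $C^\sub_2$). The claim $\mcJ_\alpha(t)\leq\frac{t}{2\alpha}$ is correct for the \emph{original} $\mcJ_\alpha$ of \eqref{eq:defJalpha}, where it follows from $(1-s)_+^\alpha\geq 1-s$. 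But for the \emph{extended} $\mcJ_\alpha$ of \eqref{eq:Jext} with $\alpha<\nicefrac12$ the same computation reduces to $(1-s)_+^\alpha\geq 1-2\alpha s$, which fails for $s$ near $1$ (e.g.\ $\alpha=\nicefrac14$, $s=1$ gives $0\geq\nicefrac12$). Tracking constants correctly, the tight linear envelope of the extended $\mcJ_\alpha$ for $\alpha\leq\nicefrac12$ is $\frac{t}{4\alpha^2}$, not $\frac{t}{2\alpha}$; equivalently, if one wishes to keep the $\frac{1}{2\alpha}$ slope, the argument inside $\mcJ^{\text{DF}}_\alpha$ should be $C^\sub_{1/\alpha}(\mu)\,\E_{X\sim\mu}\E_{Y\sim\rho^X}[\cdots]$ rather than $C^\sub_2(\mu)\,\E_{X\sim\mu}\E_{Y\sim\rho^X}[\cdots]$. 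So either the constant in the statement or the linear bound needs adjusting for $\alpha<\nicefrac12$; your proof, like the paper's sketch, passes over this silently.
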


By monotonicity of $\mcJ^\text{DF}_\alpha$ it is straightforward to show (by similar arguments to the preceding sections) that the optimal $V_r$ which minimizes the upper bound in~\eqref{eq:boundbayesloss} is given by any matrix containing the $r$-leading eigenvectors of the
\emph{data-free diagnostic matrix}
\begin{align*}
H_\text{DF} = \E_{X\sim\mu}\left[ \mathcal{I}(X) \right] , \qquad  \mathcal{I}(x) = \E_{Y\sim\rho^x} \left[ \nabla \ln \ell^Y(x) \nabla \ln \ell^Y(x)^\trans \right] ,
\end{align*}
where $\mathcal{I}(x)$ is the Fisher information matrix of the statistical model $\rho^x(\cdot)$ for $Y$,
evaluated at $x$.
Denoting by $\lambda_k(H_\text{DF})$ the $k$-th largest eigenvalue of this data-free diagnostic matrix, we then have the certificate of optimality
\[
\E_{Y\sim\rho} \left[ D_\alpha(\pi^Y \,||\, \pi^{Y,\opt}_{\alpha,r} ) \right] \leq
\mcJ_\alpha^\text{DF}\lb C^\sub_{\min\{\frac{1}{\alpha};2\}}(\mu) \sum_{k > r}^d \lambda_k(H_\text{DF})\rb,
\]
which relates the approximation error, averaged over all realizations of data~$Y$, to
the decay of the eigenvalues of the data-free diagnostic matrix.

We note that the work of \cite{Baptista_Marzouk_Zahm_2022} extends this idea of data-averaged optimality (for the KL divergence only) by considering concurrent dimension reduction of both the state and the data variables, resulting in separate diagnostic matrices for each. Furthermore, they establish a tight connection to the problem of maximizing mutual information between the two random variables; we believe that similar results will hold for a ``mutual information'' defined with respect to $\alpha$-divergences using our results above. {This line of research has already been pursued by the authors of~\cite{cui4258736scalable} for the squared Hellinger distance. When considering dimension reduction in the state variables only, their bound~\cite[Prop. 1]{cui4258736scalable} coincides with the results of Theorem~\ref{corr:datafreebound}.}

\section{Extension to nonlinear feature detection}
\label{sec:nonlinear}

In this section, we propose a nonlinear extension of the proposed methodology: that is, we now seek a nonlinear feature map $\varphi_r:\R^d\rightarrow\R^r$ such that the target distribution $\pi$ can be well approximated (in the sense of $\alpha$-divergence) with a distribution $\pi_{\alpha,r}^\opt$ given as
$$
 \d\pi_{\alpha,r}^\opt(x) \propto  \ell_{\alpha,r}^\opt( \varphi_r(x) )\d\mu(x) ,
$$
where $\ell_{\alpha,r}^\opt$ is optimal profile function as given in Theorem \ref{thm:pythagorean}.
We consider feature maps of the form
$$
 \varphi_r(x) = U_r^\trans \Phi(x) ,
$$
where $U_r\in\R^{d\times r}$ is a matrix with orthogonal columns and $\Phi:\R^d\rightarrow\R^d$ is a $\mathcal{C}^1$-diffeomorphism.
{This diffeomorphic map performs a change-of-variables $z=\Phi(x)$ so that the feature map $\varphi_r$ corresponds to applying linear dimension reduction to the variable $z$. Denoting by}
$\Phi_\sharp \pi$ and $ \Phi_\sharp \pi_{\alpha,r}^\opt $  the pushforward distributions of $\pi$ and $\pi_{\alpha,r}^\opt$ through $\Phi$, respectively, the change-of-variables formula yields
\begin{align*}
 \d\Phi_\sharp \pi(z) &\propto \ell(\Phi^{-1}(z)) \d \Phi_\sharp\mu(z) \\
 \d\Phi_\sharp \pi_{\alpha,r}^\opt(z) &\propto \ell_{\alpha,r}^\opt( U_r^\trans z )\d \Phi_\sharp\mu(z) .
\end{align*}
{
Then the function $\ell_{\alpha,r}$ is explicitly given by
\begin{equation*}
 \ell_{\alpha,r}^\opt(\theta_r) =
 \begin{cases}
 \mathbb{E}_{Z\sim\Phi_\sharp\mu}[ \ell(\Phi^{-1}(Z))^\alpha \mid U_r^\trans Z=\theta_r ]^{\frac{1}{\alpha}} ,
      &  \alpha\neq0,  \\[5pt]
      \exp( \mathbb{E}_{Z\sim\Phi_\sharp\mu}[ \ln\ell(\Phi^{-1}(Z)) \mid U_r^\trans Z=\theta_r ] ),
      &   \alpha=0.
 \end{cases}
\end{equation*}
}
Evidently, the methodology presented in Section~\ref{sec:alphaCDR} directly applies by replacing the likelihood $x\mapsto \ell(x)$ with $z\mapsto \ell(\Phi^{-1}(z))$ and the reference $\mu$ with $\Phi_\sharp\mu$.
Since the $\alpha$-divergence is a $\phi$-divergence, it holds that
$
 D_\alpha(\pi||\pi_{\alpha,r}^\opt)
 = D_\alpha( \Phi_\sharp \pi || \Phi_\sharp \pi_{\alpha,r}^\opt ).
$
so that applying Theorem \ref{thm:alphaCDRext} yields
\begin{align*}
D_\alpha( \pi || \pi_{\alpha,r}^\opt)
&=D_\alpha( \Phi_\sharp \pi || \Phi_\sharp \pi_{\alpha,r}^\opt ) \\
&\overset{\eqref{eq:alphaCDRext}}{\leq} \mcJ_\alpha \lb C_{\min\{\frac{1}{\alpha};\,2\}}^\sub(\Phi_\sharp\mu)\, \E_{Z \sim \Phi_\sharp\pi}\left[ \|U_\perp^\trans \nabla \ln \ell\circ\Phi^{-1}(Z) \|_2^2 \right] \rb \\
&=\mcJ_\alpha \lb C_{\min\{\frac{1}{\alpha};\,2\}}^\sub(\Phi_\sharp\mu)\, \E_{X \sim  \pi}\left[ \|U_\perp^\trans \nabla \Phi(X)^{-\trans}\nabla \ln \ell( X ) \|_2^2 \right] \rb ,
\end{align*}
where $\nabla \Phi(x) = (\partial_j \Phi_i(x))_{1\leq i,j\leq d}$ denotes the Jacobian matrix of $\Phi$ evaluated at $x$.
As before, for a given $\Phi$, the optimal matrix $U_r$ from this upper bound contains the $r$ leading eigenvectors of the matrix
\begin{equation}\label{eq:defHPhi}
 H(\Phi) = \int \nabla \Phi(x)^{-\trans} \lb \nabla\log\ell(x) \nabla\log\ell(x)^\trans \rb \nabla \Phi(x) ^{-1} \d\pi(x) .
\end{equation}
Note that a similar object was considered in \cite{cui2022prior} for the purpose of preconditioning MCMC algorithms.
Denoting by $\lambda_k(\Phi)$ the $k$-th largest eigenvalue of $H(\Phi)$, this construction of $U_r$ provides the following error bound
\begin{equation}\label{eq:boundlossNL}
    D_\alpha( \pi || \pi_{\alpha,r}^\opt)
\leq
\mcJ_\alpha\lb C_{\min\{\frac{1}{\alpha};\,2\}}^\sub(\Phi_\sharp\mu) \sum_{k = r+1}^d \lambda_k (\Phi)\rb .
\end{equation}

This inequality suggests a construction for $\Phi$ based on two criteria. The first criterion is to control $C_{\min\{\frac{1}{\alpha};\,2\}}^\sub(\Phi_\sharp\mu)$, for instance by bounding it with some constant. By Proposition \ref{prop:CphiSubFinite}, one way to achieve such a bound is to impose $\Phi_\sharp\mu = \gamma$, where $\d\gamma(z)\propto\exp(-\|z\|^2/2)\d z$ is the standard Gaussian distribution on $\R^d$, thus ensuring $C_{\min\{\frac{1}{\alpha};\,2\}}^\sub(\Phi_\sharp\mu) \leq 1$,  Following the terminology used in machine learning, this suggests choosing $\Phi$ to be a \emph{normalizing flow} \cite{Tabak_Turner_2013, pmlr-v37-rezende15}. The second criterion is to optimize the decay in the spectrum of $H(\Phi)$, so that the right-hand side of \eqref{eq:boundlossNL} decays quickly with $r$. We propose to minimize the effective rank of $H(\Phi)$ defined by $\mathrm{erank}(H(\Phi))= \trace( H(\Phi))/\|H(\Phi)\|_\text{sp}$, where $\|\cdot\|_\text{sp}$ denotes the spectral norm, or equivalently,  $\mathrm{erank}(H(\Phi))=\sum_{k=1}^d \lambda_k(\Phi)/\lambda_1(\Phi) $. This is a continuous relaxation of the matrix rank, and we have $1\leq \mathrm{erank}\,H(\Phi) \leq \text{rank}\,H(\Phi)$, with equality if and only if $H(\Phi)$ is proportional to a projection matrix.
An alternative definition of the effective rank proposed in \cite{roy2007effective} enjoys similar properties and can equivalently be used here.

Accordingly, we propose constructing a $\Phi$ which accounts for both criteria as the solution to
\begin{equation}\label{eq:OptPhi}
 \min_{\substack{\Phi \in  \mathrm{Diff}(\R^d), \\ \Phi_\sharp\mu=\gamma }}
 \mathrm{erank}(H(\Phi)),
\end{equation}
where $\mathrm{Diff}(\R^d)$ denotes the set of $\mathcal{C}^1$-diffeomorphisms from $\R^d$ to $\R^d$.
Interestingly, this can be interpreted as a variant of the Monge problem, where $\mathrm{erank}(H(\Phi))$ replaces the transportation cost from $\mu$ to $\gamma$.
In effect, there are infinitely many diffeomorphisms $\Phi$ which satisfy $\Phi_\sharp\mu=\gamma$: the formulation \eqref{eq:OptPhi} seeks one that optimizes the decay of the spectrum of $H(\Phi)$. In principle, this construction could be replaced by a more general one that relaxes
the constraint $\Phi_\sharp\mu=\gamma$, for instance by allowing pushforward  distributions that are ``close'' to $\gamma$ in some sense, and thus exposing the trade-off between the spectral decay of $H(\Phi)$ and the subspace LSI constant of $\Phi_\sharp \mu$. The challenge with this more general construction is that it requires understanding the continuity of
$\nu\mapsto C_{\min\{\frac{1}{\alpha};\,2\}}^\sub(\nu)$ with respect to an appropriate metric/topology.

\begin{remark}
Consider the reference measure $\mu_\Sigma \sim \mathcal{N}(0, \Sigma)$ for some covariance $\Sigma \succ 0$. It is straightforward to show that $C_\beta^\sub(\mu_\Sigma) \leq \lambda_\textrm{max}(\Sigma)$, i.e., the largest eigenvalue of the covariance matrix~$\Sigma$. Applying Theorem~\ref{thm:alphaCDR} shows that the features~$U_r$ constructed from the dominant eigenvectors of the diagnostic matrix~$H$ provide the certificate of optimality~$\mcJ_\alpha(\lambda_\textrm{max}(\Sigma) \sum_{k > r} \lambda_k(H))$. On the other hand, if we precondition with the linear features $\Phi(x)=\Sigma^{-1/2}x$, then the optimal features~$U_r$ from the diagnostic matrix $H(\Phi)$ in~\eqref{eq:defHPhi} correspond instead
to the generalized eigenvector problem $Hv_i = \lambda_i\Sigma^{-1} v_i$.
Then, the new certificate of optimality from~\eqref{eq:boundlossNL} becomes $\mcJ_\alpha(\sum_{k > r} \lambda_k(H,\Sigma))$, where the $\lambda_k(H,\Sigma)$ now denotes the $k$-th generalized eigenvalue, and we used the fact that $C^\sub_\beta(\Phi_\sharp \mu_\Sigma)  \leq 1$ since $\Phi$ transforms $\mu_\Sigma$ to an isotropic Gaussian. In addition, the generalized eigenvalues satisfy $\lambda_k(H,\Sigma)\leq \lambda_{\max}(\Sigma)\lambda_k(H)$ for all $1\leq k\leq d$ so that $ \mcJ_\alpha(\sum_{k > r} \lambda_k(H,\Sigma)) \leq \mcJ_\alpha( \lambda_\textrm{max}(\Sigma) \sum_{k > r} \lambda_k(H))$. This shows that the preconditioning \emph{improves} the upper bound.
\end{remark}

\section{Connection to broader literature} \label{sec:litreview}

The main idea underlying this paper is the construction of low-dimensional ridge-based approximations of probability measures, using gradient information. To the best of our knowledge, this approximation class was first proposed in the context of \emph{likelihood informed subspaces} (LIS) by~\cite{Cui_Martin_Marzouk_Solonen_Spantini_2014}, for Bayesian inverse problems.
For regression and function approximation, we note that ridge approximations based on gradient information date back at least to the ``average derivative functionals'' of \cite{samarov1993exploring} and the notion of active subspaces \cite{constantine2015active}, which has since seen many refinements \cite{zahm2020gradient,lee2019modified} and wide application \cite{liu2023preintegration,constantine2017global}.
The LIS construction was made rigorous by Spantini et al.~\cite{SSC+15}, who proved its optimality in the linear Gaussian setting; \cite{ZCLSM22} later extended this optimality, in the sense of majorizations, to nonlinear forward models and certain non-Gaussian priors through their \emph{certified dimension reduction} approach. Among the many papers which have since appeared, we highlight in particular Cui and Tong~\cite{Cui_Tong_2021} as they analyze \eqref{eq:loss} for the squared Hellinger distance and the KL divergence, as well as the 2-Wasserstein metric. (See remarks in Section~\ref{sec:FirstBound} comparing our results for $d_{\text{Hell}}$ to those in \cite{Cui_Tong_2021}.) Their analysis also establishes a trade-off between defining the diagnostic matrix via integration over the reference measure $\mu$ (also as in \cite{Constantine_Kent_Bui-Thanh_2016}) rather than over the target measure $\pi$.

Our main technical tools for constructing our ridge-based approximations rely on functional Markov semigroup inequalities, which are traditionally used to characterize the long-time dynamics of Markov diffusion processes \cite{Bakry2014}. Interestingly, whereas the Poincar\'{e} inequality implies the exponential ergodicity of Langevin dynamics for the $\chi^2$-divergence, in this paper we use it to control the squared-Hellinger distance instead.

We note that these functional inequalities have also been considered for ridge-based approximation of probability measures by other authors. Notably, the work of \cite{Pillaud-Vivien_Bach_Lelievre_Rudi_Stoltz_2020} views the Poincar\'{e} constant of a measure as a proxy for its sampling complexity, and derives an algorithm to detect a low-dimensional subspace whose marginal distribution has maximal Poincar\'{e} constant. They highlight in particular applications in computational chemistry, where multi-modality and the discovery of suitable ``reaction coordinates'' are key challenges.
The idea of using low-dimensional subspaces for efficient sampling also appears in \cite{Chen_Ghattas_2020}, where a pre-computed LIS subspace is used in conjunction with Stein variational gradient descent (SVGD) in that subspace. \cite{brennan2020greedy} instead propose an iterative algorithm for approximating and sampling from high-dimensional distributions, using transport maps. At each step, the certified dimension reduction approach of \cite{ZCLSM22} is used to identify a low-dimensional subspace that best captures the departure of the target distribution from its current approximation, and an invertible transport map or normalizing flow is then constructed within this subspace. This construction recalls the iterative Gaussianization algorithm of \cite{laparra2011iterative}, but for the purpose of sampling a target whose unnormalized density is available and with subspaces identified via gradient information.
Liu et al.~\cite{Liu_Zhu_Ton_Wynne_Duncan_2022} provide an interesting extension of these ideas by using the kernelized Stein discrepancy to discover low-dimensional structure in the target measure, while concurrently sampling using SVGD. Intriguingly, they relate their optimal subspace to the spectrum of the matrix
\[
H_{\textsc{RKHS}} = \mathbb{E}_{X,\,X' \sim \mu}[\kappa(X,X') \nabla\ln\ell(X)\nabla\ln\ell(X')^\trans]
\]
for reproducing kernel $\kappa: \mathbb{R}^d \times \mathbb{R}^d \to \mathbb{R}$ and prior~$\mu$. For the (inadmissible) choice of kernel $\kappa(x,y) = \delta(x-y)$, one recovers the diagnostic matrix $H$ as in \eqref{eq:defH}, but with an expectation taken over $\mu$ instead of over $\pi$. We believe further extensions of kernelization and Bayesian dimension reduction to be an interesting direction for future research.

The framework proposed in \cite{ZCLSM22} has also been used to make rare event simulation more tractable in~\cite{uribe2021cross,tong2022large}. In this context, the function $\ell$ is the indicator of a \emph{failure domain}, bounded by the level set of a given performance function. Replacing this indicator function with a smooth approximation (e.g., a sigmoid function) yields a regular likelihood function whose gradients are well-defined. These gradients are then used to reduce the dimension of the problem to facilitate sampling of the rare event.

Lastly, we mention that the dimensional bottleneck for MCMC can be circumvented by exploiting alternative structural assumptions on the target measure.
In the setting of Bayesian inverse problems with Gaussian priors, the pre-conditioned Crank-Nicolson (pCN) sampler of \cite{Cotter_Roberts_Stuart_White_2013} yields discretization-invariant and hence dimension-independent MCMC performance when the unknown parameters represent a discretization of an underlying function. Yet the performance of vanilla pCN can scale poorly with other aspects of the target, e.g., concentration relative to the prior. Hence many efforts have focused on modifying pCN to account for the target geometry \cite{rudolf2018generalization,beskos2017geometric,NoemiTOMS}; the latter include the dimension-independent likelihood-informed MCMC algorithms of \cite{Cui_Law_Marzouk_2016}, which explicitly use low-dimensional ridge approximations of the form studied in this paper.

\section{Conclusion}

In this paper, we develop gradient-based dimension reduction algorithms for probability measures. Notably, we provide certifiable approximation guarantees for our algorithms by establishing a connection with functional inequalities from Markov semigroup theory. This extends earlier work by showing that dimension reduction quantified by the Amari $\alpha$-divergences, for $\alpha \in (0,1]$, can be certified using the corresponding $\nicefrac{1}{\alpha}$-Sobolev inequalities.

It would be of interest to derive certifiable dimension reduction algorithms for other probability metrics; for example, \cite{Cui_Tong_2021} also analyze the 2-Wasserstein metric. One approach could be to use the Talagrand $T_2$ functional inequality, which bounds the 2-Wasserstein metric by the KL divergence  \cite[Ch.~9]{Bakry2014} and thus would enable us to certify dimension reduction in the 2-Wasserstein metric using the KL certificate of optimality. But it is of interest whether more direct arguments can be used to obtain tighter certificates.
(We have not yet attempted to characterize $\pi^\opt_{W_2,r}$; our Theorem~\ref{thm:pythagorean} cannot be applied directly, since it exploits in some sense the property that $\alpha$-divergences are essentially Bregman divergences.)

In deriving our improved majorization function for $\alpha \in [\nicefrac{1}{2}, 1]$ in Theorem~\ref{thm:alphaCDRimp}, we benefited from recent improvements to functional inequalities, which are still an active research field.  Bolley and Gentil also present a refinement for the $\beta$-Sobolev inequalities for $\beta \in (2,4)$ \cite[Theorem 17]{Bolley_Gentil_2010}, which suggests that we may obtain tighter dimension reduction certificates for $\alpha$-divergences with $\alpha \in (\nicefrac{1}{4}, \nicefrac{1}{2})$. However, the resulting majorization function is nonlinear with respect to the matrix $U_r$, and to the best of our understanding does not admit a closed form solution. We leave exploring this, and other possible improvements, to future work.

{As one of our anonymous referees astutely noted, the $\beta$-Sobolev inequalities we consider herein
are in fact \emph{equivalent} to the Poincar\'e inequality when $\beta>1$, and to the LSI when $\beta=1$.
It is then natural to ask whether other functional inequalities which interpolate the Poincar\'e inequality and the LSI differently could be used to further strengthen our results.
For example, the Lata\l{}a–Oleszkiewicz (LO) inequalities~\cite{Latala_Oleszkiewicz_2000} interpolate between these two inequalities in the sense of capturing a range of different \emph{tail behaviors} of measures.
It is an interesting open question whether these inequalities (or others) can be applied to our problem, and whether doing so might capture a deeper interplay between the choice of $\alpha$-divergence and assumptions on the tails of the reference measure. However, in comparison with the $\beta$-Sobolev inequalities, we note that this exercise may not yield majorizations which are easily minimized to discover the optimal features.

The anonymous referee also raised the question of the extent to which tail properties of the reference measure should be related to the choice of $\alpha$-divergence for dimension reduction at all. As an example, consider~\cite{Vempala_Wibisono_2019, Chewi_Erdogdu_Li_Shen_Zhang_2021}, in which assumptions of the LO inequality (with parameter $\gamma \in [0,1]$) for the target measure are used to demonstrate exponential ergodicity of unadjusted Langevin Monte Carlo for \emph{every} Renyi divergence. The interplay of $\gamma$, which captures the analog of the tail behaviour referenced above, and the choice of Renyi divergence manifests in the strength of the discretized convergence rates. It is therefore an interesting direction of future work to explore whether analogous results can be established for our context.
}

Lastly, we note that readers familiar with both the Poincar\'e inequality and the logarithmic Sobolev inequality may wonder if the $\phi$-Sobolev inequalities similarly satisfy a tensorization property. Chafa\"i \cite[\S 3.1]{Chafai_2004} proves that indeed they do, and therefore they possess the same powerful dimension-free scaling as the other functional inequalities. On the other hand, in the applications we consider here the dimension~$d$ is fixed, and one might wonder whether this philosophical incompatibility introduces deficits into our upper bounds. Indeed, in forthcoming work we show that by using \emph{dimensional} Sobolev inequalities \cite{Bakry_Ledoux_2006} we can provide tighter certificates of optimality for the Kullback--Leiber divergence, as well as for the squared Hellinger distance.

\subsection*{Funding}
ML and YMM acknowledge support from the US Department of Energy, Office of Advanced Scientific Computing Research, under grants DE-SC0023187 and DE-SC0023188, and from the ExxonMobil Technology and Engineering Company.
OZ acknowledges support from the ANR JCJC project MODENA (ANR-21-CE46-0006-01).

{
\bibliographystyle{siam}
\bibliography{references}
}

\begin{appendix}

\section{Proofs}

\subsection{Proof of Theorem~\ref{thm:pythagorean} }
\label{sec:pythagoreanproof}

\begin{proof}
\noindent\textbf{Case $\alpha \notin\{0;1\}$.} Let $\d\pi_{\alpha,r}^\opt = Z_{\alpha,r}^{-1}\,\ell^\opt_{\alpha,r}\circ\varphi_r\d\mu$ as in \eqref{eq:piopt} and \eqref{eq:opt_llhd}. For any approximate measure of the form $\d\widetilde\pi_r = \widetilde{Z}^{-1} \widetilde\ell_r\circ \varphi_r  \d\mu$, we have
\begin{align}
\alpha(\alpha-1)D_\alpha(\pi \,||\, \widetilde\pi_r) &\overset{\eqref{eq:DalphaDef}}{=}
\frac{\widetilde{Z}^{\alpha-1}}{Z_\pi^\alpha} \int \ell^\alpha ( \widetilde\ell_r\circ\varphi_r)^{1-\alpha} \d\mu - 1 \nonumber\\
&\stackrel{\eqref{eq:condExp2}}{=}
\frac{\widetilde{Z}^{\alpha-1}}{Z_\pi^\alpha} \int \left(\int \ell^\alpha\d\mu_{\perp|r}\right) (\widetilde\ell_r)^{ 1-\alpha} \d\mu_r - 1 \nonumber\\
&\stackrel{\eqref{eq:opt_llhd}}{=}\frac{\widetilde{Z}^{\alpha-1}}{Z_\pi^\alpha} \int (\ell_{\alpha,r}^\opt)^\alpha (\widetilde\ell_r)^{ 1-\alpha} \d\mu_r - 1 \nonumber\\
&=\lb\frac{Z_{\alpha,r}}{Z_\pi}\rb^{\alpha} \lb \int \lb \frac{\ell_{\alpha,r}^\opt}{Z_{\alpha,r}}\rb^\alpha \, \lb\frac{\widetilde\ell_r}{\widetilde{Z}}\rb^{1-\alpha} \d\mu_r - 1 \rb + \lb\frac{Z_{\alpha,r}}{Z_\pi}\rb^{\alpha}  -1 \nonumber\\
&\overset{\eqref{eq:DalphaDef}}{=}
\alpha(\alpha-1)\lb\frac{Z_{\alpha,r}}{Z_\pi}\rb^{\alpha} D(\pi_{\alpha,r}^\opt \,||\, \widetilde\pi_r) +  \lb\frac{Z_{\alpha,r}}{Z_\pi}\rb^{\alpha}  -1.
\label{eq:tmp4928589317942}
\end{align}
If we let $\widetilde\ell_r=\ell_{\alpha,r}^\opt$ in the above equation, we deduce that
$$D_\alpha(\pi||\pi_{\alpha,r}^\opt)= \frac{1}{\alpha(\alpha-1)}\left( \left(\frac{Z_{\alpha,r}}{Z_\pi} \right)^{\alpha}  -1\right).$$
Combining the above identity with \eqref{eq:tmp4928589317942} re-scaled by $\alpha(\alpha-1)$, we obtain \eqref{eq:pythagorean}.

\noindent \textbf{Case $\alpha = 1$.} The original proof for this choice of $\alpha$, which corresponds to the KL divergence, was originally shown in~\cite{ZCLSM22}. We include the details here for completeness. From~\eqref{eq:opt_llhd} we have
$
\d \pi_{1,r}^\opt(\theta_r) = \frac{1}{Z_{1,r}} \left( \int \ell(x)\d\mu_{\perp|r}(x|\theta_r) \right)\d \mu_r(\theta_r).
$
Integrating against $\mu_r$ yields $1=Z_\pi/Z_{1,r}$ so that $Z_{1,r}=Z_\pi$. Then we can write
\begin{align*}
D_1(\pi\,||\,\widetilde\pi_r) &=
\int \ln \lb \frac{Z_\pi^{-1} \ell}{\widetilde{Z}^{-1}\widetilde\ell_r}\rb \frac{\ell}{Z_\pi}\d\mu \\
&=\int \ln \lb \frac{Z_\pi^{-1} \ell}{Z_\pi^{-1}\ell_{1,r}^\opt\circ\varphi_r}\rb \frac{\ell}{Z_\pi}\d\mu  +
\int \ln \lb \frac{Z_\pi^{-1} \ell_{1,r}^\opt\circ\varphi_r}{\widetilde{Z}^{-1}\widetilde\ell_r\circ\varphi_r}\rb \frac{\ell}{Z_\pi}\d\mu  \\
&\overset{\eqref{eq:condExp2}}{=}\int \ln \lb \frac{Z_\pi^{-1} \ell}{Z_\pi^{-1}\ell_{1,r}^\opt\circ\varphi_r}\rb \frac{\ell}{Z_\pi}\d\mu  +
\int \ln \lb \frac{Z_\pi^{-1} \ell_{1,r}^\opt }{\widetilde{Z}^{-1}\widetilde\ell_r }\rb  \frac{ \int \ell \d\mu_{\perp|r} }{Z_\pi}\d\mu_r  \\
&\overset{\eqref{eq:opt_llhd}}{=}\int \ln \lb \frac{Z_\pi^{-1} \ell}{Z_\pi^{-1}\ell_{1,r}^\opt\circ\varphi_r}\rb \frac{\ell}{Z_\pi}\d\mu  +
\int \ln \lb \frac{Z_\pi^{-1} \ell_{1,r}^\opt }{\widetilde{Z}^{-1}\widetilde\ell_r }\rb  \frac{\ell_{1,r}^\opt }{Z_\pi}\d\mu_r  \\
&= D_1(\pi \,||\, \pi_{1,r}^\opt) + D_1(\pi_{1,r}^\opt \,||\, \widetilde\pi_r),
\end{align*}
which is \eqref{eq:pythagorean}.

\noindent\textbf{Case $\alpha = 0$.} The choice $\alpha=0$ corresponds to the reverse KL divergence. For $\d\widetilde\pi_r = \widetilde{Z}^{-1} \widetilde\ell_r\circ \varphi_r \d\mu$ and $\d\pi_{0,r}^\opt = Z_0^{-1} \ell_{0,r}^\opt\circ \varphi_r \d\mu$ we have
\begin{align}
D_0(\pi\,||\,\widetilde\pi_r)
&= \int \ln\left( \frac{\widetilde Z^{-1} \widetilde\ell_r\circ\varphi_r}{Z_\pi^{-1} \ell}\right)  \frac{ \widetilde\ell_r\circ\varphi_r}{\widetilde Z}\d\mu \nonumber\\
&\overset{\eqref{eq:condExp2}}{=} \ln \frac{Z_\pi}{\widetilde Z} + \frac{1}{\widetilde Z} \int   \widetilde\ell_r \ln \widetilde\ell_r   \d\mu_r - \frac{1}{\widetilde Z} \int   \widetilde\ell_r \left(\int \ln \ell \d\mu_{\perp|r}\right)  \d\mu_r \nonumber\\
&\overset{\eqref{eq:opt_llhd}}{=} \ln \frac{Z_\pi}{\widetilde Z} + \frac{1}{\widetilde Z} \int   \widetilde\ell_r \ln \widetilde\ell_r   \d\mu_r - \frac{1}{\widetilde Z} \int   \widetilde\ell_r \ln \ell^\opt_{0,r}  \d\mu_r \nonumber\\
&=\ln \frac{Z_\pi}{Z_{0,r}} +  \int  \ln \left(\frac{\widetilde Z^{-1}\widetilde\ell_r}{Z_{0,r}^{-1}\ell^\opt_{0,r}} \right) \frac{ \widetilde\ell_r }{\widetilde Z} \d\mu_r \nonumber\\
&=\ln \frac{Z_\pi}{Z_{0,r}} +  D_0(\pi_{0,r}^\opt\,||\,\widetilde\pi_r). \label{eq:tmp325078}
\end{align}
If we let $\widetilde\ell_r=\ell_{0,r}^\opt$ in the above equation, we obtain
$$D_0(\pi||\pi_{0,r}^\opt)=\ln(Z_\pi/Z_{0,r}).$$
Combining the above identity with \eqref{eq:tmp325078}, we obtain \eqref{eq:pythagorean}. This completes the proof.
\end{proof}

\subsection{Proof of Proposition~\ref{prop:KLisQuasiOptimal} }
\label{proof:KLisQuasiOptimal}

\begin{proof}
 For any $0<\alpha\leq1$, applying Jensen's inequality yields $\ell_{\alpha,r}^\opt(\theta_r) \leq \ell_{1,r}^\opt(\theta_r)$.
 Therefore, we have
 \begin{align*}
  \alpha(\alpha-1) D_\alpha \lb \pi || \pi_{1,r}^\opt  \rb
  &\overset{\eqref{eq:DalphaDef}}{=} \int \left(\frac{\d\pi}{\d\pi_{1,r}^\opt} \right)^\alpha \d\pi_{1,r}^\opt- 1 \\
  &\overset{\eqref{eq:KLoptimalDecomp}}{=} \frac{1}{Z_\pi}\int \frac{\int\ell^\alpha\d\mu_{\perp|r}}{(\ell_{1,r}^\opt)^{\alpha-1}}  \d\mu_r- 1 \\
  &\overset{\eqref{eq:opt_llhd}}{=} \frac{1}{Z_\pi} \int \frac{(\ell_{\alpha,r}^\opt)^\alpha}{(\ell_{1,r}^\opt)^{\alpha-1}}  \d\mu_r- 1 \\
  &\geq \frac{1}{Z_\pi} \int \frac{(\ell_{\alpha,r}^\opt)^\alpha}{(\ell_{\alpha,r}^\opt)^{\alpha-1}}  \d\mu_r- 1
  = \frac{Z_\alpha}{Z_\pi} - 1.
 \end{align*}
 Then we obtain
 \begin{align*}
  D_\alpha \lb \pi || \pi_{1,r}^\opt  \rb
  &\leq \frac{1}{\alpha(\alpha-1)}\left(\frac{Z_\alpha}{Z_\pi} - 1\right)
  \overset{\eqref{eq:optdivergence}}{=} \frac{1}{\alpha(\alpha-1)}\left( \left( 1+\alpha(\alpha-1) D_\alpha \lb \pi || \pi_{\alpha,r}^\opt  \rb \right)_+^{1/\alpha}  - 1 \right).
 \end{align*}
 Because $(1+t)_+^{1/\alpha} \geq 1+t/\alpha$ whenever $1/\alpha\geq1$, we deduce \eqref{eq:KLisQuasiOptimal} and conclude the proof.
\end{proof}

\subsection{Proof of Proposition~\ref{prop:BoundDalphaNoDR}}
\begin{proof}
 For $\alpha=1$, the result is a straightforward application of the logarithmic Sobolev inequality, see, e.g., \cite{Gross_1975}.
 Assume $\nicefrac{1}{2} \leq\alpha <1$.
 Applying \eqref{eq:betaSI} with $\alpha=\nicefrac{1}{\beta}$ and $f = (Z_\pi^{-1}\ell)^\alpha$ yields
 \begin{equation*}
  \frac{1}{\frac{1}{\alpha}(\frac{1}{\alpha}-1)} \lb 1-\lb \int (Z_\pi^{-1}\ell)^\alpha \d\mu\rb^{\frac{1}{\alpha}}  \rb \leq \frac{C_{1/\alpha}(\mu)}{2} \mathbb{E}_{X \sim \pi}[ \alpha^2 \| \nabla \ln \ell(X)\|_2^2 ] ,
 \end{equation*}
 which after re-arranging recovers
 \begin{equation}\label{eq:tpm1356089}
  \int (Z_\pi^{-1}\ell)^\alpha \d\mu
  \geq
   \lb 1- \frac{(1-\alpha)}{2} C_{1/\alpha}(\mu)\, \mathbb{E}_{X \sim \pi}[ \| \nabla \ln \ell(X)\|_2^2 ] \rb_{+}^\alpha .
 \end{equation}
 Thus, the divergence $D_\alpha(\pi||\mu)$ satisfies
 $$
  D_\alpha(\pi\,||\,\mu) \overset{\eqref{eq:DalphaDef}}{=}
  \frac{1}{\alpha(\alpha-1)} \lb    \int (Z_\pi^{-1}\ell)^\alpha \d\mu  -1 \rb
  \overset{\eqref{eq:tpm1356089}}{\leq}
  \mathcal{J}_\alpha \lb C_{1/\alpha}(\mu)\,\mathbb{E}_{X \sim \pi}[ \| \nabla \ln \ell(X)\|_2^2] \rb .
 $$
 This shows \eqref{eq:BoundDalphaNoDR} and concludes the proof.
\end{proof}

\subsection{Proof of Proposition~\ref{prop:CphiSubFinite}}

\begin{proof}
For this class of measure, the Bakry--Emery criterion~\cite{Bakry2014} combined with the Holley--Stroock perturbation argument~\cite{Holley_Stroock_1987} lets one bound the logarithmic Sobolev constant as $C_1(\mu)\leq \exp(\sup B - \inf B)/R$.
As shown in~\cite[\S3.3]{Chafai_2004}, it also implies $C_\phi(\mu)\leq \exp(\sup B - \inf B)/R$ for any convex functions $\phi$ as in the definition of the $\phi$-Sobolev inequality.
To show
\begin{equation}
 C^\mathrm{sub}_\phi( \mu ) \leq \frac{\exp(\sup B - \inf B)}{R}   ,
\end{equation}
it is sufficient to observe that conditioning on $U_r^TX=\theta_r$ preserves the geometric structure of $\mu$ so that $\mu_{\perp|r}$ becomes
$$
 \d \mu_{\perp|r}(x|\theta_r) \propto \exp(-V_{\perp|r}(x|\theta_r)-B_{\perp|r}(x|\theta_r)) \d x_\perp ,
$$
where $\d x_\perp$ denotes the Lebesgue measures on the affine subspace $\{U_r \theta_r + U_\perp \theta_\perp \mid \theta_\perp\in\R^{d-r}\}$.
Here, $V_{\perp|r}(x|\theta_r)=V(U_r \theta_r  + U_\perp U_\perp^\trans x)$ and $B_{\perp|r}(x|\theta_r) = B(U_r \theta_r  + U_\perp U_\perp^\trans x)$ are respectively such that
$\Hess V_{\perp|r}(x|\theta_r) \succeq U_\perp^\trans \Hess(V) U_\perp \succeq RI_{d-r}$
and
$\sup B_{\perp|r}(\cdot|\theta_r) - \inf B_{\perp|r}(\cdot|\theta_r) \leq \sup B - \inf B$.
This yields $C_\phi(\mu_{\perp|r})\leq \exp(\sup B - \inf B))/R$ and concludes the proof.
\end{proof}

\subsection{Proof of Theorem~\ref{thm:alphaCDR}}
\begin{proof}
 The specific case of $\alpha=1$ follows from~\cite{ZCLSM22}.
 For $\alpha\in[\nicefrac{1}{2},1)$, applying \eqref{eq:betaSI} for the conditional measure $\mu_{\perp|r}(\cdot|\theta_r)$ with $\alpha=\nicefrac{1}{\beta}$ and $f = (Z_\pi^{-1}\ell)^\alpha$ being defined on the affine subspace $U_r\theta_r + \text{range}(U_\perp)$ yields
 \begin{align*}
  &\frac{1}{\frac{1}{\alpha}(\frac{1}{\alpha}-1)} \lb \int (Z_\pi^{-1}\ell) \d\mu_{\perp|r}(\cdot \mid \theta_r) -\lb \int (Z_\pi^{-1}\ell)^\alpha \d\mu_{\perp|r}(\cdot\mid \theta_r)\rb^{\frac{1}{\alpha}}  \rb \\
  &\leq \frac{C_{1/\alpha}(\mu_{\perp|r}(\cdot | \theta_r))}{2}\, {\int \,\alpha^2 \| U_\perp^\trans\nabla \ln \ell \|_2^2 \frac{\ell}{Z_\pi}\d\mu_{\perp \mid r}(\cdot \mid \theta_r)}.
 \end{align*}
 Integrating the above inequality with respect to $\theta_r$ against the measure $\mu_r$ yields
 \begin{equation*}
  \frac{1}{(1-\alpha)} \lb 1 - \int \lb \int (Z_\pi^{-1}\ell)^\alpha \d\mu_{\perp|r}\rb^{\frac{1}{\alpha}} \d\mu_r \rb
  \leq \frac{C^\mathrm{sub}_{1/\alpha}( \mu )}{2}\, \mathbb{E}_{X \sim \pi}\left[ \|  U_\perp^\trans \nabla \ln \ell(X) \|_2^2 \right]  ,
 \end{equation*}
 where we use the fact that $C_{1/\alpha}(\mu_{\perp|r}(\cdot|\theta_r)) \leq C^\mathrm{sub}_{1/\alpha}( \mu )$.
 Recalling the normalizing constant is $Z_{\alpha,r} = \int ( \int \ell^\alpha\d\mu_{\perp|r} )^{\frac{1}{\alpha}} \d\mu_r$, we can write
  $$
  \frac{Z_{\alpha,r}}{Z_\pi}
  \geq
   \lb 1- \frac{(1-\alpha)}{2} C^\mathrm{sub}_{1/\alpha}( \mu ) \mathbb{E}_{X \sim \pi}\left[ \|  U_\perp^\trans \nabla \ln \ell(X) \|_2^2\right] \rb_{+} ,
 $$
 which yields
 $$
  D_\alpha \lb \pi \,||\, \pi_{\alpha,r}^\opt\, \rb
  \overset{\eqref{eq:optdivergence}}{=} \dfrac{1}{\alpha(\alpha-1)}\lb \lb \dfrac{Z_{\alpha,r}}{Z_\pi}\rb^\alpha - 1 \rb
  \leq \mathcal{J}_\alpha\lb C^\mathrm{sub}_{1/\alpha}( \mu ) \mathbb{E}_{X \sim \pi}\left[ \|  U_\perp^\trans \nabla \ln \ell(X) \|_2^2\right] \rb .
 $$
 This concludes the proof.
\end{proof}

\subsection{Proof of Theorem~\ref{thm:alphaCDRimp}}
\begin{proof}
For $\beta\in (1,2)$ apply~\eqref{eq:betaSIimproved} for the conditional measure $\mu_{\perp | r}(\cdot | \theta_r)$ defined on $\varphi_r^{-1}(\theta_r) = \{U_r \theta_r + \textrm{range}(U_\perp)\}$. This yields
\begin{equation}
\label{eq:subphiSIimp}
\frac{1}{(\beta-1)^2} \lb \mu_{\perp | r}(f^\beta) - \mu_{\perp | r}(f)^{2\beta -2} \mu_{\perp | r}(f^\beta)^{\frac{2}{\beta}-1} \rb \leq C_\beta^\sub(\mu) \int f^\beta \|U_\perp^\trans \nabla_x \ln f \|_2^2 \d\mu_{\perp | r}
\end{equation}
for sufficiently smooth positive functions~$f:\mathbb{R}^d \to \mathbb{R}_{\geq 0}$.
As before, the principal idea is to compute the expectation of~\eqref{eq:subphiSIimp} under the marginalized law of~$\Theta_r \sim \mu_r$. However, this approach is obstructed by the nonlinear term in the left-hand side. We bound this product using H\"older's inequality to obtain
\[
\mu_r \lb \mu_{\perp | r}(f)^{2\beta -2} \mu_{\perp | r}(f^\beta)^{\frac{2}{\beta}-1} \rb  \leq
\lb \mu_r \circ \mu_{\perp | r}(f)^{p(2\beta -2)} \rb^\frac{1}{p}
\lb \mu_r \circ \mu_{\perp | r}(f^\beta)^{q(\frac{2}{\beta}-1)} \rb^\frac{1}{q}
\]
for generic conjugate exponents $p,\,q \in [1,\infty]$ with $\nicefrac{1}{p} + \nicefrac{1}{q} = 1$. In particular, we select $\frac{1}{q} = \frac{2}{\beta}-1$, which by conjugacy implies $p = \frac{\beta}{2\beta - 2}$, and since $\beta \in (1,2)$ both choices of $p$ and $q$ are admissible. This results in the inequality
\[
\frac{1}{(\beta-1)^2} \lb \mu(f^\beta) - \mu_r \left[(\mu_{\perp | r}(f))^\beta \right]^\frac{1}{p} \mu(f^\beta)^\frac{1}{q} \rb \leq C^\sub_\beta(\mu) \,\mathbb{E}_\mu \left[ f^\beta \|U_\perp^\trans \nabla_x \ln f \|_2^2 \right].
\]
Substituting the test function $f = (Z_\pi^{-1}\ell)^\frac{1}{\beta}$, we have
\begin{equation}
\frac{1}{(\beta-1)^2} \lb 1 - (Z_\pi^{-1} \mu_r [\mu_{\perp | r}(\ell^\frac{1}{\beta})]^\beta )^{2 - \frac{2}{\beta}} \rb \leq \frac{C^\sub_\beta}{\beta^2}\, \mathbb{E}_\pi \left[ \| U_\perp^\trans \nabla_x \ln \ell \|^2_2 \right]
\end{equation}
after applying the change of measure $\d\pi=Z_\pi^{-1}\ell\d\mu$.
Re-arranging the inequality and recalling $Z_{\alpha,r} = \mu_r[(\mu_{\perp | r}(\ell^\alpha))^\frac{1}{\alpha}]$ obtains
\begin{equation}
\label{eq:norm_lower_bound_refined}
\frac{Z_{\alpha,r}}{Z_\pi} \geq \lb 1 - C^\sub_{1/\alpha}(\mu) \lb 1 - \alpha \rb^2 \mathbb{E}_\pi [\| U_\perp^\trans \nabla_x \ln \ell \|^2_2 ] \rb^\frac{1}{2-2\alpha} .
\end{equation}
By~\eqref{eq:optdivergence}, we obtain the upper bound \eqref{eq:alphaCDRimp} on $D_\alpha(\pi||\pi^\opt_{\alpha,r})$ for $\alpha \in [\nicefrac{1}{2}, 1]$.  The bound for $\alpha \in (0, \nicefrac{1}{2}]$ is derived identically, except using the bound $C_{1/\alpha}^\sub(\mu) < \frac{1}{2\alpha}C_2^\sub(\mu)$ for the $\nicefrac{1}{\alpha}$-Sobolev constant.
\end{proof}

\section{Comparison of $\mcJ_\alpha$ and $\mcJ_\alpha^\flat$}
\label{sec:comparison}

For $\nicefrac{1}{2} \leq \alpha < 1$ we have the majorized loss function
\begin{equation*}
\mcJ_\alpha(u) = \frac{1}{\alpha(\alpha-1)} \left[ \lb 1 - \frac{(1-\alpha)}{2}u \rb_+^\alpha - 1 \right] .\end{equation*}
Its derivatives are given by
\begin{align*}
\mcJ_\alpha'(u) &= \frac{1}{2} \lb 1 - \frac{(1-\alpha)}{2} u \rb_+^{\alpha-1}, \\
\mcJ_\alpha^{''}(u) &= \frac{1}{4}(1-\alpha)^2 \lb 1 - \frac{(1-\alpha)}{2} u \rb_+^{\alpha-2} ,
\end{align*}
for all $u\neq 2/(1-\alpha)$.
Evidently, $\mcJ_\alpha$ is monotone non-decreasing on $\R_{\geq 0}$ and it is convex on the intervals $[0,2/(1-\alpha) )$ and $( 2/(1-\alpha), +\infty)$, but not on the whole line $\R_{\geq0}$.

We recall that the improved majorized loss function  is given by
\begin{equation*}
\mcJ_\alpha^\flat(u) = \frac{1}{\alpha(\alpha-1)} \left[ \lb 1 - (1-\alpha)^2 u \rb_+^{\frac{\alpha}{2(1-\alpha)}} - 1 \right]
\end{equation*}
for $\nicefrac{1}{2} \leq \alpha \leq 1$.
Its derivatives are given by
\begin{align*}
 (\mcJ_\alpha^\flat)'(u) &= \frac{1}{2} \lb 1 - (1-\alpha)^2 u\rb_+^{\frac{\alpha}{2(1-\alpha)} -1}, \\
 (\mcJ_\alpha^\flat)''(u) &=  \frac{1}{4} (3\alpha-2)(\alpha-1)\lb 1 - (1-\alpha)^2 u\rb_+^{\frac{\alpha}{2(1-\alpha)}-2},
\end{align*}
for all $u\neq (1-\alpha)^{-2}$. Therefore, $\mcJ_\alpha^\flat$ is also monotone non-decreasing  on $\R_{\geq0}$ and, if $\alpha \geq \nicefrac{2}{3}$, then $\mcJ_\alpha^\flat$ is \emph{concave} on the whole $\R_{\geq0}$ (this can be proven by showing that $\mcJ^\beta(u)$ is always bellow its tangents).

We provide a comparison lemma between these majorization functions; see also Figure~\ref{fig:comparison2} for a visual confirmation of the lemma.
\begin{lemma} For all $\nicefrac{1}{2} \leq \alpha < 1$ we have $\mcJ^\flat_\alpha(u) \leq \mcJ_\alpha(u)$ for all $u \geq 0$. Conversely, for all $0 < \alpha < \nicefrac{1}{2}$ it holds that $\mcJ_\alpha(u) \leq \mcJ^\flat_\alpha(u)$ for all $u \geq 0$.
\end{lemma}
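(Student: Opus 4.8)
The plan is to collapse the whole comparison onto one elementary Bernoulli-type inequality. First I would note that both $\mcJ_\alpha(u)$ and $\mcJ_\alpha^\flat(u)$ have the form $\tfrac{1}{\alpha(\alpha-1)}\,(A-1)$, where $A$ denotes the relevant truncated power appearing in \eqref{eq:defJalpha}, \eqref{eq:Jext}, \eqref{eq:Jflat}, and that $\alpha(\alpha-1)<0$ throughout $\alpha\in(0,1)$. Hence $\mcJ_\alpha^\flat(u)\le\mcJ_\alpha(u)$ is \emph{equivalent} to $A^\flat\ge A$, and $\mcJ_\alpha(u)\le\mcJ_\alpha^\flat(u)$ is equivalent to $A\ge A^\flat$; in either regime the claim becomes a comparison of two nonnegative numbers.

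Next I would put $a=1-\alpha$ and, separately in the two regimes, apply the affine rescaling of $u$ that turns the base of $A$ into $(1-\tau)_+$ for a fresh nonnegative variable $\tau$ — concretely $\tau=\tfrac{a}{2}u$ when $\alpha\in(\nicefrac12,1)$ and $\tau=\tfrac{a}{4(1-a)}u$ when $\alpha\in(0,\nicefrac12)$. Under this rescaling one checks that the base of $A^\flat$ becomes exactly $(1-2a\tau)_+$. Since $x\mapsto x^{c}$ is increasing on $[0,\infty)$ for every $c>0$, I would raise both $A$ and $A^\flat$ to the positive power $\tfrac{2a}{1-a}$; this collapses the two exponents (that of $A$, namely $\alpha=1-a$, becomes $2a$, and that of $A^\flat$, namely $\tfrac{\alpha}{2(1-\alpha)}=\tfrac{1-a}{2a}$, becomes $1$). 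Writing $p:=2a=2(1-\alpha)$, both the $\alpha\in[\nicefrac12,1)$ case and the $\alpha\in(0,\nicefrac12)$ case thereby reduce to the single statement
\[
(1-\tau)_+^{\,p}\ \ge\ (1-p\tau)_+\ \text{ when }p\ge1,\qquad (1-\tau)_+^{\,p}\ \le\ (1-p\tau)_+\ \text{ when }0<p\le1,
\]
for all $\tau\ge0$, noting that $p\ge1\iff\alpha\le\nicefrac12$ and $p\le1\iff\alpha\ge\nicefrac12$.

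Finally I would prove this last display by Bernoulli's inequality plus a one-line case split on $\tau$. For $\tau\le1$ the base $(1-\tau)_+$ equals $1-\tau\ge0$, and convexity of $x\mapsto(1+x)^p$ for $p\ge1$ (respectively concavity for $p\le1$) gives $(1-\tau)^p\ge1-p\tau$ (respectively $(1-\tau)^p\le1-p\tau$); combining with $(1-\tau)^p\ge0$ when $p\ge1$, and with the fact that $p\tau\le1$ forces $(1-p\tau)_+=1-p\tau$ when $p\le1$, yields the stated comparison against the truncated right-hand side. For $\tau>1$ the left-hand side is $0$, and one only needs $p\tau\ge1$, which holds automatically when $p\ge1$. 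Undoing the rescaling then gives the lemma; the endpoint $\alpha=\nicefrac12$ (where $p=1$) produces equality, matching the remark that $\mcJ_{1/2}$ and $\mcJ_{1/2}^\flat$ coincide.

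The only delicate point I anticipate is bookkeeping rather than mathematics: tracking the inequality reversal induced by the negative prefactor $\tfrac{1}{\alpha(\alpha-1)}$, verifying that the affine rescaling really does send the arguments of $A$ and $A^\flat$ to $(1-\tau)_+$ and $(1-2a\tau)_+$, and handling the positive parts correctly (in particular that $2a\tau<1$ whenever $\tau\le1$ and $\alpha>\nicefrac12$, so that no truncation is active on the dominated side). Each of these is routine, but an arithmetic slip there would be the most likely source of error.
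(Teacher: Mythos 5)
Your proof is correct and rests on the same core argument as the paper's, namely Bernoulli's inequality applied after removing the common negative prefactor $\tfrac{1}{\alpha(\alpha-1)}$ and raising both sides to a positive power. The only cosmetic difference is that you rescale to a normalized variable $\tau$ and raise to the exponent $\tfrac{2(1-\alpha)}{\alpha}$, which collapses both regimes into a single comparison $(1-\tau)_+^p$ versus $(1-p\tau)_+$ with $p=2(1-\alpha)$, whereas the paper treats the two regimes $\alpha \gtrless \nicefrac{1}{2}$ separately (raising to the power $\nicefrac{1}{\alpha}$ instead and invoking monotonicity plus the ordering of the two saturation thresholds $u_\mathrm{crit}$, $u^\flat_\mathrm{crit}$ to discard the positive parts).
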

\begin{proof}
First consider $\alpha  \in [\nicefrac{1}{2}, 1)$ and let $u_\textrm{crit}^\flat = \frac{1}{(1-\alpha)^2}$ and $u_\textrm{crit} = \frac{2}{1-\alpha}$ denote the values after which $\mcJ^\flat(u)$ and $\mcJ(u)$ saturate to the limit $|\alpha(\alpha-1)|^{-1}$, respectively. Since $u_\textrm{crit} < u^\flat_\textrm{crit}$ for $\alpha \geq \nicefrac{1}{2}$, by monotonicity it suffices to consider $0 < u < u_\textrm{crit}$. Over this interval, we note that $\mcJ_\alpha^\flat \leq \mcJ_\alpha$ is equivalent to
\[
(1 - (1-\alpha)^2 u)^\frac{1}{2(1-\alpha)} \geq  1 - \frac{1-\alpha}{2}u .
\]
But this follows immediately from Bernoulli's inequality, since $\frac{1}{2(1-\alpha)} \geq 1$.

Now consider $\alpha \in (0, \nicefrac{1}{2})$ and analogously define $v^\flat_\text{crit} = \frac{2\alpha}{(1-\alpha)^2}$ and $v_\text{crit} = \frac{4\alpha}{1-\alpha}$. Since $v^\flat_\text{crit} < v_\text{crit}$, by monotonicity it suffices to show that $\mcJ_\alpha \leq \mcJ_\alpha^\flat$ over $0 < u < v^\flat_\text{crit}$. This is equivalent to demonstrating
\[
\lb 1 - \frac{(1-\alpha)^2}{2\alpha}u \rb^\frac{1}{2(1-\alpha)} \leq 1 - \frac{1-\alpha}{4\alpha}u,
\]
which again follows from Bernoulli's inequality as we have $\frac{1}{2(1-\alpha)} < 1$.
\end{proof}

\begin{figure}[h!]
\label{fig:Jalpha_vs_J_02}
\begin{centering}
\includegraphics[width=\textwidth]{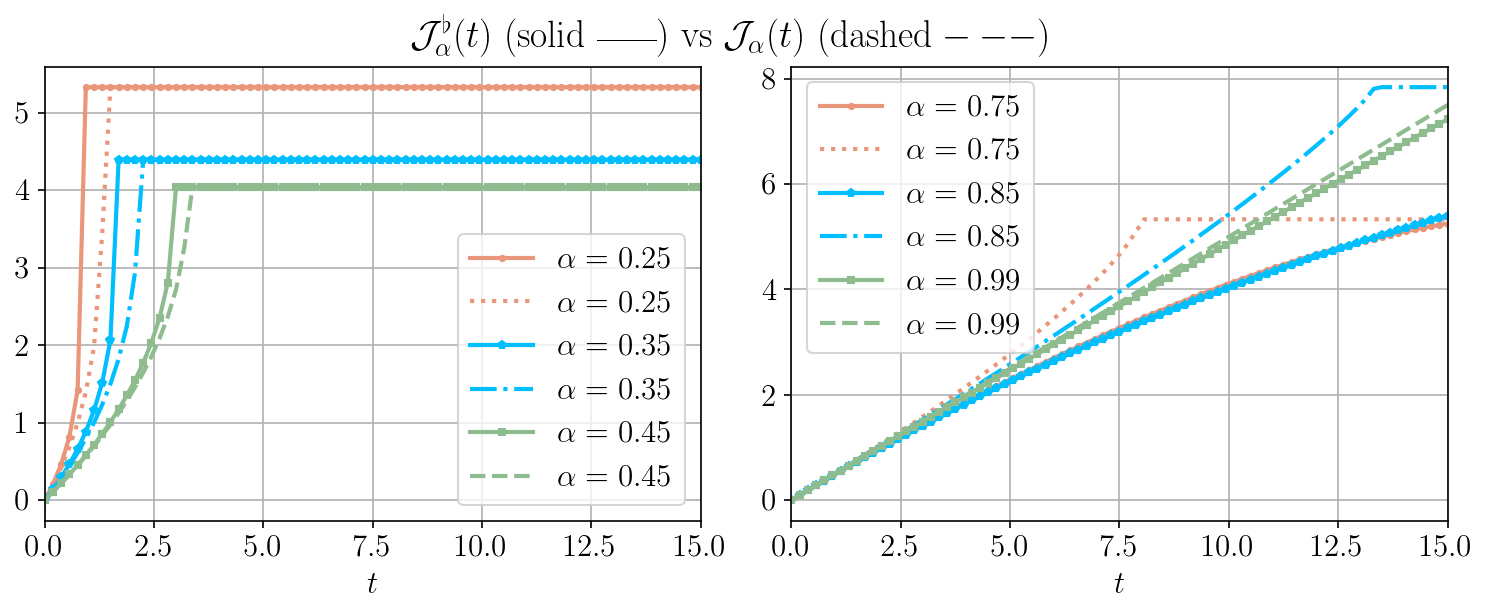}
\end{centering}
\caption{ \label{fig:comparison2} Comparison between the improved majorized loss function  $\mcJ^\flat_\alpha(t)$ and the majorized loss function $\mcJ_\alpha(t)$ for (left) several choices of $\alpha < \nicefrac{1}{2}$, and (right) several choices of $\alpha > \nicefrac{1}{2}$.
}
\end{figure}

\section{Analytical results for linear-Gaussian inverse problems}
\label{sec:lingaussian}

We consider an example problem for which~$\pi^\opt_{\alpha,r}$ (i.e., $\ell^\opt_{\alpha,r}$), the diagnostic matrix~$H$, and the divergence ~$D_\alpha(\pi \,|| \pi^\opt_{\alpha,r})$ can be computed analytically. Let $X$ be distributed according to prior $\mu = \mcN(0, I_d)$, and consider a conditional likelihood distribution $Y|X=x$ which is multivariate Gaussian with mean $Ax$ and covariance $I_m$, for some matrix $A \in \R^{m \times d}$ and $m < d$. We then have the log likelihood function
\[
x \mapsto \ln \ell(x) = -\frac{1}{2} \| y - Ax \|_2^2.
\]
For ease of calculation assume the realization $y=0$ for the data. Then, the posterior measure has Lebesgue density
\[
\d\pi(x) \propto \exp\lb -\frac{1}{2} x^\trans (I_d + A^\trans A) x \rb \,\d x,
\]
i.e., it is multivariate Gaussian with zero mean and covariance $\Sigma = (I_d + A^\trans A)^{-1}$.

For this example the diagnostic matrix is given by
\[
H = \E_\pi[ \nabla \ln \ell \,\nabla \ln \ell^\trans] = A^\trans A (I_d + A^\trans A)^{-1} A^\trans A.
\]
Let $\gamma_k$ denote the $k$-th eigenvalue of $A^\trans A$. Then, by functional calculus we know that $\lambda_k = \frac{\gamma_k^2}{1+\gamma_k}$ is the $k$-th eigenvalue of $H$; as will be convenient later on, we have equivalently that
\begin{equation}
\label{eq:gammalbd}
\gamma_k = \frac{1}{2}(\lambda_k + \sqrt{\lambda_k}\sqrt{\lambda_k+4}).
\end{equation}
Furthermore, both $A$ and $H$ share the same eigenvectors; we let $U_r \in \R^{d \times r}$ denote the matrix whose columns correspond to the first $r$ eigenvectors.

For this choice of $U_r$ we have the decomposition $x = U_r\theta_r + U_\perp\theta_\perp$, where $U_\perp$ is any orthogonal completion to $U_r$. Furthermore, the measure $\mu_{\perp | r}$ is simply a multivariate standard Gaussian in $d-r$ dimensions. Consider $\alpha \neq 0$. Then, according to Theorem~2.1 in the main text we have
\begin{align*}
\ell^\opt_{\alpha,r}&(\theta_r) = \E_{X \sim \mu}[ \ell(X)^\alpha \mid  U_r^\trans X = \theta_r]^{\frac{1}{\alpha}}
\simeq \exp\lb -\frac{1}{2} \| AU_r\theta_r\|_2^2 \rb
\end{align*}
(Note that $\simeq$ indicates hidden multiplicative constants that may vary line-to-line. In order to apply the result in the theorem it would be critical to keep track of these quantities. However, for the two divergences we consider below we use a more convenient expression for $D_\alpha(\pi \,||\, \pi^\opt_{\alpha,r})$.)
This shows that for all $\alpha \neq 0$ we have
\[
\ell^\opt_{\alpha,r}(x) \simeq \exp\lb -\frac{1}{2} x^\trans U_rU_r^\trans A^\trans A U_rU_r^\trans x \rb = \exp \lb -\frac{1}{2} x^\trans U_r \begin{pmatrix} \Gamma_r &  \\  & 0_{d-r} \end{pmatrix} U_r^\trans x\rb ,
\]
where $\Gamma_r = \textrm{diag}(\gamma_1, \ldots, \gamma_r)$.  Since this is in the form of an exponential, this implies that $\pi^\opt_{\alpha,r}$ is multivariate Gaussian with zero mean and covariance
\[
\Sigma^\opt_{\alpha,r} = U \begin{pmatrix} (I_r + \Gamma_r)^{-1} & \\ & I_{d-r} \end{pmatrix} U^\trans,
\]
where $U = \begin{bmatrix} U_r & U_\perp \end{bmatrix}$.\\

\noindent\textbf{Kullback-Leiber Divergence.}
The computations for the Kullback-Leiber divergence can also be found in \cite[\S 2.3]{ZCLSM22}; for completeness we include them here as well. Since both $\pi$ and $\pi^\opt_{\alpha,r}$ are multivariate Gaussian, we have that
\[
D_\textsc{KL}(\pi\,||\,\pi^\opt_{\alpha,r}) = \frac{1}{2} (\textrm{trace}( (\Sigma^\opt_{\alpha,r})^{-1} \Sigma) - \ln \det( (\Sigma^\opt_{\alpha,r})^{-1} \Sigma) ) - d).
\]
Since we have the explicit eigenvalue decomposition of both covariance matrices, we can equivalently express this as
\[
D_\textsc{KL}(\pi\,||\,\pi^\opt_{\alpha,r}) = \frac{1}{2} \sum_{k > r}^d \lb \ln (1+\gamma_k) - \frac{\gamma_k}{1+\gamma_k}\rb,
\]
which can be related to the diagnostic matrix using \eqref{eq:gammalbd}.\\

\noindent\textbf{Squared Hellinger distance.}
Similarly, the squared Hellinger distance between two multivariate Gaussians is known in closed form. For our example, this yields
\[
d^2_\textsc{Hell}(\pi, \pi^\opt_{\alpha,r}) = 1 - \det(\Sigma^\opt_{\alpha,r})^{\frac{1}{4}} \det(\Sigma)^{\frac{1}{4} }
\det\lb \frac{1}{2}(\Sigma^\opt_{\alpha,r} + \Sigma) \rb^{-\frac{1}{2}}.
\]
These computations simplify using the eigenvalue decomposition of both covariance matrices, and we obtain
\[
d^2_\textsc{Hell}(\pi, \pi^\opt_{\alpha,r}) = 1- \lb \prod_{k > r} \frac{1+\gamma_k}{(1+\frac{1}{2}\gamma_k)^2} \rb^\frac{1}{4},
\]
which can be related to the diagnostic matrix using \eqref{eq:gammalbd}.

\end{appendix}


\end{document}